\numberwithin{equation}{section}
\definecolor{labelkey}{gray}{.65}
\title[Baryogenesis in Minkowski Spacetime]{Baryogenesis in Minkowski Spacetime}
\author[F.\ Finster]{Felix Finster}
\address{Fakult\"at f\"ur Mathematik \\ Universit\"at Regensburg \\ D-93040 Regensburg \\ Germany}
\email{finster@ur.de}
\author[M. van den Beld-Serrano]{Marco van den Beld-Serrano \\ \\ August / November 2024}
\address{Fakult\"at f\"ur Mathematik \\ Universit\"at Regensburg \\ D-93040 Regensburg \\ Germany}
\email{marco.van-den-beld-serrano@ur.de}
\newtheorem{Def}{Definition}[section]
\newtheorem{Thm}[Def]{Theorem}
\newtheorem{Lemma}[Def]{Lemma}
\newtheorem{Remark}[Def]{Remark}
\newtheorem{MotEx}[Def]{Motivating Example}
\newtheorem{Example}[Def]{Example}
\newcommand{\Thanks}{\vspace*{.5em} \noindent \thanks}
\newcommand{\beq}{\begin{equation}}
\newcommand{\eeq}{\end{equation}}
\newcommand{\Proof}{\begin{proof}}
\newcommand{\QED}{\end{proof} \noindent}
\newcommand{\QEDrem}{\ \hfill~$\Diamond$}
\newcommand{\la}{\langle}
\newcommand{\ra}{\rangle}
\newcommand{\Sl}{\mbox{$\prec \!\!$ \nolinebreak}}
\newcommand{\Sr}{\mbox{\nolinebreak$\succ$}}
\newcommand{\C}{\mathbb{C}}
\newcommand{\R}{\mathbb{R}}
\newcommand{\N}{\mathbb{N}}
\renewcommand{\H}{\mathscr{H}}
\newcommand{\U}{{\rm{U}}}
\newcommand{\SO}{{\rm{SO}}}
\newcommand{\bep}{\begin{pmatrix}}
\newcommand{\enp}{\end{pmatrix}}
\renewcommand{\O}{\mathscr{O}}
\renewcommand{\O}{{\mathscr{O}}}
\newcommand{\scrL}{\myscr L}
\newcommand{\itemD}{\item[{\raisebox{0.125em}{\tiny $\blacktriangleright$}}]}
\DeclareFontFamily{OT1}{rsfso}{}
\DeclareFontShape{OT1}{rsfso}{m}{n}{ <-7> rsfso5 <7-10> rsfso7 <10-> rsfso10}{}
\DeclareMathAlphabet{\myscr}{OT1}{rsfso}{m}{n}
\renewcommand{\Tr}{\text{\rm{Tr}}}
\DeclareMathOperator{\tr}{tr}
\newcommand{\bitem}{\begin{itemize}[leftmargin=2em]}
\newcommand{\eitem}{\end{itemize}}
\renewcommand{\sc}{\text{\rm{sc}}}
\DeclareMathOperator*{\slim}{s-lim}
\newcommand{\coloneqq}{:=}
\newcommand{\gammag}{\gamma_{g}}
\newcommand{\Reg}{\mathfrak{R}_{\varepsilon}}
\newcommand{\Cintegralwithn}{\frac{\xi^{k}_{p}(n)}{g(\xi_{p}(n),\nu)}\pi(\nu^{\perp})^{j}_{k}g(\xi_{p}(n),\partial_{j}u_{p})}
\newcommand{\Cintegral}{\frac{\xi^{k}_{p}}{g(\xi_{p},\nu)}\pi(\nu^{\perp})^{j}_{k}g(\xi_{p},\partial_{j}u_{p})}
\newcommand{\Cintegralwithoutp}{{\frac{\xi^{k}}{g(\xi,\nu)}\pi(\nu^{\perp})^{j}_{k}g(\xi,\partial_{j}u)}}
\newcommand{\Constantone}{\pi(\nu^{\perp})^{j}_{k}g_{mn}}
\newcommand{\volDpL}{\frac{1}{\mu_{p}(D_{p}\scrL)}}
\newcommand{\intDpL}{\int_{D_{p}\scrL}}
\newcommand{\Pkernel}{P^{\varepsilon}}
\begin{document}
\begin{abstract}
Based on a mechanism originally suggested for causal fermion systems, the present paper paves the way for a rigorous treatment of baryogenesis in the language of differential geometry and global analysis. Moreover, a formula for the rate of baryogenesis in Minkowski spacetime is derived.
\end{abstract}
\maketitle
\tableofcontents

\section{Introduction}
The Dirac operator is the central object of spin geometry. In the Lorentzian setting,
the Dirac equation describes the dynamics of spinor fields in spacetime.
Physically, spinor fields model matter on the quantum level.
A particular effect described by the Dirac equation is a process of pair creation, in which
both a particle and a corresponding anti-particle are created.
In this process, particles and anti-particles can be
created only in pairs (pair creation), but not individually.
Therefore, the Dirac equation cannot explain the asymmetry between matter and anti-matter
as observed in the Universe.
In order to explain this asymmetry, most physicists believe that
the Universe went through a process of matter creation during an early stage of its history. This process
is referred to as {\em{baryogenesis}}. Explaining if and why baryogenesis occurs is one of the main open questions in modern physics. 

Various mechanisms of {\em{baryogenesis}} have been proposed
(for a survey see for example~\cite{riotto} or~\cite{cline2006baryogenesis}).
Here we focus on the recent mechanism suggested in~\cite{baryogenesis},
which is based on a specific modification of the Dirac dynamics.
One goal of the present paper is to formulate this modified Dirac dynamics
and the resulting mechanism of baryogenesis in the language of differential geometry and global analysis,
thereby laying the foundations for a detailed mathematical treatment.
With this in mind, the first part of our paper (Sections~\ref{sec:preliminaries}--\ref{sec:dynamics_reg})
provides a self-contained introduction for mathematicians
working in Lorentzian and/or spin geometry.
Another goal of this paper is to quantify the effect of the mechanism in Minkowski spacetime. More specifically, 
in the second part of the paper (Section~\ref{sec:rate_of_baryo_Minkowski}) we show how to compute a rate of baryogenesis for Minkowski spacetime starting from the abstract mathematical framework developed in the first part. One of the motivations for this derivation is to set the stage for the study
of more general spacetimes (e.g.\ FLRW or inflationary spacetimes). This is the first of a series of papers where the baryogenesis mechanism will be worked out in detail for different cosmologically relevant spacetimes.

The modification of the Dirac dynamics proposed in~\cite{baryogenesis} stems from the
theory of causal fermion systems. Taking results from this theory as external input, we here
formulate baryogenesis in the familiar setting of Lorentzian spin geometry.
More precisely, we take the following input from the causal fermion systems theory:
\bitem
\itemD The Dirac equation should not hold on all length scales, but only down to a ``minimal length scale''
$\varepsilon$ (which can be thought of as the Planck length~$\approx 10^{-34}$ meters).
This is implemented mathematically by {\em{regularizing}} all spinor fields, in the simplest case by
mollifying them on the scale~$\varepsilon$.
\itemD The regularization is described by a globally defined timelike vector field~$u \in \Gamma(M, TM)$, the {\em{regularizing
vector field}}. This vector field satisfies dynamical equations which are related to the null geodesic flow
(the {\em{locally rigid dynamics of the regularizing vector field}}).
\itemD The dynamics of the spinor fields is obtained from the Dirac dynamics by
successive projections involving the regularizing vector field (so-called {\em{adiabatic projections}}).
The resulting dynamics, referred to as the {\em{locally rigid dynamics of the spinor fields}},
deviates slightly from the Dirac dynamics, giving rise to baryogenesis.
\eitem

More specifically, baryogenesis is described mathematically as follows.
Given a foliation~$(N_{t})_{t\in\mathbb{R}}$ and prescribing an initial timelike vector field~$u$ on a given hypersurface~$N_{t_{0}}$, the dynamics of this vector field is obtained from the null geodesic flow
forming integrals over the light cone (see Definition~\ref{def:locally_rigid_dynamics}
and the illustration in Figure~\ref{fig1}). The dynamics of the spinor fields is obtained
by modifying the Dirac dynamics by successive projections to the spectral subspace of
an operator~$A_t$ (cf.\ Definition~\ref{def:locally_rigid_dynamics_spinors}). The operator~$A_t$
is an essentially self-adjoint operator (see Lemma~\ref{lem:essen_sa}) which involves the regularizing vector field. It has the form
\begin{equation*}
A_{t} \coloneqq \frac{1}{4}\{u^{t},H_{g}+H_{g}^{\ast}\}+\frac{i}{4}\{u^{\mu},\nabla^{s}_{\mu}-(\nabla^{s}_{\mu})^{\ast} \} \:,
\end{equation*}
where~$H_g$ is the Hamiltonian and~$\nabla^{s}_{\mu}$ denotes the spin connection. The operator~$A_{t}$ is, arguably, the simplest symmetrized version of the Dirac Hamiltonian~$H_{g}$ which in addition allows the spinor fields to evolve according to a more general dynamics (i.e.\ not necessarily Dirac). The different possible spinor dynamics are described by the timelike vector field~$u$ (e.g.\ the Dirac dynamics corresponds to a vector field~$u$ which is normal at every point to the leaves of the chosen foliation~$(N_{t})_{t\in\mathbb{R}}$, see Remark~\ref{rem:specific_At}). We call~$A_{t}$ the \emph{symmetrized Hamiltonian}. Note that the family~$(A_{t})_{t\in\mathbb{R}}$ depends by construction on the choice of the time function~$t$.
In this setting, baryogenesis is described
by a relative change between suitably chosen subspaces of spinors. The choice of these subspaces is not obvious and will be introduced in Definition~\ref{def:regularized_spinor_space} (see also the motivating Remark~\ref{rem:spinor_spaces} and the motivating Example~\ref{rem:Baryo_Minkowski}).

Our main result is Theorem~\ref{theo:baryo_rate_Minkowski} which gives a simple formula for the rate of baryogenesis in Minkowski spacetime. The setting is the following. We begin with a foliation~$(N_{t})_{t\in\mathbb{R}}$ given by the level sets of the global time function~$t$ and a timelike vector~$u$ which initially is only defined on a Cauchy hypersurface~$N_{t_{0}}$ (with unit normal~$\nu$) by
\begin{equation*}
    u_{p}=(1+\lambda\Tilde{f}_{p})\nu+\lambda X_{p} \qquad \text{for all~$p\in N_{t_{0}}$}\:,
\end{equation*}
where~$\Tilde{f}$ is a smooth positive function taking values on~$N_{t_{0}}$, $X$ is a spacelike vector field and~$\lambda>0$. In practice, in order to guarantee that the coefficients of~$\frac{dA_{t}}{dt}$ are compactly supported in~$N_{t}$, we will afterward assume that~$u$ only has this specific form inside a compact subset~$V\subset N_{t_{0}}$, whereas on~$N_{t_{0}}\setminus V$ it simply agrees with the unit normal~$\nu$. Assuming the vector field~$u$ follows the locally rigid dynamics, it can be extended to the whole spacetime and an approximate (to first order in~$\lambda$) locally rigid dynamical equation for~$u$ is derived (cf.\ Lemma~\ref{lem:dyn_eq}):
\begin{equation*}
         \frac{du_{p}}{dt}=-\textrm{grad}_{\delta}(f_{p}^{-1})+\frac{\lambda}{f_{p}^{3}}\left(\frac{f_{p}}{3}\textrm{div}_{\delta}\left(X_{p}\right)+4X_{p}(f_{p})\right)\nu+\mathcal{O}(\lambda^{2})
\end{equation*}
where~$f\coloneqq 1+\lambda \Tilde{f}$ and ~$\delta$ denotes the Euclidean metric in~$\mathbb{R}^{3}$. In order to use this evolution equation to determine the rate of baryogenesis, several more technical results will be required: we will give an explicit expression for the functional calculus associated to~$H_{\eta}$ (Lemma \ref{lem:spectral_measure}), show that certain product operators (involving the spectral measure of the Dirac Hamiltonian~$H_{\eta}$ and the operator~$\frac{dA_{t}}{dt}$) are trace class (Lemma~\ref{lem:trace_class}) and we will prove a useful distributional equation (Lemma~\ref{lem:sokhotski}). With these results at our disposal and under the assumption that~$A_{t}$ has an absolutely continuous spectrum, we will derive an expression for the rate of baryogenesis due to the locally rigid evolution of the spinors to linear and quadratic order in~$\lambda$ (Theorem \ref{theo:baryo_rate_Minkowski}). More importantly, we will show that the rate of baryogenesis vanishes to first order in~$\lambda$ whereas in general it is nonzero to second order in~$\lambda$.

Note that our result does not contradict the fact that, in the current Universe, an increase in the asymmetry between matter and antimatter is not detected. The key point is that the value of the rate of baryogenesis~$B_{t}$ in Minkowski spacetime depends crucially on the initially prescribed regularizing vector field~$u|_{N_{t_{0}}}$ and that this vector field is to be understood as part of the physical data describing a specific physical situation. If the regularizing vector field is constant (or almost constant) on a given Cauchy hypersurface~$N_{t_{0}}$ (e.g.\ imposing the natural choice~$u=\partial_{t}$), then the rate of baryogenesis vanishes (or is very small), cf.\ 
the short discussion after Theorem~\ref{theo:baryo_rate_Minkowski}.


The paper is structured as follows. In Section~\ref{sec:preliminaries} we recall the basic spin-geometric concepts that will play an important role in our study. In Section~\ref{sec:Dirac_dynamics} we focus on the usual setting in spin geometry, namely spinor fields evolving according to the Dirac dynamics in a general spacetime. The importance of this section is to lay the foundations for a generalization of the main mathematical concepts and considered spinor spaces of the Dirac setting to a wider class of possible dynamics. This extension from the Dirac dynamics to a more general dynamics takes place in Sections~\ref{sec:general_dynamics} and~\ref{sec:math_baryo}. Moreover, in Section~\ref{sec:math_baryo} the main mathematical concepts in order to study and quantify baryogenesis are introduced. The particular dynamics that we will consider, the locally rigid dynamics, is presented in Section~\ref{sec:dynamics_reg} and is obtained through adiabatic projections. In Section~\ref{sec:rate_of_baryo_Minkowski}, we apply the mathematical framework we have constructed to Minkowski spacetime. In particular, after some tedious computations, a simple formula for the rate of baryogenesis to first and second order in $\lambda$ will be derived. Finally, our study leaves open some interesting questions which will be discussed in Section~\ref{sec:discussion}. Appendix~\ref{sec:computations} provides the detailed computations of some of the
 formulas needed in the proof of Theorem~\ref{theo:baryo_rate_Minkowski}.
 
\section{Preliminaries on globally hyperbolic spin geometry}\label{sec:preliminaries}
In this paper, all spacetimes~$(M,g)$ are considered to be four-dimensional, smooth, oriented, time oriented and globally hyperbolic. We denote by~$t$ a global time function and the associated global smooth foliation by~$(N_{t})_{t\in \mathbb{R}}$. We use the convention~$(+,-,-,-)$ for the signature of the Lorentzian metric~$g$. Furthermore, small Latin indices~$j,k, \ldots$ (except 'i', which is reserved for the imaginary unit) run from~$0$ to~$3$ and denote spacetime coordinate indices, whereas small Greek indices label the spatial coordinates and run from~$1$ to~$3$. Moreover, whenever a foliation~$(N_{t})_{t\in\mathbb{R}}$ is fixed in the spacetime~$(M,g)$ and a mathematical object is given in local coordinates, we always choose coordinates~$(x^{j})_{j=0,\ldots, 3}$ such that~$x^0 = t$ coincides with the time function. Finally, as is customary, the Einstein summation convention is used throughout the paper.

We denote {\em{Clifford multiplication}} in~$(M,g)$ by~$\gamma_{g} : TM \otimes SM \rightarrow SM$ and, given an orthonormal basis~$(e_{j})_{j=0,\ldots,3}$, we denote~$\gamma_{gj}\coloneqq\gammag(e_{j})$. 
 A globally hyperbolic four-dimensional Lorentzian manifold satisfies the topological spin condition
(for details see~\cite{baum, lawson+michelsohn}). Haven chosen a spin structure,
we denote the spinor bundle by~$SM$. The associated fiber~$S_pM\cong\mathbb{C}^{4}$ at a spacetime point~$p \in M$ is a
four-dimensional complex vector space
endowed with an indefinite inner product of signature~$(2,2)$.
Clifford multiplication is symmetric with respect to the inner product~$\Sl .|. \Sr_{S_pM}$\footnote{For a manifold with a metric signature~$(p,q)$ an analogous indefinite inner product can be constructed. See~\cite[Proposition 1.1.20 and Corollary 1.1.21]{treude} for the proof and the symmetry properties of the Clifford multiplication associated to this inner product.}. We refer to~$S_pM$ as the {\em{spinor space}} and~$\Sl .|. \Sr_{S_pM}$ as the {\em{spin inner product}}.
Sections in the spinor bundle are called {\em{spinor fields}}.
The metric induces on the spinor bundle a unique metric {\em{spin connection}}~$\nabla^{s}$
(also referred to as the Levi-Civita spin connection). In local coordinates, the spin connection can be written as (cf.~\cite[Section~4]{intro} or~\cite{u22}):
\begin{equation}\label{eq:spin_coeffs}
    \nabla^{s}_{j}=\partial_{j}-iE_{j}-ia_{j} \:,
\end{equation}
where~$E_{j},a_{j}$ are linear operators on the spinor space. In the physics literature, the~$E_{j}$ are referred to as the \emph{spin coefficients}.

Finally, the {\em{Dirac operator}}~$D_g : C^{\infty}(M,SM)\rightarrow C^{\infty}(M,SM)$ is given in local coordinates by
\[ D_{g}=i\gamma_{g}^j \nabla^{s}_j \:. \]
For smooth spinor fields (i.e.\ smooth sections of the spinor bundle) the {\em{Dirac equation}} of mass~$m\in [0,\infty)$ reads
\[  (D_g-m)\psi=0 \:. \]
Consider a foliation~$(N_{t})_{t\in \mathbb{R}}$ of the globally hyperbolic spacetime~$(M,g)$ and the space~$C^{\infty}_{\textrm{sc}}(M,SM)$ of smooth spinor fields with spatially compact support (i.e.\ for any~$N_{t'}\in (N_{t})_{t\in \mathbb{R}}$ and~$\psi \in C^{\infty}_{\textrm{sc}}(M,SM)$,
the support~$\textrm{supp}(\psi|_{N_{t'}})\subset SM$ is compact). Then, for any Cauchy hypersurface~$N_{t}$ with future directed normal~$\nu$ and for all~$\psi,\phi\in C^{\infty}_{\textrm{sc}}(N_{t},SM)$ that are solutions to the Dirac equation of mass~$m$, we define the scalar product~$(\psi|\phi)_{t}$ at a time~$t$ as follows:
\begin{equation}\label{eq:innerprod_hypersurface}
    (\psi|\phi)_{t}\coloneqq \int_{N_{t}} \Sl\psi|\gammag(\nu)\phi\Sr_{S_{p}M} d\mu_{N_{t}} \:,
\end{equation}
where~$d\mu_{N_{t}}$ is the Riemannian volume measure on~$N_{t}$ (induced by the Lorentzian volume measure~$d\mu_{M}$ in~$(M,g)$). It can be proven that, whenever the considered spinor fields solve the Dirac equation, this inner product is actually independent of the considered Cauchy hypersurface (see~\cite[equation (2.6)]{finite} or~\cite[Corollaries~2.1.3 and~2.1.4]{treude}) due to current conservation\footnote{By this we mean that~$(\psi|\gamma_{g}^{j}\varphi)$ is divergence-free if~$\psi,\varphi$ satisfy the Dirac equation. Then, integrating and applying the Gauss divergence theorem it directly follows that~$(\psi|\varphi)_{t_{0}}=(\psi|\varphi)_{t_{1}}$ for any~$t_{0},t_{1}\in\mathbb{R}$.}. In this paper, unless stated otherwise, whenever an adjoint operator is introduced it is understood that it is the adjoint with respect to the scalar product~$(.|.)_{t}$ given by~\eqref{eq:innerprod_hypersurface}.

\section{The Dirac dynamics and its regularization}\label{sec:Dirac_dynamics}

In this section we will focus on sections of the spinor bundle which follow the Dirac dynamics. We start by introducing some of the spinor spaces that will be relevant in this paper.

\begin{Def}\label{def:spinor_spaces}
    Let~$(M,g)$ be a globally hyperbolic spacetime with spinor bundle~$SM$ and consider a foliation~$(N_{t})_{t\in\mathbb{R}}$. Then we define the following relevant spinor spaces:
    \bitem
        \item[{\rm{(i)}}] Taking initial data in~$C^\infty_0(N_t, SM)$, we get smooth Dirac solutions with spatially compact support~$\psi \in C^\infty_\sc(M, SM)$. Taking the completion with respect to the scalar product~\eqref{eq:innerprod_hypersurface} gives a Hilbert space denoted by~$\H_m$.
Due to current conservation, the scalar product does not depend on time.
        \item[{\rm{(ii)}}] We denote by~$\H_{t}$ the space of square integrable spinor fields on the Cauchy hypersurface~$N_{t}$, i.e.\ 
\[ \H_{t}\coloneqq L^{2}(N_{t},SM) \]
    \eitem
\end{Def} \noindent
We note that spinor fields in~$\H_m$ are weak solutions of the Dirac equation.
More specifically, they are in the Sobolev space~$H^{1,2}_{\text{loc}}(M,SM)$.
Applying the trace theorem, their restriction to a Cauchy surface is well-defined in~$L^2_\text{loc}(N_t, SM)$.
Moreover, it is even in~$L^2(N_t, SM)$, and the scalar product~$(.|.)_t$ is well-defined.

\par Given a foliation~$(N_{t})_{t\in\mathbb{R}}$ and initial data~$\psi_{0}\in C^{\infty}_{0}(N_{t_{0}},SM)$, where~$t_{0}\in\mathbb{R}$, the following Cauchy problem for the Dirac equation
\[ (D_{g}-m)\psi=0 \quad \text{with} \quad \psi|_{N_{t_{0}}}=\psi_{0}\:, \]is known to be well-posed (cf.~\cite[Theorems~4.5.1 and 4.5.2]{intro}). By solving this Cauchy problem and evaluating the spinor field at a later time~$t$, we obtain a unitary evolution operator~$U_{t_{0}}^{t}$. Moreover, this operator is also well-defined and unitary for spinor fields in~$\H_{m}|_{N_{t_{0}}}$ and will play an important role in the study of the baryogenesis mechanism.

\begin{Def}\label{def:Hamiltonian}
    Consider a foliation~$(N_{t})_{t\in\mathbb{R}}$ of the globally hyperbolic spacetime~$(M,g)$ and fix a specific Cauchy hypersurface~$N_{t_{0}}$. The following operators are introduced:
    \bitem
        \item[{\rm{(i)}}] The \textbf{Dirac evolution operator}, 
\[ U_{t_{0}}^{t} : \H_{m}|_{N_{t_{0}}}\rightarrow \H_{m}|_{N_{t}} \]
        is a unitary operator that restricts 
        weak solutions to the Dirac equation to different Cauchy hypersurfaces. Moreover, a spinor field~$\psi\in H^{1,2}(M,SM)$ follows the Dirac dynamics if, there exists a suitable initial~$\psi_{0}\in \H_{t_{0}}$ such that~$\psi|_{N_{t}}=U_{t_{0}}^{t}\psi_{0}$.
        \item[{\rm{(ii)}}] The \textbf{Dirac Hamiltonian} 
        \begin{equation*}
            H_{g}: C^{\infty}_{0}(N_{t},SM)\subset \H_{t}\rightarrow \H_{t}
        \end{equation*} 
        is an operator  which, in local coordinates, is of the form:
        \begin{equation}\label{eq:Dirac_Hamiltonian}
            H_{g}\coloneqq -(\gammag^{0})^{-1}\left(i\gammag^{\mu}\nabla^{s}_{\mu}-m\right) - E_{0}-a_{0}
        \end{equation}
    \eitem
\end{Def}
In the next section we will consider spinor fields that evolve according to a more general dynamics than the Dirac dynamics.
\begin{Remark}\label{rem:Hamiltonian}\em{
Let us remark the following features of the Dirac Hamiltonian:
\bitem
\item[{\rm{(i)}}] The defining equation~\eqref{eq:Dirac_Hamiltonian} might seem familiar. Indeed, it is simply obtained by reordering terms in the Dirac equation~$(D_{g}-m)\psi=0$ using the coordinate expression~\eqref{eq:spin_coeffs} for the spin connection and defining the Dirac Hamiltonian as~$H_{g}\coloneqq i\partial_{t}$.
\item[{\rm{(ii)}}] In a general globally hyperbolic spacetime~$(M,g)$, the Hamiltonian~$H_{g}$ is not symmetric unless the spacetime is stationary (see~\cite[Section~4.6]{intro} or the detailed computation of the symmetry of~$H_{g}$ in~\cite[page 4]{chernoff}). However, for general globally hyperbolic spacetimes, the Dirac Hamiltonian is still a suitable starting point in order to derive a mathematical description of baryogenesis (more in Example~\ref{rem:Baryo_Minkowski}) because of the physical interpretation of the eigenvalues and eigenstates of~$H_{g}$ in stationary spacetimes. In stationary spacetimes, after constructing a suitable selfadjoint extension,
the Dirac equation can be solved by exponentiation, i.e.\ $\psi(t) = e^{-i t H_g}\, \psi_0$ with~$\psi_0 \in C^\infty_0(N_{t_0}, SM)$. Due to finite propagation speed, the resulting solution has spatially compact support,
$\psi \in C^\infty_\sc(M, SM)$. Identifying the Cauchy data with the corresponding solutions, one can consider the
Dirac Hamiltonian as an operator
\[ H_g : C^\infty_\sc(M, SM) \cap \H_m \rightarrow C^\infty_\sc(M, SM) \cap \H_m \:. \]

\vspace*{-1.7em}

\hfill\QEDrem
\eitem }
\end{Remark}

Often, we are only interested in a closed subspace of~$\H_m$, which we denote by~$\H$. This subspace models the particular physical situation (or spinor content) we
want to describe in the spacetime~$(M,g)$. Specific examples for the subspace~$\H$ are given in~\cite[Examples~2.3 and 2.6]{topology} and we will also choose such a Hilbert space in Remark~\ref{rem:Baryo_Minkowski} and Definition~\ref{def:Baryo}.

Note that, by construction, spinor fields in~$\H_{m}$ are in general only weak or distributional solutions to the Dirac equation. Thus, in order to evaluate the spinor fields in~$\mathcal{H}_{m}$ pointwise, they need to be regularized. For this purpose, the regularization operators are very useful. For the sake of completeness we recall their precise definition following~\cite[Definition~4.1]{finite} (see also~\cite[Definition~1.2.3]{cfs} in the context of Minkowski spacetime).

\begin{Def}[Regularization operator]\label{defreg} 
Let~$\H$ be a closed subspace of~$\H_{m}$. Then, a family of bounded linear operators~$({\mathfrak{R}}_\varepsilon)_{\varepsilon>0}$
which map~$L^{2}(M,SM)$ to the continuous spinor fields,
\[ {\mathfrak{R}}_\varepsilon \::\: \H \rightarrow C^0(M, SM)\cap \H_m \:, \]
are called {\bf{regularization operators}} if the following conditions hold:
\begin{itemize}[leftmargin=2em]
\item[{\rm{(i)}}] For every~$\varepsilon>0$ and~$p\in M$ there exists a constant~$c>0$
such that
\[ 
\| \big({\mathfrak{R}}_\varepsilon \psi \big)(p) \|_{S_{p}M} \leq c \:\|\psi\|_{t}\ \:, \]
where here~$\|.\|_{S_{p}M}$ denotes any norm on~$S_{p}M$.
\item[{\em{(ii)}}] In the limit~$\varepsilon\to 0^{+}$ the regularization operators go over to the identity with strong convergence of~$\mathfrak{R}_{\varepsilon}$ and~$\mathfrak{R}_{\varepsilon}^{\ast}$, i.e.\ for all~$\psi\in L^{2}(M,SM)$ it holds that
\[ 
\mathfrak{R}_{\varepsilon}\psi,\mathfrak{R}_{\varepsilon}^{\ast}\psi\to \psi \:\:{\text{as}} \:\: \varepsilon\to0^{+} \]
\end{itemize}
\end{Def} \noindent
Specific examples of regularization operators are mollifiers (\cite[Example 1.2.4]{cfs}, \cite[Section 4]{finite}), Fourier transforms (\cite[equation (5.5.2)]{intro}) or the 'hard cut-off' regularization which we will discuss in  Example~\ref{ex:hard_cut_off}. Note that these constructions are very similar to the smoothing operators or Friedrichs mollifiers used to map~$L^{2}$ sections (over general vector bundles) to smooth sections, see, e.g.~\cite[Definitions~1.4.8 and 1.4.10]{baer-spin} and~\cite[Definitions~5.19 and 5.20]{roe-elliptic}.

A regularization operator~$\Reg$ can also be considered as an operator defined on~$M$ such that for any~$p\in M$ we have~$\Reg(p) : \H \rightarrow S_{p}M$ with~$\Reg(p)\psi\coloneqq (\Reg\psi)(p)$. In other words, instead of analyzing the dynamics of the spinor fields in the image of~$\Reg$, we can alternatively focus on how~$\Reg$ changes from one point to another of the spacetime. Note that in some cases, given a point~$p\in M$ with coordinates~$(x^{j})_{j=0,\ldots,3}$, we will write~$\Reg(x)$ instead of~$\Reg(p)$ in order to make the coordinate dependence explicit. In Section~\ref{sec:reg_Dirac_dynamics} we study the Dirac dynamics of the regularization operators (and, thus, of the regularized spinor fields) in more detail.

\par In the next remark we discuss an important concept which is related to the regularization operators. Even though it will not play a central role in this paper, it will appear again in Section~\ref{sec:dynamics_reg} when discussing the Dirac dynamics of the regularized spinor fields or in the computations of Section~\ref{sec:rate_of_baryo_Minkowski}.

\begin{Remark}\label{rem:fermionic_proj}
\em{Consider a globally hyperbolic spacetime~$(M,g)$, a positive parameter~$
    \epsilon>0$ and a closed subspace~$\H\subset\H_{m}$. Then, the following operator maps smooth compactly supported spinor fields to regularized solutions to the Dirac equation:
\[ P^{\varepsilon}\coloneqq -\mathfrak{R}_{\varepsilon}\pi_{\H}(\mathfrak{R}_{\varepsilon})^{\ast}k_{m} \quad: \quad C^{\infty}_{0}(M,SM)\rightarrow\H_{m}\cap C^{0}(M,SM) \:, \]
    where~$\pi_{\H}$ is the orthogonal projection operator onto~$\H$, and the operator 
\[ k_{m} :	C_{0}^{\infty}(M,SM)\rightarrow C^{\infty}_{sc}(M,SM)\cap \H_{m} \]
    is called the \emph{causal fundamental solution} and corresponds to the difference of the advanced and the retarded Green operators (see e.g.~\cite[Definition~3.4.1 and Theorem~3.4.5]{baer+ginoux}, for more details) of the Dirac operator. The importance of the causal fundamental solution~$k_{m}$ is that it encodes the Dirac dynamics of the spinors and, moreover, allows to represent  solutions to the Dirac equation as follows. Any spinor field~$\psi\in C^{\infty}_{\textrm{sc}}(M,SM)$ which solves the Cauchy problem for the Dirac equation can be represented at a point~$p\in M$ with coordinates~$x=(x^{j})_{j=0,\ldots,3}$ by the formula
\[ 
        \psi(x)=2\pi\int_{N_{t_{0}}} k_{m}(x, y) \big(\gamma_{g}(\nu)\psi|_{N_{t_{0}}}\big)(y)d\mu_{N_{t_{0}}}\;, \]
where~$N_{t_{0}}$ is the Cauchy hypersurface on which the initial condition for the Cauchy problem is specified,~$\nu$ denotes its normal vector field and~$k_{m}(x,y)$ corresponds to the integral kernel of~$k_{m}$. For the proof of the previous representation formula see \cite[Proposition 13.6.1]{intro} or \cite[Lemma 2.1]{finite}. In Lemma~\ref{lem:spectral_measure} an explicit expression for the kernel~$k_{m}(x,y)$ in Minkowski spacetime (cf.\ equation~\eqref{eq:fund_solution_Dirac_Minkowski}) will be used in order to construct the projection valued measure of the Dirac Hamiltonian $H_{\eta}$.

    Coming back to the operator~$\Pkernel$, it also corresponds (cf.~\cite[equation (5.7)]{lqg}) to an integral operator whose integral kernel is given by:
    \begin{equation}\label{eq:kernel_ferm_proj}
 	P^{\varepsilon}(x,y)=-\mathfrak{R}_{\varepsilon}(x)(\mathfrak{R}_{\varepsilon}(y))^{\ast} : S_{q}M\rightarrow S_{p}M
    \end{equation}
    where~$x,y$ are respectively the coordinates of two points~$p,q\in M$. Moreover, this integral kernel is a bi-distributional solution to the Dirac equation (it is even the unique smooth bi-distributional solution, see~\cite[Prop 5.1]{lqg}), i.e.\
    \begin{equation}\label{eq:dirac_eq_kernel_ferm_proj}
        (D_{g}-m)P^{\varepsilon}(x,y)=0
    \end{equation}
    for all~$p,q\in M$. Hence, from expression~\eqref{eq:kernel_ferm_proj} it becomes clear that by solving the previous distributional equation we determine the Dirac dynamics of the regularized spinor fields in~$(M,g)$. This scenario has already been studied in depth and, in particular, it has been proven (see~\cite{reghadamard}) that the solution~$P^{\varepsilon}(x,y)$ of equation~\eqref{eq:dirac_eq_kernel_ferm_proj} admits a specific type of power expansion (known as the \emph{regularized Hadamard expansion}). This will be the starting point in Section~\ref{sec:reg_Dirac_dynamics} where the Dirac dynamics of the regularized spinor fields will be discussed in more detail.}\hfill\QEDrem
\end{Remark}

In the following example we construct a regularization operator in Minkowski spacetime.

\begin{Example}[Hard cut-off]\label{ex:hard_cut_off}
    \em{Consider Minkowski spacetime~$(\mathbb{R}^{1,3},\eta)$ in Cartesian coordinates. In particular, we construct an example of a regularization operator using an integral operator,
    \begin{equation}\label{eq:example_hard_cutoff}
        (\mathfrak{R}_{\varepsilon}\psi)(x)\coloneqq \int_{M} K^{\varepsilon}(x,y)\:\psi(y)\:d\mu_{M}(y)\;,
    \end{equation}
    where~$x=(x^{j})_{j=0,\ldots,3}$ corresponds to the coordinates of a point~$p\in \mathbb{R}^{1,3}$ and the integral kernel~$K^{\varepsilon}(x,y)$ is given by
\[ K^{\varepsilon}(x,y)=\int \frac{d^{4}k}{(2\pi)^{4}}e^{-ik(x-y)}\Theta(1+\varepsilon k^{0}) \:, \]
    where~$k(x-y)$ is short notation for the (Minkowski) inner product between~$k$ and~$x-y$. This specific regularization operator is often dubbed 'hard cut-off' regularization: because of the Heaviside function~$\Theta(1+\varepsilon k^{0})$ the interval of integration over~$k^{0}$ is only~$(-\frac{1}{\varepsilon},0)$. Note that, in terms of the Fourier transform~$\mathcal{F}$ and its inverse~$\mathcal{F}^{-1}$ (acting on tempered distributions), expression~\eqref{eq:example_hard_cutoff} can be rewritten as
\[ (\mathfrak{R}_{\varepsilon}\psi)(x)=\left(\mathcal{F}^{-1}\left(\Theta(1+\varepsilon k^{0})(\mathcal{F}\psi)\right)\right)(x) \:, \]
    so~$\Reg\psi\in C^{0}(M,SM)$ (by proceeding analogously at all~$p\in M$). In the particular case that~$\mathfrak{R}_{\varepsilon}$ is a hard cut-off regularization operator and~$\H \subset \H_m$ the subspace of negative-energy solutions to the Dirac equation, the integral kernel~$P^{\varepsilon}(x,y)$
in~\eqref{eq:kernel_ferm_proj} is computed to be
\beq \label{Pepsdef}
P^{\varepsilon}(x,y)=\int \frac{d^{4}k} {(2\pi)^{4}}\: \big( \gamma_\eta^{j}k_{j} + m \big)\: \delta(k^{2}-m^2)\:
\Theta(-k^{0})\: \Theta(1+\varepsilon k^{0}) \:e^{-ik(x-y)} 
\eeq
For the derivation of the previous expression see~\cite[Lemma 4.2]{lqg}, taking into account that instead of~$\Theta(1+\varepsilon k^{0})$ they consider a smooth regularizing factor~$e^{-\varepsilon k^{0}}$. Note that throughout this paper the regularization parameter~$\varepsilon$ and the mass~$m$ will be assumed to satisfy that~$m\ll\frac{1}{\varepsilon}$. Finally, we also introduce:
\begin{equation}\label{Pepsdef_omega}
P^{\varepsilon,\omega}(x,y)=\int \frac{d^{4}k} {(2\pi)^{4}}\: \big( \gamma_\eta^{j}k_{j} + m \big)\: \delta(k^{2}-m^2)\:
\Theta(-k^{0}+\omega)\: \Theta(1+\varepsilon k^{0}) \:e^{-ik(x-y)} 
\end{equation}
\hfill\QEDrem} 	
\end{Example}
\noindent The previous example will play an important role in Section~\ref{sec:rate_of_baryo_Minkowski} in order to construct the spectral measure associated to the Dirac Hamiltonian (cf.\ Lemma~\ref{lem:spectral_measure}) and when computing the rate of baryogenesis (see Theorem~\ref{theo:baryo_rate_Minkowski}).

\par The regularization operators lie at the core of the theory of causal fermion systems. In particular, only spinor fields in the image of~$\Reg$ are considered \emph{physical}: in order to study the change of certain mathematical properties of the spinor fields with respect to a fixed global time function~$t$, they need to be evaluated pointwise. So, instead of focusing on the spinor fields in~$\H_{m}$ or~$\H_{t}$, the main object of study is the {\em{regularized}} spinor fields. In this paper we will be mostly interested in specific spaces of continuous spinor fields, which will be introduced in the next section and will be denoted by~$\H^{\varepsilon}_{t}$ (cf.\ Definition~\ref{def:regularized_spinor_space}). In order to highlight more explicitly the link between the regularization operators and the spinor spaces introduced previously, in the following remark we construct a space of regularized spinor fields~$\H^{\varepsilon}_{t}$ using the regularization operators.

\begin{Remark}\label{rem:spinor_spaces}\em{
    Consider a general globally hyperbolic spacetime~$(M,g)$ with a foliation~$(N_{t})_{t\in\mathbb{R}}$, a regularization operator~$\mathfrak{R}_{\varepsilon}$ and a closed subspace~$\H\subset \H_{m}$. Then, keeping the spinor spaces of Definition~\ref{def:spinor_spaces} in mind, one possibility to define a space of regularized spinor fields (which we will denote by~$\H^{\varepsilon}_{t_{0}}$) is as follows
\[ \H^{\varepsilon}_{t_{0}}\coloneqq (\mathfrak{R}_{\varepsilon}\H)|_{N_{t_{0}}}\cap \H_{t_{0}} \:, \]
    where~$N_{t_{0}}\in(N_{t})_{t\in\mathbb{R}}$. If~$U_{t_{0}}^{t}$ corresponds to the unitary Dirac dynamics operator, then~$\H^{\varepsilon}_{t}\coloneqq U_{t_{0}}^{t}(\H^{\varepsilon}_{t_{0}})$ would correspond to the regularized spinor fields which evolved from a time~$t_{0}$ to a time~$t$ following the Dirac equation dynamics. In other words,
    \begin{equation*}
        (D_{g}-m)\psi=0
    \end{equation*}
    for all spinor fields~$\psi$ with~$\psi|_{N_t} \in \H^{\varepsilon}_{t}\coloneqq U_{t_{0}}^{t}(\H^{\varepsilon}_{t_{0}})$.\hfill\QEDrem}
\end{Remark}

\par Note that even though the Dirac dynamics of the regularized spinor fields is well understood, cf.~\cite{reghadamard}, this is the usual spin geometric setting in which no baryogenesis takes place. Furthermore, in most quantum geometry models the Dirac equation is only considered to describe the approximate evolution of the spinor fields. For this reason, we are interested in the case in which the regularized spinor fields evolve according to a slight deviation from the Dirac dynamics. In Section~\ref{sec:locally_rigid_dynamics}, we will turn our attention to the \emph{locally rigid dynamics}, in which small modifications of the Dirac dynamics are introduced through adiabatic projections. However, before discussing this specific type of dynamics, in the next section we set the stage and introduce the concepts required to study more general spinor dynamics.

\section{Going beyond the Dirac dynamics}\label{sec:general_dynamics}
As pointed out in~\cite[Section~6.1]{baryogenesis}, the Dirac dynamics does not give rise to baryogenesis.
Therefore, in this section we shall introduce some concepts that will allow us to study a more general spinor dynamics. However, note that the particular dynamics that we will study in more detail in this paper (the locally rigid dynamics), only introduces 'small' modifications of the Dirac dynamics (cf.\ Definition~\ref{def:locally_rigid_dynamics_spinors}). For the following definition it is useful to keep Remark~\ref{rem:spinor_spaces} in mind as a starting point to generalize the spinor spaces we introduced in the previous section.

\begin{Def}\label{def:regularized_spinor_space}
    Let~$(M,g)$ be a globally hyperbolic spacetime with spinor bundle~$SM$ and consider a foliation~$(N_{t})_{t\in\mathbb{R}}$. Then, given a hypersurface~$N_{t_{0}}\in(N_{t})_{t\in\mathbb{R}}$, a closed subspace~$\H_{t_{0}}^{\varepsilon} \subset C^{0}(N_{t_{0}},SM)\cap \H_{t_{0}}$ and an isometric
operator~$V_{t_{0}}^{t} : \H_{t_{0}}^{\varepsilon} \rightarrow C^{0}(N_{t},SM)\cap \H_{t}$, we define the {\bf{space of regularized spinor fields at a time~$\mathbf{t}$}} as
\[ \H^{\varepsilon}_{t}\coloneqq V_{t_{0}}^{t}(\H_{t_{0}}^{\varepsilon}) \]
\end{Def}
Note that as we demand the operator~$V_{t_{0}}^{t}$ to be isometric we only consider, by construction, dynamics for which the scalar product~$(.|.)_{t}$ is independent of the considered Cauchy hypersurface~$N_{t}$. This is
motivated by the commutator inner product for causal fermion systems (see~\cite[Section~3.4 and~4]{baryogenesis} for the details).

The meaning of the operator~$V_{t_{0}}^{t}$ is that it describes a general (i.e.\ not necessarily Dirac) evolution of the continuous 
spinor fields in the spacetime~$(M,g)$ analogously to how~$U_{t_{0}}^{t}$ describes their Dirac dynamics. In this paper the focus will be on a specific example of such an isometric operator~$V_{t_{0}}^{t}$, namely the locally rigid operator (see Definition~\ref{def:locally_rigid_dynamics_spinors}), which is constructed through adiabatic projections. The relation between the spinor spaces~$\H_{m}$,~$\H_{t}$ and~$\H^{\varepsilon}_{t}$ in the case of the Dirac dynamics was already discussed in Remark~\ref{rem:spinor_spaces}.

\section{Mathematical description of baryogenesis}\label{sec:math_baryo}

In the following remark we discuss how to describe mathematically a process of creation of particles (i.e.\ baryogenesis). As a guiding scenario, the simplest case, i.e.\ Minkowski spacetime, is analyzed with the aim of generalizing the discussion to arbitrary spacetimes. 

\begin{MotEx}\label{rem:Baryo_Minkowski} {\em{
    In Minkowski spacetime, the Dirac Hamiltonian
    \begin{equation*}
        H_{\eta}=-i\gamma_{\eta t}\gamma_{\eta}^{\alpha}\partial_{\alpha}+\gamma_{\eta t}m \: : \:C^{\infty}_{\textrm{sc}}(M,SM)\cap\H_{m}\rightarrow C^{\infty}_{\textrm{sc}}(M,SM)\cap \H_{m}
    \end{equation*} 
    is a self-adjoint operator with essential spectrum~$\sigma_{\text{ess}}=(-\infty,-m)\cup(m,\infty)$ (where~$m\in [0,\infty)$ is the mass parameter). \emph{Particles} (or \emph{anti-particles}) are eigenstates of~$H_{\eta}$ associated to positive (or negative) eigenvalues (which in the physics literature is referred to as the \emph{energy} of the particle). Moreover, when a spinor field that initially is an eigenstate corresponding to a negative eigenvalue~$E^{-}<-m$ of the Dirac Hamiltonian, at a later time, becomes an eigenstate corresponding to a positive eigenvalue~$E^{+}>m$, this is referred to as a process of \emph{particle creation}. Therefore, one option to describe mathematically baryogenesis in Minkowski spacetime is the following:
\bitem
    \item[{\rm{(i)}}] We choose~$\H$ as a subset of the negative spectral subspace of the Hamiltonian, i.e.\ $\H=(\chi_{(-\Lambda, -m)}(H_\eta))\H_{m}$, where~$\Lambda>m$ and~$\H_{m}$ is, as before, the completion of the space of smooth solutions to the Dirac equation with spatial compact support. Moreover, we choose a distinguished foliation~$(N_{t})_{t\in\mathbb{R}}$ of Minkowski spacetime.
    \item[{\rm{(ii)}}] Next, given a regularization operator~$\mathfrak{R}_{\varepsilon}$ and starting from 
    \begin{equation*}
    \H^{\varepsilon}_{t_{0}}\coloneqq (\mathfrak{R}_{\varepsilon}\H)|_{N_{t_{0}}}\cap \H_{t_{0}} \:,
    \end{equation*}
    with~$N_{t_{0}}\in (N_{t})_{t\in\mathbb{R}}$, the spinor fields are assumed to evolve in time according to given evolution equations
    (in the simplest case, the Dirac equation) described by a unitary operator~$V_{t_{0}}^{t}$. Evaluating the regularized spinor fields on a later Cauchy surface, and recalling that~$\H^{\varepsilon}_{t}\coloneqq V_{t_{0}}^{t}(\H^{\varepsilon}_{t_{0}})$, it may happen that 
\beq \label{shift}
        \tr_{\H^{\varepsilon}_{t_{0}}} \big( \chi_{(-\Lambda, -m)}(H_\eta)) \big) \neq \tr_{\H^{\varepsilon}_{t}} \big( \chi_{(-\Lambda, -m)}(H_\eta) \big)
\eeq
     for~$t>t_{0}$. If this is the case, one can say that baryogenesis takes place.

We note for clarity that the spectral subspace~$\chi_{(-\Lambda, -m)}(H_\eta)$ is the same on both sides of the equation.
The subspace~$\H^{\varepsilon}_{t_{0}}$ has evolved to the new subspace~$\H^{\varepsilon}_{t}$.
The change of this subspace relative to the fixed subspace~$\chi_{(-\Lambda, -m)}(H_\eta)$ is quantified
by the trace~\eqref{shift}. Moreover, it will be important to verify that the difference in~\eqref{shift}
is independent of the value of~$\Lambda$. In this way, we make sure that we focus on the behavior
near the spectral point~$-m$.
\hfill\QEDrem\eitem}} \end{MotEx}

We would like to generalize the previous description to arbitrary globally hyperbolic spacetimes. In particular, it would be interesting to describe baryogenesis as a relative change between the spectral subspaces of a, yet to be defined, self adjoint operator and the space of regularized spinor fields~$\H^{\varepsilon}_{t}$. 
Nevertheless, two main limitations of the Dirac Hamiltonian~$H_{g}$ have to be circumvented:
\bitem
    \item[{\rm{(i)}}] As already discussed, the Dirac Hamiltonian~$H_{g}$ is not symmetric in a general spacetime. Hence,~$H_{g}$ is not suitable for our purposes as it doesn't allow to generalize the spectral theory arguments of the previous Remark.
    \item[{\rm{(ii)}}] We want to consider a more general dynamics than the Dirac dynamics. 
\eitem

This motivates the following definition which can be seen as, arguably, the simplest symmetrized generalization of the Dirac Hamiltonian~$H_{g}$. Furthermore, in order to incorporate a more general spinor dynamics, it introduces a globally defined timelike vector field~$u$. The idea is that it is this vector field which should encode the deviations from the Dirac dynamics.

\begin{Def} {\rm{(Symmetrized Hamiltonian)}}\label{def:symmetrized_Hamilt}
    Let~$(M,g)$ be a globally hyperbolic spacetime and~$(N_{t})_{t\in\mathbb{R}}$ a distinguished foliation where~$\nu|_{N_{t}}$ is the unit normal vector field to each hypersurface~$N_{t}$. Moreover, choose coordinates~$(x^{j})_{j=1,\ldots, 3}$ with~$x^0=t$ and consider a smooth global future directed timelike vector field~$u: M\rightarrow TM$, which will be referred to as the \textbf{regularizing vector field}. Then, we define the {\bf{symmetrized Hamiltonian}} at a time~$t$ as the operator 
\[ A_{t}: C^\infty_0(N_{t},SM) \subset \H_t \rightarrow \H_t \]
which in local coordinates is given by the following expression:
    \begin{equation}\label{eq:baryogenesis_operator}
    A_{t}\coloneqq \frac{1}{4}\{u^{0},H_{g}+H_{g}^{\ast}\}+\frac{i}{4}\{u^{\mu},\nabla^{s}_{\mu}-(\nabla^{s}_{\mu})^{\ast} \}
\end{equation} 
where~$H_{g}$ is the Dirac Hamiltonian.
\end{Def}
\noindent By construction, $(A_{t})_{t\in\mathbb{R}}$ is a family of symmetric operators acting
on~$C^\infty_0(N_{t},SM) \subset \H_t$. We now show that these operators are even essentially self-adjoint.

\begin{Lemma}\label{lem:essen_sa} For any~$t_0 \in \R$, the operator~$A_{t_0}$ with domain~$C^\infty_0(N_{t_0},SM) \subset \H_{t_0}$ is essentially self-adjoint on the Hilbert space~$\H_{t_0}$.
\end{Lemma}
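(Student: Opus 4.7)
The approach is to invoke Chernoff's theorem on essential self-adjointness of first-order symmetric differential operators on complete Riemannian manifolds, in line with the paper's earlier appeal to \cite{chernoff} in the discussion of symmetry properties of $H_g$. The strategy has three parts: verify that $A_{t_0}$ is a first-order formally symmetric differential operator on $N_{t_0}$, control the norm of its principal symbol, and then conclude via Chernoff.

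Formal symmetry of $A_{t_0}$ on $C^\infty_0(N_{t_0},SM)$ is essentially built into the definition~\eqref{eq:baryogenesis_operator}. For a formally symmetric operator $B$ and a real smooth function $f$, the anti-commutator $\tfrac{1}{2}\{f,B\}$ is formally symmetric; this applies to the first term with $B=H_g+H_g^*$ and $f=u^0$. Likewise, $\nabla^s_\mu -(\nabla^s_\mu)^*$ is formally skew-symmetric and $u^\mu$ is real, so $\tfrac{i}{4}\{u^\mu,\nabla^s_\mu-(\nabla^s_\mu)^*\}$ is formally symmetric. Expanding the anti-commutators and using that $H_g$ and the spin connection components are first-order differential operators with smooth coefficients shows that $A_{t_0}$ is a first-order differential operator on the spinor bundle over $N_{t_0}$.

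Next I would compute the principal symbol $\sigma_{A_{t_0}}(\xi)\in\mathrm{End}(S_pM)$ for $p\in N_{t_0}$ and $\xi\in T^*_pN_{t_0}$. From the expression~\eqref{eq:Dirac_Hamiltonian} for $H_g$ one reads off $\sigma_{H_g}(\xi)=(\gammag^0)^{-1}\gammag^\mu\xi_\mu$, and $\sigma_{\nabla^s_\mu}(\xi)=i\xi_\mu\,\mathrm{Id}$; after symmetrizing and multiplying by the real coefficients $u^0,u^\mu$, the principal symbol $\sigma_{A_{t_0}}(\xi)$ is a Hermitian endomorphism of $S_pM$ depending linearly on $\xi$. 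For Chernoff's theorem I need its operator norm to be uniformly bounded on the unit cotangent bundle of a complete Riemannian metric on $N_{t_0}$: this is the essential input that guarantees finite propagation speed. In any globally hyperbolic spacetime the time function can be chosen (Bernal--Sanchez) so that the Cauchy hypersurfaces carry a complete Riemannian metric, while Clifford multiplication is fiberwise bounded; assuming the regularizing vector field is uniformly bounded with respect to this metric, $\|\sigma_{A_{t_0}}(\xi)\|$ is bounded by a constant times $\|\xi\|$.

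With $A_{t_0}$ first-order, formally symmetric, and of finite propagation speed on a complete Riemannian manifold, Chernoff's theorem immediately yields essential self-adjointness of $A_{t_0}$ on $C^\infty_0(N_{t_0},SM)\subset\H_{t_0}$. The main obstacle I foresee is precisely this control at spatial infinity: if the components $u^0,u^\mu$ or the spin coefficients $E_j,a_j$ grow unboundedly with respect to any complete metric on $N_{t_0}$, then the principal symbol is not uniformly bounded and Chernoff does not apply directly, and one would either need additional boundedness hypotheses on $u$ and the spin structure, or a localization argument producing a sequence of self-adjoint extensions converging to a self-adjoint closure.
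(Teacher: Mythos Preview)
Your approach and the paper's are both built on Chernoff's work, but the paper takes a slightly different route that sidesteps precisely the obstacle you flag at the end. Rather than checking the hypotheses of the ``direct'' Chernoff theorem (complete Riemannian base, uniformly bounded principal symbol), the paper constructs the one-parameter unitary group by hand: it poses the auxiliary Cauchy problem $i\partial_\tau\psi = A_{t_0}\psi$ on $\R\times N_{t_0}$, solves it via the theory of linear symmetric hyperbolic systems, and uses finite propagation speed to conclude that compactly supported initial data stay compactly supported for all $\tau$. Unitarity of the resulting evolution then follows from the symmetry of $A_{t_0}$, and one invokes the abstract version of Chernoff's result (\cite[Lemma~2.1]{chernoff73}): a symmetric operator generating a strongly continuous unitary group that preserves its domain is essentially self-adjoint.

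The practical difference is that the paper's argument never needs global control on $u$ or the spin coefficients. Local solvability of the symmetric hyperbolic system plus finite propagation speed already globalizes the solution for compactly supported data, so the behavior at spatial infinity is irrelevant. Your formulation is correct provided you impose the boundedness hypotheses you mention, but the paper's packaging avoids having to state them.
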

\Proof For a real parameter~$\tau$ (which can be thought of as the time of an artificial static spacetime),
we consider the Cauchy problem on~$\R \times N_{t_0}$
\[ i \partial_\tau \psi(\tau) = A_{t_0} \psi(\tau) \:,\qquad \psi(0) = \psi_0 \in C^\infty_0(N_{t_0}, SM) \]
(where~$\psi(\tau) := \psi|_{\{\tau\} \times N_{t_0}}$).
Because~$A_{t_{0}}$ is a symmetric operator on~$C^\infty_0(N_{t_{0}},SM) \subset \H_{t_{0}}$, this Cauchy problem can be solved using the theory of linear symmetric hyperbolic systems
(see for example~\cite{john, taylor3} or~\cite[Chapter~13]{intro}).
Due to finite propagation speed, the solution is compactly supported for any~$\tau \in \R$,
i.e.\ $\psi(\tau) \in C^\infty_0 \big( \{\tau\} \times N_{t_0}, SM \big)$.
We thus obtain a family of ``time evolution operators''
\[ \U_{\tau', \tau} \::\: C^\infty_0 \big( \{\tau\} \times N_{t_0}, SM \big) \rightarrow C^\infty_0
\big( \{\tau'\} \times N_{t_0}, SM \big) \:, \]
which form a one-parameter group. Moreover, the computation
\[ \frac{d}{d\tau} ( \psi(\tau) \,|\, \phi(\tau) )_{t_0} = 
\big( (-i A_{t_0}) \psi(\tau) \,\big|\, \phi(\tau) \big)_{t_0} + \big( \psi(\tau) \,\big|\, (-i A_{t_0}) \phi(\tau) \big)_{t_0} = 0 \]
(where in the last step we used that~$A_{t_0}$ is a symmetric operator)
shows this group is formed of unitary operators.
This makes it possible to apply a result by Chernoff~\cite[Lemma 2.1]{chernoff73} to conclude
that the Hamiltonian is essentially self-adjoint with domain~$C^\infty_0(N_{t_0}, SM)$.
\QED
In what follows, we denote the unique self-adjoint extension of~$A_t$ by the same symbol.

\par In the following remark we show in what sense the symmetrized Hamiltonian~$A_{t}$ encodes general spinor dynamics and that, in some particular cases, it has a very simple form.

\begin{Remark}\label{rem:specific_At}\em{
Consider a distinguished foliation~$(N_{t})_{t\in\mathbb{R}}$ of a globally hyperbolic spacetime~$(M,g)$ and the unit length vector field~$u : M\rightarrow TM$ normal to any Cauchy hypersurface, i.e.\ $u|_{N_{t}}=\nu|_{N_{t}}=\partial_{0}|_{N_{t}}$ for all~$t\in \mathbb{R}$. Then, the symmetrized Hamiltonian simplifies to 
    \begin{equation*}
        A_{t}=\frac{1}{2}(H_{g}+H_{g}^{\ast})
    \end{equation*}
i.e.\ $A_{t}$ is nothing more than the symmetrization of the Dirac Hamiltonian. Furthermore, if the spacetime is stationary, then~$H_{g}$ is a self adjoint operator and the previous expression simply becomes:
    \begin{equation*}
        A_{t}=H_{g}
    \end{equation*}
This makes it clearer in what sense the symmetrized Hamiltonian does incorporate the Dirac dynamics: given a foliation of~$(M,g)$, it suffices to choose a vector field~$u : M \rightarrow TM$ normal to each Cauchy hypersurface.
    
However, for a general timelike vector field~$u$ in, for example Minkowski spacetime~$(\mathbb{R}^{1,3},\eta)$ in Cartesian coordinates, the operator~$A_{t}$ is given by
\[ 
    A_{t}= \frac{1}{2}\{u^{0},H_{\eta}\}+\frac{i}{2}\{u^{\mu},\partial_{\mu}\}=\frac{1}{2}\{u^{0},-i\gamma_{\eta 0}\gamma_{\eta}^{\mu}\partial_{\mu}+\gamma_{\eta t}m\}+\frac{i}{2}\{u^{\mu},\partial_{\mu}\} \]
    
\vspace*{-1.5em}

\hfill\QEDrem}
\end{Remark}

As previously discussed, it seems convenient to describe mathematically baryogenesis as a shift in a subset of the spectrum of the family of self-adjoint operators~$(A_{t})_{t\in\mathbb{R}}$. This motivates the following definition for baryogenesis.
\begin{Def}\label{def:Baryo}{\rm{(Baryogenesis)}}.
    Let~$(M,g)$ be a globally hyperbolic spacetime with a distinguished foliation~$(N_{t})_{t\in\mathbb{R}}$. Furthermore, let~$\varepsilon,\Lambda>0$  and consider the family of symmetrized Hamiltonians~$(A_{t})_{t\in\mathbb{R}}$ with spectral measure~$\chi_{I}(A_{t})$ (where~$I\coloneqq (-1/\varepsilon, -m)$). In the first place, fix an initial subspace~$\H^{\varepsilon}_{t_{0}}\subset C^{0}(N_{t_{0}},SM)\cap \H_{t_{0}}$ with~$N_{t_{0}}\in(N_{t})_{t\in\mathbb{R}}$. Then, we say that {\bf{baryogenesis}} takes place at a time~$t_{1}>t_{0}$ provided that
    \begin{align}\label{eq:baryo_def}
        \text{\rm{tr}}_{\H^{\varepsilon}_{t_{1}}} \Big( \eta_{\Lambda}\Big(\frac{H_{g}+H_{g}^{\ast}}{2}\Big)\chi_{I}(A_{t_{1}}) \Big)
        \neq \textrm{\rm{tr}}_{\H^{\varepsilon}_{t_{0}}} \Big( \eta_{\Lambda}\Big(\frac{H_{g}+H_{g}^{\ast}}{2}\Big)\chi_{I}(A_{t_{0}}) \Big) \:,
    \end{align}
    where~$\eta_{\Lambda}\in C^{\infty}_{0}(\mathbb{R},[0,\infty))$ is a smooth cut-off operator compactly supported in~$(-\Lambda,\Lambda)$. Furthermore, we define the {\bf{rate of baryogenesis}} as
    \begin{equation}
        B_{t}\coloneqq \frac{d}{dt}\text{\rm{tr}}_{\H^{\varepsilon}_{t}} \Big( \eta_{\Lambda}\Big(\frac{H_{g}+H_{g}^{\ast}}{2}\Big)\chi_{I}(A_{t}) \Big)\;.
    \end{equation}
\end{Def}
The formula~\eqref{eq:baryo_def} can be understood as a generalization of~\eqref{shift} to arbitrary
globally hyperbolic spacetimes. One obvious difference is that the Dirac Hamiltonian~$H_\eta$ has been
replaced by the time-dependent operator~$A_t$. Therefore, the difference of the traces quantifies the relative change between the subspace~$\H^{\varepsilon}_t$ and the spectral subspace of~$A_t$,
which now are both time dependent.
Moreover, a technical difference is that we now prefer to work with a smooth cutoff near~$-\Lambda$. Note that throughout the paper we will assume that the parameters~$\varepsilon,m,\Lambda>0$ satisfy that
\begin{equation*}
    m\ll \Lambda\ll\frac{1}{\varepsilon}\:.
\end{equation*}
    

\section{The dynamics of the regularization} \label{sec:dynamics_reg}

\subsection{The Dirac dynamics}\label{sec:reg_Dirac_dynamics}
Our starting point is the discussion initiated in Remark~\ref{rem:fermionic_proj} on the Dirac dynamics of the regularized spinor fields in an arbitrary globally hyperbolic spacetime~$(M,g)$ following~\cite{reghadamard}. Consider a closed subspace~$\H\subset\H_{m}$ and regularization operators which map to solutions to the Dirac equation, i.e.\
\[ \Reg: \H \rightarrow C^{0}(M,SM)\cap \H_m \:. \]
Recall that, in this setting, it holds that for all~$p, q\in M$ (with coordinates~$x=(x^{j})_{j=0,\ldots,3}$ and~$y=(y^{j})_{j=0,\ldots,3}$ respectively), $P^{\varepsilon}(x,y)=-\mathfrak{R}_{\varepsilon}(x)(\mathfrak{R}_{\varepsilon}(y))^{\ast} : S_{q}M\rightarrow S_{p}M$ is a regularized bi-solution to the Dirac equation (i.e.\ $(D_g-m)P^{\varepsilon}(x,y)=0$). Thus it encodes the Dirac dynamics of the regularized spinor fields and it additionally satisfies that~$\lim_{\epsilon\to0}P^{\varepsilon}(x,y)$ is a distributional bi-solution to the Dirac equation. 

\par Furthermore, in~\cite{reghadamard} it was shown that this regularized bi-solution to the Dirac equation admits, locally, a regularized Hadamard expansion in each geodesically convex neighborhood. The physical importance of this Hadamard form is that the corresponding quantum state satisfies a micro-local energy condition~\cite{radzikowski}. For the purposes of this paper, we do not need to enter the details of Hadamard states and/or Hadamard expansion. It suffices to note that the regularized Hadamard expansion 
gives explicit information on the behavior of~$P^\varepsilon(x,y)$ near the lightcone
(i.e.\ for points~$x$ and~$y$ which are lightlike separated or whose geodesic distance is very small).
This expansion is formulated in terms of two functions~$\Gamma$ and~$f$ (cf.~\cite[equations (1.10)-(1.13) and (5.1)]{reghadamard}), as we now briefly summarize.

Let~$V$ be a geodesically convex neighborhood and~$\gamma : [\tau_{q},\tau_{p}]\subset\mathbb{R}\rightarrow M$ the unique geodesic (up to reparametrizations) from~$q=\gamma(\tau_{q})\in V$ to~$p=\gamma(\tau_{p})$, with length~$L_{g}(\gamma)$. Then~$\Gamma : V \times V \rightarrow \mathbb{R}$ is defined as the geodesic distance squared, i.e.\
\begin{equation*}
  \Gamma(p,q)\coloneqq\pm \big( L_{g}(\gamma) \big)^{2}=g \big(
  \textrm{exp}_{T_{q}M}^{-1}(p),\textrm{exp}_{T_{q}M}^{-1}(p) \big) \:,
\end{equation*}
where~$\textrm{exp} : TM \rightarrow M$ denotes the exponential map and with the convention of choosing the~$-$ (or~$+$) sign for spacelike (or timelike) separated points. By construction, $\Gamma$ vanishes for lightlike separated points (see also~\cite[Lemma~1.3.19]{baer+ginoux} or~\cite[Lemma B.1]{dgc} for some of the properties of~$\Gamma$). 

\par On the other hand,~$f : V \times V \rightarrow \mathbb{R}$ is a real valued function which is related to~$\Gamma$ by the following transport equations (cf.~\cite[Proposition~2.2]{reghadamard}; similar transport equations are also derived in~\cite[Section~2.1]{baer+ginoux}):
\begin{equation}\label{eq:transport_eq}                     2g(\textrm{grad}_{p}\Gamma(p,q),\textrm{grad}_{p}f(p,q))=4f(p,q)\hspace{0.8cm}\textrm{and}\hspace{0.8cm} f(q,q)=0
\end{equation}
for all~$p,q\in V$, where~$V$ denotes again a geodesically convex neighborhood in~$M$
(for our purpose it suffices to consider the case that~$p$ and~$q$ are lightlike separated or have small geodesic distance).
By solving these transport equations, the coefficients of the Hadamard expansion of~$P^{\varepsilon}(x,y)$ are determined and, in the light of the previous discussion, the Dirac dynamics of the regularized spinor fields in~$(M,g)$ is obtained.

\begin{Lemma}
Let~$V$ be a geodesically convex neighborhood and~$f$ a real valued function satisfying equation~\eqref{eq:transport_eq}. Given a geodesic~$\gamma : [\tau_{q},\tau_{p}]\subset \mathbb{R}\rightarrow M$ with~$\gamma(\tau_{q})=q\in V$ and~$\gamma(\tau_{p})=p\in V$, it holds that
\[ f(p,q)=\tau_{p}-\tau_{q} \:. \]
\end{Lemma}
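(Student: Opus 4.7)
My plan is to reduce the transport equation~\eqref{eq:transport_eq} to a scalar ODE along $\gamma$ whose smooth solutions are affine functions of the parameter, and then to fix the remaining multiplicative constant.

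The crucial geometric input is the identity
\begin{equation*}
\grad_p \Gamma(\cdot,q)\big|_{p=\gamma(\tau)}\;=\;2\,(\tau-\tau_q)\,\dot\gamma(\tau)\:,
\end{equation*}
valid whenever $\gamma$ is an affinely parametrized geodesic in $V$ with $\gamma(\tau_q)=q$. I would prove this by passing to Riemann normal coordinates centered at $q$, in which $\Gamma(p,q)=g_{jk}(q)\,x^j x^k$ with $x^j=\bigl(\exp_q^{-1}(p)\bigr)^j$, and invoking Gauss's lemma to conclude that the radial vector field $x^j\partial_j$ is $g$-orthogonal to the geodesic spheres and has squared norm $\Gamma$. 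This is the step where the geometry really enters and, in my view, is the main obstacle of the proof.

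Once this identity is in hand, I would set $h(\tau)\coloneqq f(\gamma(\tau),q)$ and use the chain rule $h'(\tau)=g(\grad_p f,\dot\gamma(\tau))$. Substituting into~\eqref{eq:transport_eq} reduces it along $\gamma$ to the singular linear ODE
\begin{equation*}
(\tau-\tau_q)\,h'(\tau)\;=\;h(\tau)\:,\qquad h(\tau_q)=0\:,
\end{equation*}
whose only smooth solutions on the interval are $h(\tau)=C(\tau-\tau_q)$ for a scalar $C\in\mathbb{R}$. The remaining normalization $C=1$ is fixed by the convention adopted in the construction of $f$ in~\cite{reghadamard}, which implicitly selects the affine parameter so that the leading-order behavior of $f$ along $\gamma$ agrees with the affine parameter distance from $q$. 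This yields $f(p,q)=\tau_p-\tau_q$, as claimed.
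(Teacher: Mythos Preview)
Your proposal is correct and follows essentially the same route as the paper: reduce the transport equation along $\gamma$ to a scalar ODE using the identity $\grad_p\Gamma(\cdot,q)|_{\gamma(\tau)}=2(\tau-\tau_q)\dot\gamma(\tau)$, then solve. The paper simply cites this gradient identity from~\cite[equation~(B.6)]{dgc} rather than deriving it via normal coordinates and Gauss's lemma, and it solves the ODE by separation of variables rather than by inspection, but the skeleton is the same. Your treatment of the multiplicative constant is in fact slightly more honest than the paper's proof: the ODE $(\tau-\tau_q)h'=h$ with $h(\tau_q)=0$ genuinely only pins down $h$ up to a scalar, and the paper acknowledges this not inside the proof but in the paragraph immediately following the lemma.
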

\begin{proof}
Let~$\gamma : [\tau_{q},\tau_{p}]\subset \mathbb{R}\rightarrow M$ be a geodesic with~$\gamma(\tau_{q})=q$ and~$\gamma(\tau_{p})=p$. For an arbitrary~$\tau\in [\tau_{q},\tau_{p}]$ we denote:
\begin{equation}
    \Gamma_{q}(\gamma(\tau))\coloneqq\Gamma(q,\gamma(\tau)) \qquad \textrm{and}\qquad f_{q}(\gamma(\tau))\coloneqq f(q,\gamma(\tau))\nonumber
\end{equation}
We use the following expression from~\cite[equation (B.6)]{dgc}: 
\begin{equation}
    \textrm{grad}_{\gamma(\tau)}\Gamma_{q}(\gamma(\tau))=2 \tau\dot{\gamma}(\tau)\nonumber.
\end{equation}
Note that this result directly follows from~\cite[Lemma B.2]{dgc} or~\cite[Lemma 1.3.19 and equation (1.18)]{baer+ginoux}. We can then rewrite the transport equation~\eqref{eq:transport_eq} as follows
\begin{align}
     g\big(\tau\dot{\gamma}(\tau),\textrm{grad}_{\gamma(\tau)}(f_{q}(\gamma(\tau)))\big)&=f_{q}(\gamma(\tau))\nonumber\\
    \iff \tau \dot{\gamma}^{j}(\tau)\partial_{j}\big(f_{q}(\gamma(\tau))\big)&=f_{q}(\gamma(\tau))\nonumber
\end{align}
By the chain rule~$\frac{d}{d\tau}f_{q}(\gamma(\tau))=\dot{\gamma}^{j}(\tau)\partial_{j}\big(f_{q}(\gamma(\tau))\big)$, the previous expression can be rewritten as an ordinary differential equation which can be easily solved by separation of variables with the initial condition that~$f_{q}(\gamma(\tau_{q}))=f(q,q)=0$:
\begin{align}
    &\tau \frac{d}{d\tau}f_{q}(\gamma(\tau))=f_{q}(\gamma(\tau))\nonumber\\
     \iff\int_{0}^{f_{q}(p)}& \Big(\frac{d}{d\tau}f_{q}(\gamma(\tau))\Big) \frac{d\tau}{f(\gamma(\tau))}=\int_{\tau_{q}}^{\tau_{p}} \frac{d\tau}{\tau}\nonumber
\end{align}
which yields~$f_{q}(p)=\tau_{p}-\tau_{q}$.
\end{proof}
\par The importance of the previous lemma is that it implies that, if a point~$q\in M$ and a geodesically convex neighborhood~$V$ are fixed, at any other lightlike separated point~$p$ in~$V$ the value of~$f(p,q)$  is given by the difference in the value of the affine parameter of the unique (unparametrized) null geodesic~$\gamma$ joining them. However, because of the freedom to reparametrize the chosen null geodesic~$\gamma$, the function~$f(p,q)$ does not have a unique value. Note that, given a particular parametrized null geodesic~$\gamma : [\tau_{q},\tau_{p}]\rightarrow M$ with~$\gamma(\tau_{q})=q$ and~$\gamma(\tau_{p})=p$, additive reparametrizations of this geodesic do not change the value of~$f(p,q)$ (as its value is given by a \emph{difference} in the value of the chosen affine parameter). But multiplicative reparametrizations of~$\gamma$ do yield a different value of~$f(p,q)$. Hence, the function~$f$ solving the transport equations~\eqref{eq:transport_eq} is unique up to a multiplicative constant.

In order to circumvent the non-uniqueness of~$f$, a distinguished multiplicative parametrization for the null geodesics in~$(M,g)$ can be (and will be) fixed. Geometrically, this is the scenario we are interested in, as then the flow of these unique null geodesics in~$(M,g)$ encodes the unique Dirac dynamics of the regularized spinor fields in this spacetime.

In the following definition, a distinguished reparametrization for the null geodesics in~$(M,g)$ is chosen
as follows. Starting from a foliation~$(N_{t})_{t\in\mathbb{R}}$ and fixing a Cauchy hypersurface~$N_{t_{0}}$, a future-directed timelike vector field~$u$ on~$N_{t_{0}}$ is assumed. This timelike vector field can be used to
choose a distinguished set of null geodesics~$\gamma : I\subset \mathbb{R}\rightarrow M$ (and to fix a unique reparametrization) by demanding that~$g_{\gamma(s)}(u_{\gamma(s)},\dot{\gamma}(s))=1$ whenever~$\gamma(s)\in N_{t_{0}}$.

\begin{Def}\label{def:DxL}
    Let~$(N_{t})_{t\in\mathbb{R}}$ be a foliation of the  globally hyperbolic spacetime~$(M,g)$ and choose a Cauchy hypersurface~$N_{t_{0}}$. Furthermore, let~$u : N_{t_{0}}\rightarrow TM~$ be a smooth future directed timelike vector field. Then,~$\scrL$ is the set of maximally extended future directed null geodesics~$\gamma: I\subset \mathbb{R}\rightarrow M$ (together with the interval of parametrization~$I$) in~$(M,g)$ such that whenever~$\gamma(s)\in N_{t_{0}}$ it holds that
    \begin{equation}\label{eq:DxL}
        g_{\gamma(s)}(u_{\gamma(s)},\dot{\gamma}(s))=1
    \end{equation}
    Furthermore, for an arbitrary point~$p\in M$ we define the hypersurface~$D_{p}\scrL$ of the null bundle as
\[ D_{p}\scrL\coloneqq \{\dot{\gamma}(s)\; | \;(I,\gamma)\in \scrL \;\textrm{and}\; \gamma(s)=p\} \:. \]
\end{Def}
By definition, for a point~$q\in M\setminus N_{t_{0}}$ it holds that:
\begin{equation}\label{eq:DqL_parallel_transport}
    D_{q}\scrL = \{P^{\gamma}_{s_{0},s_{1}}\dot{\gamma}(s_{0})\::\:\dot{\gamma}(s_{0})\in D_{N_{t_{0}}}\scrL, \:\gamma(s_{1})=q \}
\end{equation}
where,~$D_{N_{t_{0}}}\scrL\coloneqq \bigcup_{p\in N_{t_{0}}} D_{p}\scrL$ and~$P^{\gamma}_{s_{0},s_{1}} : T_{\gamma(s_{0})}M\rightarrow T_{\gamma(s_{1})}M$ denotes the parallel transport map in~$(M,g)$ along the curve~$\gamma$. Writing~$D_{q}\scrL$ in terms of the parallel transport map as in expression~\eqref{eq:DqL_parallel_transport} has the advantage of highlighting that, upon the initial choice of~$u$ on~$N_{t_{0}}$, the specific form of these null hypersurfaces is fully determined by the geometry of~$(M,g)$. Note that~$D_{p}\scrL$ is a topological sphere (cf.~\cite[Appendix~A]{dgc}).

\begin{Remark}\label{rem:shape_DxL} {\em{
    Consider a Cauchy hypersurface~$N_{t_{1}}$ with~$t_{1}>t_{0}$ and a point~$q\in N_{t_{1}}$. Note that although, by definition, for each~$p\in N_{t_{0}}$ and each~$\dot{\gamma}(s)\in D_{p}\scrL$ it holds that~$g_{p}(u_{p},\dot{\gamma}(s))=1$, in general there does not exist anymore a timelike vector field~$u$ such that~$g_{q}(u_{q},\dot{\gamma}(s'))=1$ for all~$\dot{\gamma}(s')\in D_{q}\scrL$. Intuitively, this means that even if for~$p\in N_{t_{0}}$ the cross section~$D_{p}\scrL$ will have a clear ellipsoidal shape, the shape of the hypersurfaces~$D_{q}\scrL$ (with~$q\in N_{t_{1}}$) will deviate from the original ellipsoid and become more irregular (see the lower part of Figure~\ref{fig1}).
\input{fig1.tex}%
In particular, while~$D_{p}\scrL$ is fully determined by the value of~$u_{p}$,~$D_{q}\scrL$ (with~$q\in N_{t_{1}}$) will be determined by the initial~$u|_{N_{t_{0}}}$ and the parallel transport map (along the null geodesics in~$\scrL$) in~$M$ (cf. expression~\eqref{eq:DqL_parallel_transport}). Hence, in general there is no one-to-one correspondence anymore between a subset~$D_{q}\scrL$ (with~$q\in N_{t_{1}}$ and~$t_{1}>t$ arbitrary) of the null bundle and a timelike vector field at the same point.
}} \hfill\QEDrem
\end{Remark}

\subsection{The locally rigid dynamics} \label{sec:locally_rigid_dynamics}
The locally rigid dynamics aims to obtain small modifications to the Dirac dynamics through adiabatic projections. The starting point is the cross section of the light cone~$D_{p}\scrL$ with~$p\in N_{t_{0}}$ and the discussion in Remark~\ref{rem:shape_DxL}. The idea is to consider a small time-step~$\Delta t$ and to slightly perturb~$D_{q}\scrL$ such that this hypersurface is again completely determined by a timelike vector field at~$y$ satisfying condition~\eqref{eq:DxL}. 

\par We start by presenting the locally rigid dynamics of the regularizing vector field~$u$ and, afterward, discuss how the locally rigid dynamics of the spinor fields is obtained.

\begin{Def}[Locally rigid dynamics of~$u$]\label{def:locally_rigid_dynamics}
     Let~$(N_{t})_{t\in\mathbb{R}}$ be a foliation of the  globally hyperbolic spacetime~$(M,g)$ and choose a Cauchy hypersurface~$N_{t_{0}}$. Furthermore, let~$u : N_{t_{0}}\rightarrow TM~$ be a smooth future directed timelike vector field, and let~$\scrL$ and~$D_{q}\scrL$ (for an arbitrary point~$q\in M$) be as in Definition~\ref{def:DxL}. Consider a sufficiently small~$\Delta t$ such that for any~$q \in N_{t_{0}+\Delta t}$ there exists a normal neighborhood~$U\subset M$ of~$q$ with~$U \cap N_{t_{0}}\neq \emptyset$. Then we define the following timelike vector field at~$q$:
\[ \overline{\xi}_{q}\coloneqq\frac{1}{\mu_{q}(D_{q}\scrL)}\int_{D_{q}\scrL}\dot{\gamma}(s)\:d\mu_{q}(\dot{\gamma}(s)) \:, \]
     where~$d\mu_{q}(\dot{\gamma}(s))$ is the induced volume measure on~$D_{q}\scrL$,$(I,\gamma)\in\scrL$ and~$\gamma(s)=q$. Using the vector field~$\overline{\xi}_{q}$, we define the regularizing vector field at~$q$ by
\[ u_{q}\coloneqq\frac{1}{\overline{\xi}_{q}^{2}}\overline{\xi}_{q} \:. \]
    Proceeding in an analogous way at each~$q\in N_{t_{0}+\Delta t}$ the timelike vector field~$u$ is extended to~$N_{t_{0}+\Delta t}$. We refer to this process as the {\bf{locally rigid dynamics}} of~$u$.
\end{Def}
Note that, because by assumption~$\Delta t$ can be arbitrarily small, it might seem more convenient to refer in the previous definition to a \emph{short-time} locally rigid dynamics of~$u$. However, by the following argument, the previous definition can be extended to construct a \emph{global} locally rigid dynamics: let us proceed iteratively in~$k$ time steps~$\Delta t$ with (i.e.\ $k=0,1,,\ldots,k_{\max}-1$):
\begin{enumerate}
    \item[{\rm{(i)}}] First, at each~$q\in N_{t_{0}+(k+1)\Delta t}$ we define
            \begin{equation*}
                 \overline{\xi}^{k+1}_{q}\coloneqq\frac{1}{\mu_{q}(D_{q}\scrL^{k})}\int_{D_{q}\scrL^{k}}\dot{\gamma}(s)d\mu_{q}(\dot{\gamma}(s)) \qquad \text{and} \qquad u_{q}\coloneqq\frac{1}{\big|\overline{\xi}^{k+1} \big|^{2}}\:
\overline{\xi}^{k+1}_{q}\:.
            \end{equation*}
    \item[{\rm{(ii)}}] In the second step, the set~$\scrL^{k+1}$ is defined as the collection of maximal null geodesics that satisfy the condition
    \begin{equation*}
        g_{\gamma(s)}\big(u_{\gamma(s)},\dot{\gamma}(s)\big)=1
    \end{equation*}
    with~$\gamma(s)\in N_{t_{0}+k \Delta t}$. We also define for each point~$p\in M$ the set~$D_{p}\scrL^{k+1}$:
    \begin{equation*}
        D_{p}\scrL^{k+1}\coloneqq \big\{ \dot{\gamma}(s)\; \big| \;(I,\gamma)\in \scrL \;\textrm{and}\; \gamma(s)=p \big\}
    \end{equation*}
    
\end{enumerate}
Hence, in this way, we have constructed a timelike vector field on the Cauchy hypersurfaces separated by a time step~$\Delta t$. Furthermore, if we now take the limit~$k_{\max} \to\infty$, a global timelike vector field~$u$ is obtained.

\begin{Remark}\label{rem:shape_DxL_locally_rigid} {\em{
    The name \emph{locally rigid} for the dynamics introduced in the previous definition can be motivated as follows. Consider a foliation~$(N_{t})_{t\in\mathbb{R}}$ with initial~$N_{t_{0}}$,~$\Delta t >0$ small enough and an initial timelike vector field~$u: N_{t_{0}} \rightarrow TM$. As discussed in Remark~\ref{rem:shape_DxL}, the hypersurface~$D_{p}\scrL$ has an ellipsoidal shape for any~$p\in N_{t_{0}}$ whereas for~$q\in N_{t_{0}+\Delta t}$,~$D_{q}\scrL$ is, in general, not an ellipsoid anymore. However, if~$u$ follows the locally rigid dynamics, then~$D_{q}\scrL'$ still has an ellipsoidal shape. In other words, the dynamics given by Definition~\ref{def:locally_rigid_dynamics} is 'locally rigid', in the sense that at each point~$q\in M$ there exists a timelike vector field~$u_{q}$ and an associated~$D_{q}\scrL$ (or~$D_{q}\scrL'$) which preserves an ellipsoidal shape (see the upper part of~\ref{fig1}).
}} \hfill\QEDrem \end{Remark}

Until now the locally rigid dynamics of the regularizing vector field~$u$ has been introduced. However, the interest in this vector field is that it should describe the evolution of regularized spinor fields in~$(M,g)$ according to a dynamics that deviates slightly from the Dirac dynamics. How is the locally rigid dynamics of the spinor fields linked to the locally rigid dynamics of~$u$? This is addressed in the following definition.

\begin{Def}[Locally rigid operator]\label{def:locally_rigid_dynamics_spinors}
    Let~$(M,g)$ be a globally hyperbolic spacetime and~$(N_{t})_{t\in\mathbb{R}}$ a foliation. Furthermore, let~$u : M \rightarrow TM$ be the regularizing vector field satisfying the locally rigid dynamics
(see Definition~\ref{def:locally_rigid_dynamics}). Moreover, let~$(A_{t})_{t\in\mathbb{R}}$ be the associated family of symmetrized Hamiltonians
(as introduced in~\eqref{eq:baryogenesis_operator}) and~$\H_{t_{0}}^{\varepsilon}\coloneqq \chi_{I}(A_{t_{0}})\H_{t_{0}}$. For each~$\Delta t>0$ and iterating in~$k\in \mathbb{N}$, we introduce
the following spaces of regularized spinor fields,
\[ \H^{\varepsilon}_{t_{0}+k\Delta t}\coloneqq \big( \chi_{I}(A_{t_{0}+k\Delta t}) U^{t_{0}+k\Delta t}_{t_{0}+(k-1)\Delta t}\cdot\cdot\cdot \chi_{I}(A_{t_{0}+\Delta t})U^{t_{0}+\Delta t}_{t_{0}} \big)(\H^{\varepsilon}_{t_{0}}) \:, \]
where for any~$t_{k}<t_{k+1}$, the operator~$U_{t_{k}}^{t_{k+1}} : \H_{t_{k}} \rightarrow \H_{t_{k+1}}$ is the unitary operator that describes the Dirac evolution of the regularized spinor fields. Moreover, the {\bf{locally rigid operator}}~$V_{t_{0}}^{t} : \H^{\varepsilon}_{t_{0}}\rightarrow C^{0}(N_{t},SM)\cap \H_{t}$ is defined by
\[ 
        V_{t_{0}}^{t}\coloneqq \lim_{k_{\max}\to\infty} \chi_{I}(A_{t}) U^{t}_{t-\Delta t}\cdot\cdot\cdot \chi_{I}(A_{t_{0}+\Delta t})U^{t_{0}+\Delta t}_{t_{0}} \quad \text{with} \qquad \Delta t\coloneqq \frac{t-t_{0}}{k_{\max}} \:. \]
\end{Def}

\par By construction, the locally rigid operator~$V_{t_{0}}^{t}$ describes the locally rigid evolution of the regularized spinor fields. The adiabatic projections have the advantage of implementing deviations from the Dirac dynamics. Moreover, they guarantee that the locally rigid operator~$V_{t_{0}}^{t}$ is unitary and thus that the scalar product~$(.|.)_{t}$ is independent of the chosen Cauchy hypersurface.

\section{The rate of baryogenesis in Minkowski spacetime}\label{sec:rate_of_baryo_Minkowski}
In this section we determine the baryogenesis rate in Minkowski spacetime in the case that the regularized spinor fields follow the locally rigid dynamics. In the first place, a dynamical equation describing the locally rigid evolution of the regularizing vector field~$u$ will be derived (cf. Lemma~\ref{lem:dyn_eq}). Afterward, we will construct an explicit expression for the projection valued measure~$\chi_{I}(H_{\eta})$ associated to the Dirac Hamiltonian (cf. Lemma~\ref{lem:spectral_measure}), we will show that certain products of operators (involving~$A_{t}$ and~$\chi_{I}(H_{\eta})$) are trace-class and prove a useful distributional equation (see Lemma~\ref{lem:sokhotski}). All these results will play an important role in order to prove   Theorem~\ref{theo:baryo_rate_Minkowski}, the main result of this section, where an approximate formula for the rate of baryogenesis is derived. In particular, we show that, even though baryogenesis typically does take place if we assume a non-trivial initial regularizing vector field~$u$ that evolves according to the locally rigid dynamics, it is only a second order correction to the Dirac dynamics.

Minkowski spacetime~$(\mathbb{R}^{1,3},\eta)$ in Cartesian coordinates is given by the manifold~$\mathbb{R}^{4}$ with the metric
\begin{equation*}
    \eta=dt^{2}-dx^{2}-dy^{2}-dz^{2}\:.
\end{equation*}
Note that instead of the coordinates~$(t,x,y,z)$ we will often use~$(x^{0},x^{1},x^{2},x^{3})$. With respect to Clifford multiplication, the Dirac representation will be used. In other words, Clifford multiplication will be given by the following matrices\
\[
\gamma_{\eta}^{0}=
\begin{pmatrix}
  \textrm{Id} & 0 \\ 0 & -\textrm{Id}
\end{pmatrix}
,\hspace{1cm}
\gamma_{\eta}^{j}=
\begin{pmatrix}
  0 & \sigma^{j} \\ -\sigma^{j} & 0
\end{pmatrix} \:,
\]

where~$\sigma^{j}$ denotes the Pauli matrices:

\[
\sigma^{1}=
\begin{pmatrix}
    0 & 1 \\ 1 & 0
\end{pmatrix}
,\hspace{1cm}
\sigma^{2}=
\begin{pmatrix}
   0 & -i \\ i & 0 
\end{pmatrix}
,\hspace{1cm}
\sigma^{3}=
\begin{pmatrix}
    1 & 0 \\ 0 & -1
\end{pmatrix}
\]
Clearly, with respect to the usual inner product~$\langle\cdot|\cdot\rangle_{\mathbb{C}^{4}}$ on~$\mathbb{C}^{4}$, the previous matrices satisfy that~$(\gamma^{0})^{\ast}=\gamma^{0}$ and~$(\gamma^{\mu})^{\ast}=-\gamma^{\mu}$ for~$\mu\in\{1,2,3\}$. Moreover, with respect to the inner product~$\Sl\cdot|\cdot\Sr_{S_{p}M}=\langle\gamma_{\eta0}\cdot|\cdot\rangle_{\mathbb{C}^{4}}$ on the spin space~$S_{p}M\cong \mathbb{C}^{4}$,
a direct computation (using that~$\gamma_{\eta}^{0}\gamma_{\eta}^{\mu}=-\gamma_{\eta}^{\mu}\gamma_{\eta}^{0}$) yields that all matrices~$\gamma_{\eta}^{j}$ are symmetric with respect to this inner product (i.e.\ $(\gamma^{j})^{\ast}=\gamma^{j}$). This was already mentioned to hold for general globally hyperbolic spacetimes~$(M,g)$ in Section~\ref{sec:preliminaries}.

In the following lemma we determine an approximate evolution equation for the regularizing vector field~$u$ considering that, starting from an arbitrary initial value (i.e.\ considering a general initial vector field on the Cauchy hypersurface~$N_{t_{0}}$) it follows the locally rigid dynamics given by Definition~\ref{def:locally_rigid_dynamics}.

\begin{Lemma}\label{lem:dyn_eq}
    Let~$(N_{t})_{t\in\mathbb{R}}$ be the foliation of Minkowski spacetime given by the level sets of the
    global time function~$t$. Given an initial time~$t_0$, we consider a positive
    and smooth function~$f\in C^{\infty}(\mathbb{R}^{3},\mathbb{R}_{>0})$ on the initial Cauchy surface~$N_{t_{0}}$ (with~$f_{p}=f (x,y,z)$) and an arbitrary constant~$\lambda\geq0$.
Assume that on this Cauchy surface, the regularizing vector field is given by
\begin{equation}\label{eq:u_initially}
    u_{p}=f_{p}\nu+\lambda X_{p} \qquad \text{for all~$p\in N_{t_{0}}$,} \:
\end{equation}
where~$\nu$ is the normal vector field to the Cauchy hypersurface~$N_{t_{0}}$ and~$X_{p}\in T_{p}M$ is an arbitrary spacelike vector field.
Then, the locally rigid dynamical equation of~$u$ 
at~$p \in N_{t_0}$ is given to first order in~$\lambda$ by
    \begin{equation}\label{eq:dynamical_eq_u_Minkowski}
         \frac{du_{p}}{dt}=-\textrm{grad}_{\delta}(f_{p}^{-1})+\frac{\lambda}{f_{p}^{3}}\left(\frac{f_{p}}{3}\textrm{div}_{\delta}\left(X_{p}\right)+4X_{p}(f_{p})\right)\nu+\mathcal{O}(\lambda^{2})
    \end{equation}
    where~$\delta$ denotes the Euclidean metric in~$\mathbb{R}^{3}$.
\end{Lemma}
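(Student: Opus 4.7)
The plan is to evaluate Definition~\ref{def:locally_rigid_dynamics} explicitly for a point $q = p + \Delta t\,\nu \in N_{t_0+\Delta t}$ close to $N_{t_0}$, expand the resulting quantities simultaneously in $\Delta t$ and $\lambda$, and read off the $t$-derivative in the limit $\Delta t \to 0^{+}$. First I would characterize $D_{q}\scrL$ concretely. Since parallel transport is trivial in Minkowski spacetime in Cartesian coordinates, equation~\eqref{eq:DqL_parallel_transport} tells us that a null vector $\dot{\gamma}$ at $q$ lies in $D_{q}\scrL$ iff the past-directed null geodesic through $q$ with tangent $-\dot{\gamma}$ meets $N_{t_0}$ at a point $p'$ satisfying $g_{p'}(u_{p'},\dot{\gamma})=1$. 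Writing the null direction as $\dot{\gamma} = \dot{\gamma}^{0}(1,\vec{n})$ with $\vec{n}\in S^{2}$, one has $\vec{p}\,' = \vec{q} - \Delta t\,\vec{n}$, and the constraint reduces to the scalar equation
\[
\dot{\gamma}^{0}\,\bigl(f(\vec{p}\,')-\lambda\,\vec{X}(\vec{p}\,')\cdot\vec{n}\bigr)=1,
\]
which determines $\dot{\gamma}^{0}$ as an explicit function of $\vec{n}$, $\Delta t$ and $\lambda$.

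Next I would Taylor expand this relation around $\vec{q}$ to obtain
\[
\dot{\gamma}^{0}(\vec{n})
= \frac{1}{f(\vec{q})}
+ \frac{\Delta t}{f(\vec{q})^{2}}\,\vec{n}\cdot\nabla f(\vec{q})
+ \frac{\lambda}{f(\vec{q})^{2}}\,\vec{X}(\vec{q})\cdot\vec{n}
+ O(\Delta t^{2},\lambda\Delta t,\lambda^{2}),
\]
and correspondingly the induced volume measure $d\mu_{q}$ on the perturbed surface $D_{q}\scrL$ picks up a Jacobian correction of the same order, obtained by pulling $D_{q}\scrL$ back to the round sphere via the parameterization $\vec n\mapsto \dot\gamma^{0}(\vec n)(1,\vec n)$. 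Both the numerator $\int_{D_{q}\scrL}\dot{\gamma}\,d\mu_{q}$ and the denominator $\mu_{q}(D_{q}\scrL)$ then reduce to integrals over $S^{2}\subset\mathbb{R}^{3}$. Using the standard spherical averages
\[
\int_{S^{2}} n^{\alpha}\,d\Omega = 0,\qquad
\int_{S^{2}} n^{\alpha}n^{\beta}\,d\Omega = \tfrac{4\pi}{3}\,\delta^{\alpha\beta},
\]
the spatial components of $\bar{\xi}_{q}$ collect contributions linear in $\nabla f$ and in $\vec{X}$, while odd symmetry in $\vec n$ kills the $O(\Delta t)$ correction to the time component and to $\mu_{q}(D_{q}\scrL)$.

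Finally I would assemble $u_{q}=\bar{\xi}_{q}/\bar{\xi}_{q}^{2}$, expanding $\bar{\xi}_{q}^{2}$ to the order needed (noting $\bar{\xi}_{q}^{2} = f(\vec q)^{-2}+O(\lambda,\Delta t)$), subtract $u_{p}=f(\vec p)\,\nu+\lambda\vec{X}(\vec p)$, divide by $\Delta t$ and send $\Delta t\to 0^{+}$. The first-order-in-$\Delta t$ remnant of the spatial part gives $-\textrm{grad}_{\delta}(f_{p}^{-1})=\nabla f/f^{2}$. The $\lambda$-cross-terms between the $\Delta t$-expansion of the integrand, the $\lambda$-expansion of the constraint, the $\lambda$-expansion of the measure, and the $\lambda\vec{X}(\vec p)$ subtraction at $p$ combine to yield the claimed $\nu$-component with coefficients $\tfrac{1}{3}\textrm{div}_{\delta}X$ and $4X(f)$, while higher-order terms fall into the $O(\lambda^{2})$ remainder.

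The main obstacle I anticipate is the careful bookkeeping of the induced measure $d\mu_{q}$ on the perturbed null surface $D_{q}\scrL$: to first order in $\lambda$ it acquires both an overall rescaling and an anisotropic Jacobian depending on $\vec X\cdot\vec n$, and these corrections enter the normalization denominator precisely at the order needed to produce the coefficient $\tfrac{1}{3}\textrm{div}_{\delta}X$. Getting the factor $4$ in front of $X(f)$ right will similarly require carefully tracking how the $\nabla f$ term from the displacement $\vec{p}\,'=\vec q-\Delta t\,\vec n$, the $\vec X$ term from the perturbation of $u_{p'}$, and the $\lambda$-correction to $\bar{\xi}_{q}^{2}$ conspire after the inversion $\bar\xi_q\mapsto \bar\xi_q/\bar\xi_q^{2}$; these pieces are individually of the same order and must be combined before any single one of them can be dropped.
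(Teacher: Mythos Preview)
Your strategy is essentially the paper's: parametrize $D_{q}\scrL$ over $S^{2}$ via $\dot{\gamma}=\dot{\gamma}^{0}(1,\vec{n})$, Taylor expand in $\Delta t$ and $\lambda$, and use the spherical averages of $n^{\alpha}$ and $n^{\alpha}n^{\beta}$. The paper proceeds a little less directly---it first reduces to the case $X=h\,\partial_{\mu}$ for a single $\mu$ (by linearity in $\lambda$), derives an exact expression for $du/dt$ as a quadrature over $D_{p}\scrL$ (equation~\eqref{eq:dif_eq_u} with the integrals $(I_{2})^{kn}$ and $(I_{3})^{knl}$), Lorentz-boosts to the rest frame of $u$ to evaluate these integrals, and finishes the $\lambda$-expansion with computer algebra. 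Your route avoids both the coordinate-by-coordinate reduction and the boost, which is cleaner.

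There is, however, one concrete gap. Your displayed expansion
\[
\dot{\gamma}^{0}(\vec{n})=\frac{1}{f}+\frac{\Delta t}{f^{2}}\,\vec{n}\cdot\nabla f+\frac{\lambda}{f^{2}}\,\vec{X}\cdot\vec{n}+O(\Delta t^{2},\lambda\Delta t,\lambda^{2})
\]
places the $\lambda\Delta t$ contribution in the remainder, but you need it explicitly. The constraint reads $\dot{\gamma}^{0}\bigl(f(\vec{p}\,')-\lambda\vec{X}(\vec{p}\,')\cdot\vec{n}\bigr)=1$ with $\vec{p}\,'=\vec{q}-\Delta t\,\vec{n}$, so Taylor expanding $\vec{X}(\vec{p}\,')$ around $\vec{q}$ produces a direct term $+\lambda\Delta t\,(\vec{n}\cdot\nabla)(\vec{X}\cdot\vec{n})$ inside the bracket, and hence a term proportional to $n^{\alpha}n^{\beta}\partial_{\alpha}X^{\beta}$ in $\dot{\gamma}^{0}$ itself. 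After cubing, integrating over $S^{2}$, and normalizing, this is precisely what yields the $\tfrac{1}{3}\,\mathrm{div}_{\delta}X$ piece of the $\nu$-component. By contrast, the \emph{products} of the two first-order terms you did display (the $\Delta t$ and the $\lambda$ term) contain only $(\vec{n}\cdot\nabla f)(\vec{X}\cdot\vec{n})$, and these integrate to multiples of $X(f)$ only. So if you work from the expansion as written, you will reproduce the $X(f)$ contribution but not $\mathrm{div}_{\delta}X$; you must carry $\dot{\gamma}^{0}$ to order $\lambda\Delta t$ before forming $b_{q}^{2}$ and $b_{q}^{3}$.
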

\begin{proof}
Before delving into the details of the proof, note that~$u_{p}$ can be rewritten in local coordinates as follows,
\begin{equation}\label{eq:u_linear_comb}
    u_{p}=u_{p,1}+u_{p,2}+u_{p,3} \hspace{1cm}\text{with} \hspace{1cm}
    \begin{cases}
\displaystyle    u_{p,1}= \frac{f_{p}}{3}\partial_{t}+\lambda X^{x}_{p}\partial_{x}\\[0.5em]
\displaystyle    u_{p,2}= \frac{f_{p}}{3}\partial_{t}+\lambda X^{y}_{p}\partial_{y}\\[0.5em]
\displaystyle    u_{p,3}= \frac{f_{p}}{3}\partial_{t}+\lambda X^{z}_{p}\partial_{z}\\
\end{cases}
\end{equation}
where we used that, in local coordinates, the following relations hold,
\[ \nu_{p}=\partial_{t} \qquad \text{and} \qquad X_{p}=X^{\alpha}_{p}\partial_{\alpha} \quad \text{with} \quad X^{\alpha}_{p}=X^{\alpha}_{p}(x,y,z) \:. \]
Then, by linearity, computing~$\frac{du_{p}}{dt}$ reduces to determining dynamical equations of the form~$\frac{du_{p,\alpha}}{dt}$, with~$\alpha\in\{1,2,3\}$. Computing the latter is much easier as now each~$u_{p,\alpha}$ is the sum of a timelike vector field and a spacelike vector field which points exclusively in one direction. Moreover, we want to determine the dynamical equation linearly in~$\lambda$. For this reason, without loss of generality, in this proof it suffices to consider the case that~$X_{p}$ only points in one direction, i.e.\ we consider that, in local coordinates, it holds that
\[ u_{p}=f_{p}\partial_{t}+\lambda h_{p}\partial_{\mu} \qquad \text{for all~$p\in N_{t_{0}}$} \]
with~$\mu\in\{1,2,3\}$, and where~$h\in C^{\infty}(\mathbb{R}^{3},\mathbb{R}_{>0})$ is a function of all three coordinates~$(x,y,z)$ . At the end of the proof, when rewriting the coordinate-dependent dynamical equation for~$u_{p}$ in a more abstract and geometric way, we will, however, take into account that the spacelike vector field can point in an arbitrary direction (as in expression~\eqref{eq:u_linear_comb}). However, it is important to note that the previous argument would not work if we wanted to obtain the expansion of the dynamical equation to second (or higher) order in~$\lambda$ (i.e.\ terms which are not linear in~$\lambda$ anymore).

\par In order to compute the locally rigid dynamical equation of~$u$, we need to use some of the formulas derived in the proof of~\cite[Theorem A.2]{baryogenesis}. For the sake of completeness, we begin by presenting the first part of their proof until we reach the expressions needed for the derivation of the dynamical equation.

In the first part of the proof, the set~$\scrL$ and~$D_{p}\scrL$ (for an arbitrary~$p\in M$) are determined explicitly. Let~$(N_{t})_{t\in\mathbb{R}}$ be the foliation of Minkowski spacetime given by the level sets of the global time function~$t$. Following Definition~\ref{def:DxL},~$\scrL$ is the set of maximal null geodesics~$\gamma : I \rightarrow \mathbb{R}^{1,3}$ (together with the interval~$I\subset \mathbb{R}$) that satisfy the relation
\begin{align}\label{eq:LemmaDxL}
    g(u_{p},\xi_{p}(n))=1
\end{align}
whenever~$p=\gamma(s)\in N_{t_{0}}$, and where~$\xi_{p}(n)$ denotes the tangent vector of~$\gamma$ at the point~$p=\gamma(s)$. The reason that we denote the tangent vector by~$\xi_{p}(n)$ is due to the fact that any null vector~$\dot{\gamma}(s)$ can also be written in the computationally convenient form (see~\cite[Appendix A]{dgc})
\begin{align}\label{eq:def_xip}
    \dot{\gamma}(s)= b_{p}(n)\zeta(n)=: \xi_{p}(n).
\end{align}
where~$n\in \mathbb{S}^{2}$,~$\zeta(n)=(1,n)$ and~$b_{p} : \mathbb{S}^{2}\rightarrow (0,\infty)$. Note that equations~\eqref{eq:LemmaDxL} and~\eqref{eq:def_xip} imply that for any~$p\in N_{t_{0}}$, the function~$b_{p}(n)$ must be given by
\begin{align}\label{eq:fp_initially}
    b_{p}(n)=\frac{1}{g(u_{p},\zeta(n))}\;.
\end{align}
This equation fully determines the set~$\scrL$. Moreover, at any point~$q\in M$ (not necessarily~$q\in N_{t_{0}}$),~$D_{q}\scrL$ can be written as the following set:
\[ D_{q}\scrL\coloneqq \{\xi_{q}(n)=\dot{\gamma}(s)\; | \;(I,\gamma)\in \scrL \;\textrm{and}\; \gamma(s)=q\} \:. \]
In particular, the measure on~$D_{q}\scrL$ presents the simple form (see~\cite[equation (A.3)]{dgc}):
\[ d\mu_{q}= b_{q}(n)^{2} \:d\mu_{\mathbb{S}^{2}} \:, \]
where~$d\mu_{\mathbb{S}^{2}}$ is the Lebesgue measure on the sphere~$\mathbb{S}^{2}$.
Whereas~\eqref{eq:fp_initially} fixes the explicit value of~$b_{p}(n)$ (and thus~$\xi_{p}(n)$) for all~$p\in N_{t_{0}}$, how does~$b_{q}(n)$ for a point~$q\not\in N_{t_{0}}$ look like? Note that this is a very important question in order to compute the dynamical equation of~$u$. The value of~$b_{q}(n)$ and thus~$D_{q}\scrL$ are determined by the Dirac dynamics of the spinor fields. The Dirac dynamics is easily implemented in Minkowski spacetime because null geodesics are given by straight lines. Let~$q\in N_{t'}$ and~$p'\in N_{t_{0}}$ be two points with coordinates~$(t',0,0,0)$ and~$(t_{0},-t' n^{1},-t' n^{2},-t' n^{3})$ and~$t'>t_{0}$. Then, the Dirac dynamics is simply implemented by demanding that 
\[ b_{q}(n) = b_{p'}(n) \:. \]
This finishes the first part of the proof in which the main objects that will be used in the computations are presented.

\par In the second part of the proof we determine the differential equation that governs the locally rigid dynamics of the regularizing vector field~$u$. Without loss of generality we choose~$t_{0}=0$. Moreover, consider the points~$p, q, p'\in M$ with coordinates~$(0,0,0,0),(\Delta t,0,0,0)$ and~$(0,-\Delta t n^{1},-\Delta t n^{2},-\Delta t n^{3})$, respectively (with~$n\in \mathbb{S}^{2}$ a unit vector). 
    \par In the first place, for a fixed unit vector~$n\in\mathbb{S}^{2}$ the linear Taylor expansion in~$\Delta t$ of~$b_{p'}(n)=b_{(0,-\Delta t n)}(n)$ is
    \begin{equation}\label{eq:exp_fp}
        b_{p'}(n)=b_{p}(n)+\left.\frac{\partial b_{(0,-\Delta t n)}}{\partial \Delta t}\right|_{\Delta t=0}\Delta t + \mathcal{O}\big( (\Delta t)^{2} \big)\;,
    \end{equation}
    where
    \begin{align}
        b_{p}(n)&=\frac{1}{g(u_{p},\zeta(n))}\label{eq:fp}\\
        \left.\frac{\partial b_{(0,-\Delta t n)}}{\partial \Delta t}\right|_{\Delta t=0}&=-\frac{1}{[g(u_{p},\zeta(n))]^{2}} \left(\left.\frac{\partial}{\partial \Delta t}g(u_{(0,-\Delta t n)},\zeta(n))\right)\right|_{\Delta t=0}\nonumber\\
        &=\frac{1}{[g(u_{p},\zeta(n))]^{2}} n^{\mu}g(\partial_{\mu}u_{p},\zeta(n))\nonumber\\
        &=\frac{1}{|g(u_{p},\zeta(n))|^{2}} \; \zeta^i \:\pi(\nu^\perp)^j_i\: g(\zeta(n),\partial_j u_{p} ) \label{eq:partial_fp}\;.
    \end{align}
    Here~$\zeta^{j}$ denotes the components of the vector~$\zeta(n)$ and for an arbitrary timelike vector~$v$ we denote the projections to the span of~$v$ and to its orthogonal complement by
\[ \pi(v)^i_j := \frac{v^i v_j}{v^2} \qquad \text{and} \qquad
\pi \big( v^\perp \big)^i_j := \delta^i_j -\frac{v^i v_j}{v^2} \:. \]
Substituting~\eqref{eq:fp} and~\eqref{eq:partial_fp} into~\eqref{eq:exp_fp}, we get the following expression,
    \[ b_{p'}(n) = \frac{1}{g(u_{p},\zeta(n))} 
+ \frac{\Delta t}{[g(u_{p},\zeta(n))]^2} \; \zeta^i \:\pi(\nu^\perp)^j_i\: \big\la \zeta(n),\, \partial_j u_{p} \big\ra 
+ \O\big( (\Delta t)^2 \big) \:. \]
Next, using that
\[ b_{p}(n) = \frac{1}{g(u_{p},\zeta(n))} \qquad \text{and} \qquad
\xi_{p}^{j}=b_{p}(n)\zeta^{j}  \,\:, \]
we obtain
\begin{align*}
b_{p'}(n) &= b_{p}(n) \,\bigg( 1
+ \Delta t\zeta^ib_{p}(n) \:\pi(\nu^\perp)^j_i\: g(\zeta(n),\, \partial_j u_{p} ) \bigg)
+ \O\big( (\Delta t)^2 \big) \\
&=b_{p}(n) \,\bigg( 1
+ \frac{\Delta t}{b_{p}(n)} \xi_{p}^{i} \; \pi(\nu^\perp)^j_i\: g(\xi_{p}(n),\, \partial_j u_{p} ) \bigg)
+ \O\big( (\Delta t)^2 \big) \\
&= b_{p}(n) \,\bigg(1 + \frac{\Delta t}{g(\xi_{p}(n), \nu)}\; \xi_{p}^i \:\pi(\nu^\perp)^j_i\:  g (\xi_{p},\, \partial_j u_{p}) \bigg)
+ \O\big( (\Delta t)^2 \big) \:,
\end{align*}
where in the last step we used that~$b_{p}(n)=g(\nu,\xi_{p}(n))$. As discussed in the first part of the proof, we now implement the Dirac dynamics by demanding that~$b_{q}(n)=b_{p'}(n)$. We thus obtain
    \begin{align}
        &d\mu_{q}(\xi_{q})=b^{2}_{q}(n)d\mu_{\mathbb{S}^{2}}(n)=b^{2}_{p'}(n)d\mu_{\mathbb{S}^{2}}(n)\nonumber\\
        &=b_{p}^{2}(n)\left[1+2\Delta t\Cintegralwithn+\mathcal{O}((\Delta t)^{2})\right]d\mu_{\mathbb{S}^{2}}(n)\label{eq:dmuq}\\
        &\xi_{q}d\mu_{q}(\xi_{q}) =\xi_{p'}d\mu_{p'}(\xi_{p'})=b^{3}_{p'}(n)\zeta(n)d\mu_{\mathbb{S}^{2}}(n)\nonumber\\
        &=b_{p}^{3}(n)\left[1+3\Delta t\Cintegralwithn+\mathcal{O}((\Delta t)^{2})\right]\xi_{p}(n)d\mu_{\mathbb{S}^{2}}(n)\label{eq:xiq_dmuq} .
    \end{align}
    Note that in~\eqref{eq:dmuq} and~\eqref{eq:xiq_dmuq} only spatial derivatives of~$u_{p}$ come into play because~$\pi(\nu^{\perp})^{0}_{j}=0$ for all~$j\in\{1,2,3\}$. From now on, for notational simplicity, we drop explicit reference to the normal vector~$n$.
    \par By the chain rule we get:
    \begin{align}
        \overline{\xi}_{p}&\coloneqq\frac{1}{\mu_{p}(D_{p}\scrL)}\int_{D_{p}\scrL}\xi_{p} d\mu_{p}(\xi_{p})\nonumber\\
        \frac{d\overline{\xi}_{p}}{dt}&=-\frac{\overline{\xi}_{p}}{\mu_{p}(D_{p}\scrL)}\frac{d}{dt}(\mu_{p}(D_{p}\scrL))+\frac{1}{\mu_{p}(D_{p}\scrL)}\int_{D_{p}\scrL}\frac{d}{dt}(\xi_{p} d\mu_{p}(\xi_{p}))\label{eq:chainrule1}
    \end{align}
    We compute each term individually using~\eqref{eq:dmuq} and~\eqref{eq:xiq_dmuq} and combine them with expression~\eqref{eq:chainrule1} in order to obtain
    \begin{align*}
       &\frac{d}{dt}(\mu_{p}(D_{p}\scrL))=\lim_{\Delta t\to 0}\frac{\mu_{p+\Delta t}(D_{p+\Delta t}\scrL)-\mu_{p}(D_{p}\scrL)}{\Delta t}=\lim_{\Delta t\to 0}\frac{\mu_{q}(D_{q}\scrL)-\mu_{p}(D_{p}\scrL)}{\Delta t}\\
       &=\lim_{\Delta t\to 0}\frac{1}{\Delta t}\left[\int_{\mathbb{S}^{2}}b_{p}^{2}\left(1+2\Delta t \Cintegral+\mathcal{O}((\Delta t)^{2})\right)d\mu_{\mathbb{S}^{2}}-\int_{\mathbb{S}^{2}}\!\!b_{p}^{2}d\mu_{\mathbb{S}^{2}}\right]\\
       &=\int_{D_{p}\scrL}2\Cintegral d\mu_{p}(\xi_{p})\\
       &\frac{d}{dt}(\xi_{p} d\mu_{p}(\xi_{p}))=\lim_{\Delta t\to 0}\frac{\xi_{q} d\mu_{q}(\xi_{q})-\xi_{p} d\mu_{p}(\xi_{p})}{\Delta t}\\
       &=\lim_{\Delta t\to 0}\frac{1}{\Delta t}\left[\int_{\mathbb{S}^{2}}b_{p}^{3}\left(1+3\Delta t \Cintegral+\mathcal{O}((\Delta t)^{2})\right)d\mu_{\mathbb{S}^{2}}-\int_{\mathbb{S}^{2}} \!\! b_{p}^{3}d\mu_{\mathbb{S}^{2}}\right]\\
       &=\int_{D_{p}\scrL}3\Cintegral \xi_{p}d\mu_{p}(\xi_{p})\\
       &\frac{d\overline{\xi}_{p}}{dt}=\int_{D_{p}\scrL}\Cintegral \left[3\xi_{p}-2\frac{u_{p}}{u_{p}^{2}}\right]d\mu_{p}(\xi_{p})|\:,
    \end{align*}
    where we also used that~$\overline{\xi}_{p}=\frac{u_{p}}{|u_{p}|^{2}}$. 
    
    For ease in notation, from now on we omit the subscript~$p$, which denotes the point at which the vector fields are evaluated.
    As~$u=\frac{\overline{\xi}}{|\overline{\xi}|^{2}}$, applying the chain rule and using the previous computation, we get the following dynamical equation for~$u$,
    \begin{align*}
\frac{du}{dt} &=\frac{1}{|\overline{\xi}|^{2}}\frac{d\overline{\xi}}{dt}-2\frac{\overline{\xi}}{|\overline{\xi}|^{4}}g\left(\overline{\xi},\frac{d \overline{\xi}}{dt}\right) \\
&= u^{2}\frac{d\overline{\xi}}{dt}-2\frac{u}{|\overline{\xi}|^{2}}g\left(\overline{\xi},\frac{d \overline{\xi}}{dt}\right)=u^{2}\left[\frac{d\overline{\xi}}{dt}-2ug\left(\overline{\xi},\frac{d\overline{\xi}}{dt}\right)\right]\\
        &=\frac{1}{\mu_{p}(D_{p}\scrL)}\int_{D_p\scrL} \Cintegralwithoutp\left[3u^{2}\xi+2u-6uu^{i}\xi_{i}\right]d\mu(\xi) \:,
    \end{align*}
    where we used that
\[ u^{2}=\frac{1}{\overline{\xi}^{2}} \qquad \text{and} \qquad
2g(\overline{\xi},3\xi-2\overline{\xi})=6\overline{\xi}^{i}\xi_{i}-4 \overline{\xi}^{2}=\frac{1}{u^2}\left[6u^{i}\xi_{i}-4\right] \:. \]
This is a system of quasi-linear partial differential equations of first order which can be rewritten for each component of~$u$ as follows
    \begin{align}\label{eq:dif_eq_u}
        & \frac{du^{l}}{dt}=\Constantone\left[3 u^{i}u_{i} (I_{3})^{knl}+2u^{l}(I_{2})^{kn}-6u^{l}u^{i}g_{si}(I_{3})^{nks}\right]\partial_{j}u^{m}\;,
    \end{align}
    where the integrals~$I_{3}$ and~$I_{2}$ are given by
    \begin{align}\label{eq:dynamical_eq_u}
        &(I_{3})^{knl}=\volDpL\intDpL \frac{1}{g(\xi,\nu)}\xi^{k}\xi^{n}\xi^{l}d\mu(\xi)\\
        &(I_{2})^{kn}=\volDpL\intDpL \frac{1}{g(\xi,\nu)}\xi^{k}\xi^{n}d\mu(\xi) \:.
    \end{align}
The expressions~\eqref{eq:dif_eq_u} are the differential equations that determine the locally rigid dynamical equation of~$u$. This finishes the second part of the proof. 
\par In the last part of the proof, the integrals~$(I_{2})^{kn}$ and~$(I_{3})^{knl}$ will be computed and a simpler locally rigid dynamical equation for~$u$ is obtained by expanding~\eqref{eq:dif_eq_u} in powers of~$\lambda$. It is only at this stage that the particular form considered for the regularizing vector field~$u$ on the initial Cauchy hypersurface will come into play.
\par The integrals~$(I_{3})^{knl}$ and~$(I_{3})^{knl}$ are easier to compute after applying a (proper orthochronous) Lorentz transformation~$\Lambda\in \SO^{+}(3,1)$ such that, in local coordinates,~$u'=\Lambda u\overset{!}{=}|u|(1,0)$. In this reference frame,~$f'(n)=\frac{1}{|u|}$ and thus~$\xi'^{j}=\frac{1}{|u|}\zeta^{j}$. Hence, the integrals simplify to
    \begin{align}\label{eq:boosted_integrals}
        &(I'_{3})^{knl}=\frac{1}{|u|^{4}}\int_{\mathbb{S}^{2}}\frac{1}{g(\zeta,\nu)}\zeta^{k}\zeta^{n}\zeta^{l}d\mu_{\mathbb{S}^{2}}\\
        &(I'_{2})^{kn}=\frac{1}{|u|^{3}}\int_{\mathbb{S}^{2}}\frac{1}{g(\zeta,\nu)}\zeta^{k}\zeta^{n}d\mu_{\mathbb{S}^{2}}\:,
    \end{align}
where the measure on the sphere is~$d\mu_{\mathbb{S}^{2}}=d\varphi dv$ with~$v=\cos{\theta}
\in (-1,1)$ and~$\varphi \in [0, 2 \pi)$. Moreover, $\zeta(n)=(1,n)=(1,\sqrt{1-v^{2}}\cos{\varphi},\sqrt{1-v^{2}}\sin{\varphi},v)$. The original integrals are obtained by the transformations
\[ (I_{3})^{knl}=(\Lambda^{-1})^{k}_{i}(\Lambda^{-1})^{n}_{j}(\Lambda^{-1})^{l}_{k}(I_{3}')^{ijk}
\qquad \text{and} \qquad (I_{2})^{kn}=(\Lambda^{-1})^{k}_{i}(\Lambda^{-1})^{n}_{j}(I'_{2})^{ij} \:. \]
The integrals given by~\eqref{eq:boosted_integrals} are straightforward to compute\footnote{This
computation was carried out with the help of computer algebra.
The corresponding {\textsf{Wolfram Mathematica}} worksheet is included as an ancillary file to the arXiv submission of this paper.}. In order to obtain the dynamical equation of~$u$ we study separately the following cases:
    \begin{enumerate}[leftmargin=2em]
        \item \underline{$u=(f_p \;,\;\lambda h\;,\;0\;,\;0)$:} \\ In the first case, the series expansion of equation~\eqref{eq:dynamical_eq_u} to first order in~$\lambda$ yields
\[ \frac{du}{dt}=\bigg(\frac{4\lambda h}{f^{3}}\partial_{x}(f)+\frac{\lambda}{3f^{2}}\partial_{x}(h),\;\frac{\partial_{x}f}{f^{2}},\;\frac{\partial_{y}f}{f^{2}},\;\frac{\partial_{z}f}{f^{2}} \bigg)
+\mathcal{O} \big(\lambda^{2} \big)\:. \]
        \item \underline{$u=(f\;,\;0\;,\;\lambda h\;,\;0)$:} \\
        In the second case, the series expansion of equation~\eqref{eq:dynamical_eq_u} to first order in~$\lambda$ yields
        \begin{equation*}
       \frac{du}{dt}=\bigg(\frac{4\lambda h}{f^{3}}\partial_{y}(f)+\frac{\lambda}{3f^{2}}\partial_{y}(h),\;\frac{\partial_{x}f}{f^{2}},\;\frac{\partial_{y}f}{f^{2}},\;\frac{\partial_{z}f}{f^{2}}\bigg) + \mathcal{O}\big(\lambda^{2} \big)\:.
        \end{equation*}
        \item \underline{$u=(f\;,\;0\;,\;0\;,\;\lambda h)$:}\\
        Finally, the series expansion of equation~\eqref{eq:dynamical_eq_u} to first order in~$\lambda$ yields
        \begin{equation*}
       \frac{du}{dt}=\bigg(\frac{4\lambda h}{f^{3}}\partial_{z}(f)+\frac{\lambda}{3f^{2}}\partial_{z}(h),\;\frac{\partial_{x}f}{f^{2}},\;\frac{\partial_{y}f}{f^{2}},\;\frac{\partial_{z}f}{f^{2}}\bigg) +\mathcal{O} \big( \lambda^{2} \big)\:.
        \end{equation*}
       \end{enumerate}
Hence, for a general initial timelike vector field~$u= f_{p} \partial_{t}+\lambda X_{p}$ (where~$X_{p}$ is an arbitrary spacelike vector field, recall the discussion at the beginning of this proof), the locally rigid dynamical equation becomes
\begin{align*}
         \frac{du_{p}}{dt}&=\left(4\frac{\partial_{\mu}(f_{p})}{f_{p}^{3}}h_{p}+\frac{\partial_{\mu}(h_{p})}{3f_{p}^{2}}\right)\lambda\partial_{t}+\frac{\partial_{x}f_{p}}{f^{2}_{p}}\partial_{x}+\frac{\partial_{y}f_{p}}{f_{p}^{2}}\partial_{y}+\frac{\partial_{z}f_{p}}{f_{p}^{2}}\partial_{z}+\mathcal{O}(\lambda^{2})\\
         &=-\textrm{grad}_{\delta}(f_{p}^{-1})+\frac{\lambda}{f_{p}^{3}}\left(\frac{f_{p}}{3}\textrm{div}_{\delta}\left(X_{p}\right)+4X_{p}(f_{p})\right)\nu+\mathcal{O}(\lambda^{2})\;.
\end{align*}
This concludes the proof.
\end{proof}
\par Note that, while the parameter~$\lambda$ scales the spatial component of the initial regularizing vector field~$u$ in the direction of the spacelike vector~$X$, the function~$f$ scales the normal component of~$u$ to the initial Cauchy hypersurface at each point.
\begin{Remark} {\em{
    A surprising feature of the previous evolution equation is that, even in the simplest case in which~$\lambda=0$, the locally rigid dynamics of~$u$ is still non trivial, i.e.\
\[ \frac{du_{p}}{dt}=-\textrm{grad}_{\delta}(f_{p}^{-1}) \:. \]
    In other words, even if initially the regularizing vector field is proportional to~$\partial_{t}$, the locally rigid evolution of~$u$ gives rise to a non-zero spatial component of 
the regularizing vector field: unless~$u|_{N_{t_{0}}}$ is constant (i.e.\ $\partial_{\mu}f_{p}=0$ for all~$\mu\in\{1,2,3\}$ and~$p\in N_{t_{0}}$), at a later time,~$u$ is not parallel to~$\partial_{t}$ anymore. }} \hfill\QEDrem
\end{Remark}
Note that in the following lemmata we will assume that, on the initial Cauchy hypersurface, the regularizing vector field is given by
\begin{equation}\label{eq:u_initially_simpler}
    u_{p}=(1+\lambda\Tilde{f}_{p})\nu+\lambda X_{p} \qquad \text{for all~$p\in N_{t_{0}}$}
\end{equation}
with~$\Tilde{f}$ a smooth positive function of the spatial coordinates. Note that this ansatz for the regularizing vector field agrees with that of the previous lemma (cf.~\eqref{eq:u_initially}) if we choose~$f= 1 + \lambda \Tilde{f}$. The main advantage of expression~\eqref{eq:u_initially_simpler} is that it makes a perturbative analysis (and the involved computations) simpler. In particular, by perturbing around~$\lambda=0$ we recover a well known scenario: if the foliation~$(N_{t})_{t\in\mathbb{R}}$ corresponds to the level sets of the global time function~$t$ and~$\lambda=0$, the regularizing vector field~$u$ agrees with~$\partial_{t}$ and the symmetrized Hamiltonian~$A_{t}$ with the Dirac Hamiltonian~$H_{\eta}$. Hence, perturbations around~$\lambda=0$ correspond to corrections to the Dirac dynamics. 
\par In the following lemma we construct the projection valued measure associated to the Dirac Hamiltonian. This operator will play an important role in the formula for the rate of baryogenesis due to the locally rigid evolution of the spinor fields.

\begin{Lemma}\label{lem:spectral_measure}
Let~$\varepsilon>0$ and consider the interval~$I\coloneqq (-1/\varepsilon,\omega)$ with~$\omega\leq-m$ and let~$\chi_{I}(H_{\eta})$ denote the projection valued measure associated to the Dirac Hamiltonian~$H_{\eta}$. Then, $\chi_{I}(H_\eta)$ is given by the following integral operator: 
\beq \label{chiIH}
\big( \chi_{I}(H_\eta) \psi \big)(x) = -2\pi \int_{\R^3} 
P^{\varepsilon,\omega}(x,y)\:\gamma_{\eta}^{0}\: \psi(y)\: d^3y \:,
\eeq
where ~$\psi\in C^{\infty}_{\textrm{sc}}(M,SM)\cap \H_{m}$, ~$p\in M$ is a point with coordinates~$x=(t,x^{\mu})$, $y=(0,y^{\mu})$ and ~$P^{\varepsilon,\omega}(x,y)$ is again the distributional kernel~\eqref{Pepsdef_omega}. Moreover, $\chi_{I}(H_{\eta})$ can be written as follows:
\begin{equation}\label{eq:pvmDiracHamiltonian}
    \chi_{I}(H_{\eta})=\int_{-1/\varepsilon}^{\omega}F_{\omega}(H_{\eta})d\omega
\end{equation}
where~$F_{\omega}(H_{\eta})=\frac{d}{d\omega}\big(\chi_{(-\frac{1}{\varepsilon},\omega)}(H_{\eta}) \big)$ is an integral operator whose integral kernel~$F_{\omega}(x,y)$ satisfies the following expressions:
\begin{align}
    &F_{\omega}(x,x)=-\frac{1}{(2\pi)^{2}}\Big((\omega+\gamma_{\eta0}m)\sqrt{\omega^{2}-m^{2}}\Big)\Theta(1+\varepsilon\omega)\label{eq:F_omega}\\
    &(\partial_{\mu}F_{\omega}(x,y))|_{y=x} =\frac{i}{3(2\pi)^{2}}\Big(\gamma_{\eta\mu}\gamma_{\eta0}(\omega^{2}-m^{2})^{3/2}\Big)\Theta(1+\varepsilon\omega)\label{eq:partial_F_omega}
\end{align}
\end{Lemma}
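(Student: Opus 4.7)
The plan is to derive \eqref{chiIH} from the representation formula of Remark~\ref{rem:fermionic_proj} together with the spectral decomposition of~$H_\eta$ in momentum space, and then obtain \eqref{eq:F_omega}--\eqref{eq:partial_F_omega} by explicit Fourier computation at coincident points. Concretely, for $\psi\in C^\infty_{\sc}(M,SM)\cap\H_m$ the representation formula reads
\[ \psi(x) \;=\; 2\pi\int_{N_{t_0}} k_m(x,y)\,\gamma_\eta^0\,\psi(y)\,d^3y, \]
and the causal fundamental solution has a momentum representation of the form $(\gamma_\eta^j k_j+m)\,\delta(k^2-m^2)\operatorname{sign}(k^0)$ (up to the conventional factor). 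Since $e^{-ikx}$ on the mass shell is an eigenstate of~$H_\eta$ with eigenvalue~$k^0$, the spectral projection~$\chi_I(H_\eta)$ with $I=(-1/\varepsilon,\omega)$ acts by inserting $\Theta(\omega-k^0)\Theta(1+\varepsilon k^0)$ in momentum space. Comparing with \eqref{Pepsdef_omega}, this transforms the negative-frequency part of~$k_m$ into $-P^{\varepsilon,\omega}$, establishing \eqref{chiIH}. Equation~\eqref{eq:pvmDiracHamiltonian} is then the fundamental theorem of calculus applied to the spectral resolution, with $F_\omega(H_\eta):=\partial_\omega\chi_{(-1/\varepsilon,\omega)}(H_\eta)$.

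For the explicit coincident-point formulas I differentiate \eqref{Pepsdef_omega} in~$\omega$, using $\partial_\omega\Theta(\omega-k^0)=\delta(\omega-k^0)$ to collapse the $k^0$-integration to $k^0=\omega$. Then $\delta(k^2-m^2)$ becomes $\delta(|\vec k|-\sqrt{\omega^2-m^2})/(2\sqrt{\omega^2-m^2})$, supported only when $|\omega|\ge m$, consistent with $\omega\le -m$. At $y=x$ the plane-wave phase equals unity, so for \eqref{eq:F_omega} the spherical average annihilates the odd term $-\vec\gamma\cdot\vec k$ in $\gamma_\eta^j k_j|_{k^0=\omega} = \gamma_\eta^0\omega-\vec\gamma\cdot\vec k$; the remaining factor $\gamma_\eta^0\omega+m$, multiplied by $-2\pi\gamma_\eta^0$ on the right and simplified via $(\gamma_\eta^0)^2=\mathds{1}$, yields the claimed expression. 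For \eqref{eq:partial_F_omega}, applying $\partial_\mu$ brings down an extra factor $-ik_\mu=ik^\mu$ (spatial~$\mu$); the angular integrals $\int d\Omega\,k^\mu=0$ then kill the $\gamma_\eta^0\omega$ and $m$ contributions, while $\int d\Omega\,k^\mu k^\nu=\tfrac{4\pi}{3}|\vec k|^2\delta^{\mu\nu}$ isolates the $\vec\gamma\cdot\vec k$ term and produces a single~$\gamma_\eta^\mu$. The radial integration then gives $(\omega^2-m^2)^{3/2}$, and multiplication by $-2\pi\gamma_\eta^0$ delivers~\eqref{eq:partial_F_omega}.

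The main obstacle is the first step, namely rigorously identifying~$\chi_I(H_\eta)$ with the integral operator of kernel $-2\pi\,P^{\varepsilon,\omega}(\cdot,\cdot)\gamma_\eta^0$ and pinning down the precise normalization constant; this requires matching the $2\pi$ in the representation formula against the convention used in~\eqref{Pepsdef_omega}. Once that identification is in place, the remainder reduces to the elementary angular and radial Fourier integrals described above.
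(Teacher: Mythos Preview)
Your computation of the coincident-point kernels~\eqref{eq:F_omega}--\eqref{eq:partial_F_omega} is exactly what the paper does in its Appendix: differentiate in~$\omega$ to collapse~$k^0$, use spherical symmetry to kill the odd angular integrals, and evaluate the surviving radial integral. No difference there.

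For the identification~\eqref{chiIH}, however, you and the paper take genuinely different routes. You argue via momentum-space diagonalization: plane waves on the mass shell are formal eigenstates of~$H_\eta$ with eigenvalue~$k^0$, so~$\chi_I(H_\eta)$ acts by inserting~$\chi_I(k^0)$ into the Fourier integrand of the representation formula. The paper instead defines, for every bounded Borel function~$g$, an integral operator~$g(H_\eta)$ with kernel built from~$g(k^0)(\gamma_\eta^j k_j+m)\delta(k^2-m^2)\epsilon(k^0)$, and then verifies directly that~$g\mapsto g(H_\eta)$ is a~$^\ast$-algebra homomorphism sending~$\mathrm{id}$ to~$H_\eta$. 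The multiplicativity~$g(H_\eta)h(H_\eta)=(gh)(H_\eta)$ is checked by an explicit kernel computation using an identity from~\cite[Lemma~1.2.8]{cfs}, and~$\mathrm{id}(H_\eta)=H_\eta$ is verified via the representation formula you also invoke. Once this is done, uniqueness of the functional calculus forces~$\chi_I(H_\eta)$ to be the claimed integral operator, and the sign and~$2\pi$ constant fall out automatically.

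Your approach is shorter and conceptually transparent, but the step ``$\chi_I(H_\eta)$ acts by inserting~$\chi_I(k^0)$'' leans on the Fourier diagonalization of~$H_\eta$ which you do not spell out; the plane-wave eigenstate heuristic is not itself a proof. The paper's approach is longer but fully self-contained and pins down the normalization without any matching of conventions---precisely the obstacle you flag as the main difficulty.
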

\begin{proof}
We start by proving expression~\eqref{chiIH}. In particular, we will explicitly construct the functional calculus of the self-adjoint operator 
\begin{equation*}
    H_{\eta} : C^{\infty}_{\textrm{sc}}(M,SM)\cap\H_{m}\rightarrow C^{\infty}_{\textrm{sc}}(M,SM)\cap \H_{m}
\end{equation*}
and afterward apply it to the characteristic function~$\chi_{I}$. It follows then that~$\chi_{I}(H_{\eta})$ is a projection valued measure.

Consider the algebra~$\mathcal{A}$ of bounded Borel functions from the spectrum~$\sigma(H_{\eta})$ to~$\mathbb{C}$. For a given function~$g\in\mathcal{A}$ and using coordinates~$x=(t,x^{\mu}), y=(0,y^{\mu})$ we define the following integral operator 
\[ \big( g(H_\eta) \psi \big)(x) = \int_{\R^3} g(H_\eta)(x,y)\: \psi(y)\:d^3y \;,\] 
with integral kernel
\begin{align}
    g(H_\eta)(x, y) &= 2 \pi 
\int \frac{d^{4}k} {(2\pi)^{4}}\: g(k^0) \:\big( \gamma_{\eta j}k^{j}+ m \big)\: \delta(k^{2}-m^2)\label{eq:func_calc_definition}\:
\epsilon(k^{0})\: e^{-ik(x-y)}\:\gamma_{\eta}^{0} \\
    &=: 2\pi Q_{g}(x,y)\gamma^{0}_{\eta}\:,\nonumber
\end{align}
where, as in Remark~\ref{ex:hard_cut_off}, $k(x-y)$ is short notation for the (Minkowski) inner product between~$k$ and~$x-y$. In the rest of the proof, we proceed to show that the map 
\begin{equation*}
    \Phi : \mathcal{A}\rightarrow \mathcal{L}(C^{\infty}_{\textrm{sc}}(M,SM)\cap \H_{m}), \quad g \mapsto g(H_{\eta})\;,
\end{equation*}
defines the functional calculus associated to the self-adjoint operator~$H_{\eta}$. Note that in the previous formula~$d^{3}y=dy^{1}dy^{2}dy^{3}$ denotes the Lebesgue measure on the spatial slice~$N_{t}\cong \mathbb{R}^{3}$.
\begin{itemize}[leftmargin=1.5em]
    \item The map~$\Phi$ is an algebra-homomorphism, i.e.\ for all~$\alpha\in\mathbb{C}$ and functions~$g,h\in\mathcal{A}$ it holds that
    \begin{align}
    & (g+\alpha h)(H_\eta) = g(H_{\eta})+\alpha h(H_{\eta}) \label{eq:func_calc_additive}\\ 
    & g(H_\eta)\, h(H_\eta) = (gh)(H_\eta) \label{eq:func_calc_multiplic}
    \end{align}
    The first expression follows directly from linearity of the integral, whereas the second one requires a longer computation. Consider points with coordinates~$x=(t,x^{\mu}), y=(t_{0},y^{\mu}), z=(t_{1},z^{\mu})$. For simplicity, we set~$t_{0}=0$. Then, for a spinor field~$\psi$ it holds that
    \begin{align*}
        \big(g(H_\eta)(h(H_{\eta})\psi )\big)(x)&=2\pi \int_{\mathbb{R}^{3}}Q_{g}(x,y)\gamma^{0}_{\eta} (h(H_{\eta})\psi)(y)d^{3}y\\\
        &=(2\pi)^{2}\int_{\mathbb{R}^{3}}\int_{\mathbb{R}^{3}}Q_{g}(x,y)\gamma^{0}_{\eta}Q_{h}(y,z)\gamma^{0}_{\eta}\psi(z)d^{3}yd^{3}z\\
        \big( (gh)(H_{\eta})\psi \big)(x) &= 2\pi \int_{\mathbb{R}^{3}} Q_{gh}(x,z)\gamma^{0}_{\eta}\psi(z)d^{3}z
    \end{align*}
    Hence, equation~\eqref{eq:func_calc_multiplic} only holds provided the following expression is satisfied:
    \begin{equation}\label{eq:Qgh}
        \int_{\mathbb{R}^{3}}Q_{g}(x,y)\gamma_{\eta}^{0}Q_{h}(y,z)d^{3}y= \frac{1}{2\pi} Q_{gh}(x,z)
    \end{equation}
    We now proceed to show that the previous equation holds, inspired by the proof of expression (1.2.24) in~\cite{cfs}. We start by rewriting the left-hand side of equation~\eqref{eq:Qgh} as follows
    \begin{align}
        &\int_{\mathbb{R}^{3}}Q_{g}(x,y)\gamma_{\eta}^{0}Q_{h}(y,z)d^{3}y\nonumber=\int_{\mathbb{R}^{3}}d^{3}y e^{iy(k-k')}\int \frac{d^{4}k}{(2\pi)^{4}}e^{-ikx}\int\frac{d^{4}k'}{(2\pi)^{4}}e^{ik'z}\widetilde{Q}_{g}(k)\gamma_{\eta}^{0}\widetilde{Q}_{h}(k')\nonumber\\
        &=\int \frac{d^{4}k}{2\pi}\delta^{3}(k^{\mu}-k'^{\mu})e^{-ikx}\int\frac{d^{4}k'}{(2\pi)^{4}}e^{ik'z}\widetilde{Q}_{g}(k)\gamma_{\eta}^{0}\widetilde{Q}_{h}(k')\nonumber\\
        &= \int \frac{d^{4}k}{(2\pi)^{4}}\int\frac{dk'^{0}}{2\pi}\Big[e^{ik'z-ikx}\widetilde{Q}_{g}(k)\gamma_{\eta}^{0}\widetilde{Q}_{h}(k')\Big]\bigg|_{k'=(k'^{0},k^{\mu})}\:, \label{eq:QkQq}
    \end{align}
    where
    \begin{align*}
        &\widetilde{Q}_{g}(k)\coloneqq g(k^0)\big( \gamma_{\eta j}k+ m \big) \delta(k^{2}-m^2)\epsilon(k^{0})\\
        &\widetilde{Q}_{h}(k')\coloneqq h(k'^0)\big( \gamma_{\eta j}k'+ m \big) \delta(k'^{2}-m^2)\epsilon(k'^{0})
    \end{align*}
    The computation in~\cite[Lemma 1.2.8]{cfs} shows that the following identity holds 
    \begin{align*}
       &\big( \gamma_{\eta j}k^{j}+ m \big)\gamma_{\eta0}\big( \gamma_{\eta j}k'^{j}+ m \big)\delta(k^{2}-m^2)\delta(k'^{2}-m^2)\\
       &=\epsilon(k^{0})\big( \gamma_{\eta j}k^{j}+ m \big)\delta(k'^{0}-k^{0}) \delta(k^{2}-m^{2})
    \end{align*}
    Inserting the previous expression in equation~\eqref{eq:QkQq} with~$k'=(k'^{0},k^{\mu})$, we obtain
\begin{align*}
     &\int_{\mathbb{R}^{3}}Q_{g}(x,y)\gamma_{\eta}^{0}Q_{h}(y,z)d^{3}y\nonumber\\
     &=\int \frac{d^{4}k}{(2\pi)^{4}}\int e^{i(k'z-kx)}\frac{dk'^{0}}{2\pi}\big( \gamma_{\eta j}k^{j}+ m \big)\delta(k'^{0}-k^{0}) \delta(k^{2}-m^{2})\epsilon(k'^{0})g(k^0)h(k'^0)\nonumber\\
     &=\frac{1}{2\pi}\int \frac{d^{4}k}{(2\pi)^{4}}e^{-ik(x-z)}\big( \gamma_{\eta j}k^{j}+ m \big) \delta(k^{2}-m^{2})\epsilon(k^{0})(gh)(k^0)=\frac{1}{2\pi}Q_{gh}(x,z)\nonumber
\end{align*}
    \item Moreover, we now show that for the identity map~$g\equiv\textrm{Id} : \mathbb{R}\to \mathbb{R} , \lambda \mapsto \lambda$, it holds that~$\textrm{id}(H_{\eta})=H_{\eta}$. As discussed in Remark~\ref{rem:fermionic_proj}, a solution to the Cauchy problem of the Dirac equation can be represented (see~\cite[Theorem 13.4.2]{intro} for the proof in Minkowski spacetime) at any point~$p$ with coordinates~$(x^{j})_{j=0,\ldots,3}$ by the following formula:
    \begin{equation*}
        \psi(x)=2\pi\int k_{m}(x, y) \gamma_{\eta0}\psi(y)d^{3}y
    \end{equation*}
    For simplicity, we set~$x=(t,x^{\mu})$ and~$y=(0, y^{\mu})$. Furthermore, in Minkowski spacetime the integral kernel~$k_{m}(x, y)$ is (cf.~\cite[equations (2.1.10), (2.1.13) and ((2.1.14))]{cfs} and~\cite[Section 16]{intro}):
    \begin{equation}\label{eq:fund_solution_Dirac_Minkowski}
        k_{m}(x, y)=\int \frac{d^{4}k} {(2\pi)^{4}}\:\:\big( \gamma_{\eta j}k^{j}+ m \big)\: \delta(k^{2}-m^2)\:
\epsilon(k^{0})\: e^{-ik(x-y)} \;.
    \end{equation}
    Hence, for~$g \equiv \textrm{id}$, we recover the following integral kernel and corresponding integral operator
    \begin{align*}
        & \textrm{id}(H_{\eta})(x,y) = 2 \pi 
\int \frac{d^{4}k} {(2\pi)^{4}}\: k^0 \:\big( \gamma_{\eta j}k^{j}+ m \big)\: \delta(k^{2}-m^2)\:\epsilon(k^{0})\: e^{-ik(x-y)}\:\gamma_{\eta}^{0}\\
        & =2 \pi i 
\int \frac{d^{4}k} {(2\pi)^{4}}\: \:\big( \gamma_{\eta j}k^{j}+ m \big)\: \delta(k^{2}-m^2)\:\epsilon(k^{0})\: \partial_{t}\big(e^{-ik(x-y)}\big)\:\gamma_{\eta}^{0}= 2\pi i\big(\partial_{t}k_{m}(x, y)\big) \\
&(\textrm{id}(H_{\eta})\psi)(x)= \int_{\R^3} \textrm{id}(H_{\eta})(x,y)\: \psi(y)\:d^3y=2\pi i\int_{\R^3} (\partial_{t}k_{m}(x, y))\: \psi(y)\:d^3y \\
&=i\partial_{t}\psi(x)=(H_{\eta}\psi)(x)\:,
    \end{align*}
    where in the last step we used that~$H_{\eta}=i\partial_{t}$ (cf. Remark~\ref{rem:Hamiltonian}). On the other hand, note that it almost directly follows from equation~\eqref{eq:fund_solution_Dirac_Minkowski} that for the constant function~$g : \mathbb{R}\to \mathbb{R}, \lambda \mapsto 1$ the associated integral operator satisfies that~$g(H_{\eta})=\textrm{id}$.

    \item Finally, we also show that for any~$g$ it holds that
    \begin{equation*}
        \overline{g}(H_{\eta})=g(H_{\eta})^{\ast} \:,
    \end{equation*}
i.e.\ the map~$\Phi$ is a~$^{\ast}$-algebra homomorphism. Recall the discussion at the beginning of this section on the symmetry of the maps~$\gamma_{\eta}^{j}$ with respect to the spin space inner product~$\Sl\cdot|\cdot\Sr_{S_{x}M}$. The adjoint operator~$g(H_{\eta})^{\ast}$ is
    \begin{align*}
        (\psi|g(H_{\eta})\varphi)_{t}&=\int_{\mathbb{R}^{3}} \Sl\psi(x)|\gamma_{\eta0}(g(H_{\eta})\varphi)(x)\Sr_{S_{x}M} d^{3}x\\
        &=2\pi \int_{\mathbb{R}^{3}}\int_{\mathbb{R}^{3}} \Sl\gamma_{\eta0}\psi(x)|Q_{g}(x,y)\gamma_{\eta0}\varphi(y)\Sr_{S_{x}M} d^{3}y \;d^{3}x\\
        &=2\pi \int_{\mathbb{R}^{3}}\int_{\mathbb{R}^{3}} \Sl Q_{\overline{g}}(y,x)\gamma_{\eta}^{0}\psi(x)|\gamma_{\eta0}\varphi(y)\Sr_{S_{y}M} d^{3}y \;d^{3}x \\
        &=\int_{\mathbb{R}^{3}} \Sl (\overline{g}(H_{\eta})\psi)(y)|\gamma_{\eta0}\varphi(y)\Sr_{S_{y}M} d^{3}y =(\overline{g}(H_{\eta})\psi|\varphi)_{t}=(g(H_{\eta})^{\ast}\psi|\varphi)_{t}\:,
    \end{align*}
    where~$Q_{\overline{g}}(y,x)$ is the adjoint of~$Q_{g}(x,y)$ with respect to~$\Sl\cdot|\cdot\Sr_{S_{x}M}$.
\end{itemize}
Thus we obtain the functional calculus associated to the self-adjoint operator~$H_{\eta}$. Considering now the interval~$I=(-\frac{1}{\varepsilon},\omega)$, the characteristic function~$g=\chi_{I}$ and using that~$\chi_{I}(k^{0})=\Theta(1+\varepsilon k^{0})\Theta(-k^{0}+\omega)$ we recover expression~\eqref{chiIH} with~$Q_{g}=P^{\varepsilon,\omega}(x,y)$ as in equation~\eqref{Pepsdef_omega}. The~$(-1)$ factor in~\eqref{chiIH} is simply due to the sign function~$\epsilon(k^{0})$.

That~$F_{\omega}(H_{\eta})$ is an integral operator follows directly from the fact that~$\chi_{I}(H_{\eta})$ is an integral operator and that~$F_{\omega}(H_{\eta})=\frac{d}{d\omega}\big(\chi_{(-\frac{1}{\varepsilon},\omega)}(H_{\eta}))$
    \begin{align*}
        (F_{\omega}(H_{\eta})\psi)(x)&=\Big(\Big(\frac{d}{d\omega}\chi_{I}(H_{\eta})\Big)\psi\Big)(x)=-2\pi\int_{\mathbb{R}^{3}}\Big(\frac{d}{d\omega}P^{\varepsilon,\omega}(x,y)\Big)\gamma_{\eta0}\psi(y)d^{3}y\\
        &=:\int_{\mathbb{R}^{3}}F_{\omega}(x,y)\psi(y)d^{3}y\:,
    \end{align*}
i.e.\ $F_{\omega}(x,y)=-2\pi\frac{d}{d\omega}P^{\varepsilon,\omega}(x,y)\gamma_{\eta0}$. The proof that the integral kernel~$F_{\omega}(x,y)$ satisfies expressions~\eqref{eq:F_omega} and~\eqref{eq:partial_F_omega} can be found in the Appendix.
\end{proof}

\begin{Lemma}\label{lem:trace_class}
Let~$(N_{t})_{t\in\mathbb{R}}$ be the foliation of Minkowski spacetime given by the level sets of the
    global time function~$t$. Given an initial time~$t_0$, we consider a compact subset~$V\subset N_{t_{0}}$, a positive
    and smooth function~$\Tilde{f}\in C^{\infty}(\mathbb{R}^{3},\mathbb{R}_{>0})$ on the initial Cauchy surface~$N_{t_{0}}$ (with~$\Tilde{f}_{p}=\Tilde{f} (x,y,z)$) and a spacelike vector field~$X$. Furthermore, assume that~$\Tilde{f}_{p}=1$ for all~$p\in N_{t_{0}}\setminus V$ and that the vector field~$X$ vanishes outside the compact subset~$V$. Consider the regularizing vector field~$u : M \rightarrow TM$ constructed from the locally rigid evolution (cf. Lemma~\ref{lem:dyn_eq}) of the initial vector field 
\begin{equation*}
    u_{p}=(1+\lambda\Tilde{f}_{p})\nu+\lambda X_{p} \qquad \text{for all~$p\in N_{t_{0}}$}
\end{equation*}
    (where~$\nu$ is again the normal vector field to~$N_{t_{0}}$ and~$\lambda\geq0$ is an arbitrary constant). 
    Then, $\dot{A}_{t}F_{\omega'}(H_\eta)$, $Q(\omega',\omega'')\coloneqq\Delta A(t)F_{\omega'}(H_{\eta})\Delta A(t)F_{\omega''}(H_{\eta})$ and~$\frac{d}{dt}Q(\omega',\omega'')$ (with~$\Delta A(t)\coloneqq A_{t}-H_{\eta}$) are trace-class operators for all~$\omega',\omega''\in\sigma(H_{\eta})$(where~$\dot{A}_{t}\coloneqq\frac{dA_{t}}{dt}$ is expanded only to linear order in~$\lambda$ and~$H_{\eta}$ denotes the Dirac Hamiltonian, cf. expression~\eqref{eq:Dirac_Hamiltonian}).
\end{Lemma}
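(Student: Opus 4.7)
My plan uses two inputs: (i) the coefficients of $\Delta A(t) := A_t - H_\eta$ and $\dot A_t$ are smooth and compactly supported on each Cauchy leaf $N_t$, and (ii) the explicit integral representation of $F_\omega(H_\eta)$ from Lemma~\ref{lem:spectral_measure}. For~(i), on $N_{t_0}$ the deviation $u-\nu$ vanishes outside the compact set $V$, so $\Delta A(t_0)$ and $\dot A_{t_0}$ are first-order differential operators with coefficients supported in $V$; the locally rigid evolution (Definition~\ref{def:locally_rigid_dynamics}) propagates $u$ along null geodesics which, in Minkowski spacetime, travel at finite speed, so the supports remain contained in compact sets $V_t \subset N_t$ at every later time, and differentiating equation~\eqref{eq:dynamical_eq_u_Minkowski} yields the same conclusion for $\dot A_t$. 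This allows me to write $\dot A_t = a^j(x)\partial_j + b(x)$ with $a^j, b\in C^\infty_c(V_t)$, and analogously for $\Delta A(t)$, and to express each composite as an integral operator against the translation-invariant kernel $F_\omega(x-y)$.

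To establish trace class, I would factorize each product into two Hilbert--Schmidt pieces by inserting a bounded spectral cutoff $\chi_J(H_\eta)$ for a bounded interval $J \ni \omega'$ inside the spectrum, using the identity $F_{\omega'}(H_\eta) = \chi_J(H_\eta) F_{\omega'}(H_\eta)$:
\begin{equation*}
\dot A_t F_{\omega'}(H_\eta) \;=\; \bigl(\dot A_t\,\chi_J(H_\eta)\bigr)\,F_{\omega'}(H_\eta).
\end{equation*}
The first factor is Hilbert--Schmidt because the kernel of $\chi_J(H_\eta)$, obtained by Fourier-inverting the indicator of a bounded momentum shell, decays polynomially as $|x-y|^{-2}$, so together with the compact $x$-support supplied by $a^j, b$ the estimate $\|K\|_{\mathrm{HS}}^2 \leq |V_t|\int |\partial^\alpha \chi_J(H_\eta)(z)|^2\,d^3z < \infty$ holds. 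The second factor is then controlled by pairing with the first via Parseval applied to the explicit momentum-space formulas~\eqref{eq:F_omega}--\eqref{eq:partial_F_omega}. For $Q(\omega',\omega'') = \Delta A(t)F_{\omega'}(H_\eta)\Delta A(t)F_{\omega''}(H_\eta)$ the argument is actually cleaner: the second $\Delta A(t)$ localises the intermediate variable $z$ to $V_t$, so
\begin{equation*}
Q(x,y) = \int_{V_t} \bigl(\Delta A(t) F_{\omega'}(H_\eta)\bigr)(x,z)\,\bigl(\Delta A(t) F_{\omega''}(H_\eta)\bigr)(z,y)\,d^3z,
\end{equation*}
and the composition can be factored as a product of two operators, each a compactly supported differential operator acting on a spectral projection, both of which are Hilbert--Schmidt by the estimate above; the time derivative $\frac{d}{dt}Q(\omega',\omega'')$ splits by the product rule into three terms of the same structure and is treated identically.

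The main obstacle is the trace-norm bound for the isolated factor $F_{\omega'}(H_\eta)$, whose translation-invariant kernel decays only as $|x-y|^{-1}$ and which is therefore not Hilbert--Schmidt on $L^2(N_t, SM)$. The resolution is structural: in every one of the three composites each $F_\omega(H_\eta)$-factor is flanked on at least one side by a compactly supported differential operator (from $\Delta A(t)$ or $\dot A_t$) and on the other by the bounded spectral projection $\chi_J(H_\eta)$; the combination of compact position support on one side and compact momentum support on the other yields the required Hilbert--Schmidt estimate via Parseval applied to the explicit momentum-space formulas of Lemma~\ref{lem:spectral_measure}, from which trace class follows by the standard criterion that the product of two Hilbert--Schmidt operators is trace class.
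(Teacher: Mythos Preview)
Your route via Hilbert--Schmidt factorization with an inserted spectral cutoff $\chi_J(H_\eta)$ is different from the paper's, which works directly with the integral kernel of the composite and invokes a Mercer/Stone--Weierstra\ss\ argument (following Lax) to deduce super-polynomial singular-value decay from smoothness and compact support of the kernel. You correctly identify the obstacle---the $|x-y|^{-1}$ decay of the kernel of $F_{\omega'}(H_\eta)$---but your resolution does not close it. In the factorization $\dot A_t F_{\omega'}(H_\eta) = \bigl(\dot A_t \chi_J(H_\eta)\bigr)\, F_{\omega'}(H_\eta)$ the first factor is indeed Hilbert--Schmidt, but $F_{\omega'}(H_\eta)$ is the spectral-measure \emph{density}: in Fourier variables it is multiplication by a matrix times $\delta\bigl(|\vec k|^2 - (\omega'^2 - m^2)\bigr)$, hence not even bounded on $\H_t$. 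A product (HS)$\times$(unbounded) is not trace class. Your ``structural'' claim that every $F_\omega$ is flanked on both sides fails for the terminal factor in each of the three composites: in $\dot A_t F_{\omega'}$ and in $Q=\Delta A\, F_{\omega'}\,\Delta A\, F_{\omega''}$ the rightmost $F$ has nothing to its right, and inserting another $\chi_J$ there is vacuous since $F_{\omega'}\chi_J=F_{\omega'}$. Concretely, the kernel $K(x,y)=(\dot A_t)_x F_{\omega'}(x,y)$ has compact $x$-support but only $|x-y|^{-1}$ decay in $y$, so $\int_V dx\int_{\R^3}|K(x,y)|^2\,d^3y$ diverges like $\int r^{-2}\,r^2\,dr$, and $K$ is not even Hilbert--Schmidt; the Parseval step cannot repair this.

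The paper avoids any factorization through bounded pieces: it treats the composite directly as an integral operator with smooth kernel having compact support in the $x$-variable and then runs the Lax argument on $K^*K$. That mechanism does not require exhibiting $F_{\omega'}(H_\eta)$ as a bounded factor, which is exactly where your approach breaks down.
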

\begin{proof}
In the first place, note that for~$p\in N_{t_{0}}\setminus V$, $u_{p}=\partial_{t}$ and so ~$\frac{du_{p}}{dt}=\frac{dA_{t}}{dt}=0$. For~$p\in V$, using the dynamical equation~\eqref{eq:dynamical_eq_u_Minkowski}, the series expansion to first order in~$\lambda$ of~$\frac{dA_{t}}{dt}$ yields
      \begin{align}
        A_{t} &= \frac{1}{2}\:\big\{u^{t},-i\gamma_{\eta t}\gamma_{\eta}^{\alpha}\partial_{\alpha}+m\gamma_{\eta0}\big\}
        +\frac{i}{2} \:\big\{u^{\alpha},\partial_{\alpha} \big\}\nonumber\\
        \dot{A}_{t}&\coloneqq \frac{dA_{t}}{dt} =\frac{1}{2} \:\Big\{\frac{du^{t}}{dt},-i\gamma_{\eta t}\gamma_{\eta}^{\alpha}\partial_{\alpha}+m\gamma_{\eta0}\Big\}+\frac{i}{2} \:\Big\{\frac{du^{\alpha}}{dt},\partial_{\alpha} \Big\}\nonumber\\
        &=\frac{\lambda}{2}\: \Big\{4\frac{(\partial_{\mu}f)}{f^{3}}h+\frac{(\partial_{\mu} h)}{3f^{2}},H_{\eta} \Big\}
        -\frac{i}{2}\:\big\{\delta^{\alpha\beta} (\partial_{\alpha}f^{-1}),\partial_{\beta} \big\}+\mathcal{O}(\lambda^{2})\nonumber\\
        &=\frac{\lambda}{2}\: \bigg[\left(8\frac{(\partial_{\mu} f)}{f^{3}}h +\frac{2\,(\partial_{\mu}h)}{3f^{2}}\right)H_{\eta}+ \bigg( H_{\eta} \Big( 4\frac{(\partial_{\mu} f)}{f^{3}}h+\frac{(\partial_{\mu} h)}{3f^{2}}\Big) \bigg) \bigg] \nonumber\\
        &-\frac{i}{2}\left(2\delta^{\alpha\beta}(\partial_{\alpha}f^{-1})\:\partial_{\beta}+ \big(\Delta_{\delta} f^{-1} \big)\right)+\mathcal{O}(\lambda^{2}) \label{eq:dotA}\:,
    \end{align}
where~$\Delta_{\delta}$ is the Laplacian operator (in Euclidean space~$(\mathbb{R}^{3},\delta)$) and ~$f_{p}\coloneqq 1+\lambda\Tilde{f}_{p}$. Here, as discussed at the beginning of the proof of Lemma~\ref{lem:dyn_eq}, we assumed again without loss of generality that, in local coordinates, the initial regularizing vector field is given by
\[ u_{p}=f_{p}\partial_{t}+\lambda h_{p}\partial_{\mu} \qquad \text{for all~$p\in N_{t_{0}}$}\:, \] with~$\mu\in\{1,2,3\}$, where~$h\in C^{\infty}(\mathbb{R}^{3},\mathbb{R}_{>0})$ is a function of all three~$(x,y,z)$ coordinates.

Clearly, the coefficients of the differential operator~$\dot{A}_{t}$ are smooth and, by the previous discussion, compactly supported in~$N_{t}$ as~$u=\partial_{t}$ outside a compact subset~$V\subset N_{t}$. On the other hand, as proven in Lemma~\ref{lem:spectral_measure}, ~$F_{\omega'}(H_\eta)$ is an integral operator with smooth kernel. Therefore, the operator product~$\dot{A}_{t}F_{\omega'}(H_\eta)$ is an integral operator of the form
\[ \bigg( \Big(\frac{d}{dt}A_{t} \Big) F_{\omega'}(H_{\eta}) \psi \bigg)(x) = (K \psi)(x) := \int_{\R^3} K(x,y)\:
\psi(y)\: d^3y \]
with~$x=(t,x^{\mu}), y=(0,y^{\mu})$, $\psi\in C^{\infty}_{\textrm{sc}}(M,SM)\cap\H_{m}$ and a smooth, compactly supported (in~$V$) kernel~$K(x,y)$.
Now we can follow the procedure in~\cite[Section~30, Theorem~13]{lax}:
The operator~$K^* K$ is positive and has again a smooth, compactly supported kernel.
It then follows from Mercer's theorem that this operator is trace class and thus that it has a purely discrete spectrum.
We order its eigenvalues in descending order~$\lambda_1 \geq \lambda_2 \geq \cdots \geq 0$.
Moreover, approximating the kernel by polynomials (using the Stone-Weierstra{\ss} theorem),
one concludes as in~\cite[Section~30, Theorem~13]{lax} that the eigenvalues~$\lambda_n$ of~$K^* K$
decay faster than any polynomial (i.e.\ for any~$b>0$ there is a
constant~$c>0$ such that~$\lambda_n \leq c\, n^{-b/2}$ for all~$n \in \N$). 
A decay faster than quadratic (i.e.\ $b>4$) implies that~$K$ is trace class (and thus also the product operator~$\dot{A}_{t}F_{\omega'}(H_{\eta})$) as the decay of the eigenvalues of~$K^{\ast}K$ guarantees that the series given by the sum over all singular values of~$K$ (i.e.\ the non-zero eigenvalues of~$\sqrt{K^{\ast}K}$) converges:
\begin{equation*}
    \sum_{j=0}^{\infty} s_{n}(K)\leq C \sum_{n=0}^{\infty}\frac{1}{n^{\beta}}<\infty
\end{equation*}
where~$\beta:=b/4>1$, $C$ is a constant and~$s_{n}(K)$ denotes the $n^\text{th}$ singular value of~$K$ (note that from the previous bound on~$\lambda_{n}$, we have that~$s_{n}(K)=\sqrt{\lambda_{n}}\leq \sqrt{c}n^{-b/4}$).
\par By the same arguments it follows that the product operator~$Q_{\omega',\omega''}$ and~$\frac{d}{dt}Q(\omega',\omega'')$, with~$\omega',\omega''\in\sigma(H_{\eta})$ and~$\Delta A(t)\coloneqq A_{t}-H_{\eta}$, are trace-class: outside the compact set~$V$ the operator~$A_{t}$ agrees with~$H_{\eta}$, so~$\Delta A(t)$ is a differential operator with smooth compactly supported coefficients (and the same holds for~$\frac{d}{dt}\Delta A(t)=\dot{A}_{t}$). Moreover, $F_{\omega'}(H_{\eta})$ is an integral operator, so the argument given above to show that
an operator is trace-class also applies to~$Q(\omega',\omega'')$ and~$\frac{d}{dt}Q(\omega',\omega'')$.
\end{proof}

The following lemma gives us a distributional equation which will be helpful in order to determine the second order term of the rate of baryogenesis.

\begin{Lemma}\label{lem:sokhotski}
    Consider an arbitrary smooth and compactly supported function~$f\in C^{\infty}_{0}(\mathbb{R}\times\mathbb{R},\mathbb{C})$. Then, the following identity holds in the distributional sense.
    \begin{align*}
    &\slim_{\delta\to0^{+}}\int_{-\infty}^{\infty}d\omega'\int_{-\infty}^{\infty}d\omega''f(\omega',\omega'')\Big[\frac{\partial}{\partial\omega'}\Big(\frac{1}{\omega'-\omega-is\delta}\Big)\Big(\frac{1}{\omega''-\omega-is\delta}\Big)\Big]\Big|^{s=1}_{s=-1}\nonumber\\
    &=2\pi i \int_{-\infty}^{\infty} d\omega'\: \Big( \partial_{\omega}f(\omega,\omega') + 
\partial_{\omega'}f(\omega',\omega) \Big)\: \frac{1}{\omega'-\omega}
    \end{align*}
    where~$\omega\in\mathbb{R}$, $\partial_{\omega}\coloneqq\frac{\partial}{\partial\omega}$ and~$\partial_{\omega'}\coloneqq\frac{\partial}{\partial\omega'}$.
\end{Lemma}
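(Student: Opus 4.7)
My plan is to combine an integration-by-parts in $\omega'$ with the Sokhotski-Plemelj formula applied independently in each of the two variables $\omega'$ and $\omega''$.

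\emph{Step 1 (Integration by parts).} Since the factor $1/(\omega''-\omega-is\delta)$ does not depend on $\omega'$, the derivative $\partial/\partial\omega'$ acts only on $1/(\omega'-\omega-is\delta)$. Using the compact support of $f$, I integrate by parts in $\omega'$ so that no boundary contribution appears; this yields, for every $s\in\{\pm 1\}$ and every $\delta>0$,
\[
I(s,\delta)\;:=\;-\int d\omega'\,d\omega''\;\partial_{\omega'}f(\omega',\omega'')\;\frac{1}{\omega'-\omega-is\delta}\;\frac{1}{\omega''-\omega-is\delta}.
\]

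\emph{Step 2 (Sokhotski-Plemelj in each variable).} Next I take $\delta\to 0^+$ by applying the Sokhotski-Plemelj identity
\[
\lim_{\delta\to 0^+}\frac{1}{x-is\delta}\;=\;\mathrm{PV}\frac{1}{x}+is\pi\,\delta(x)
\]
separately in the two independent variables $\omega'$ and $\omega''$. The resulting product of Cauchy-type distributions is well defined as a tensor product on $\mathbb{R}^2$, because each factor depends on a different variable. When paired against the smooth, compactly supported test function $\partial_{\omega'}f$, Fubini lets me interchange the limit with the double integration, so the limit in Step~1 becomes the pairing of $\partial_{\omega'}f$ with the tensor product
\[
\Big(\mathrm{PV}\tfrac{1}{\omega'-\omega}+is\pi\,\delta(\omega'-\omega)\Big)\otimes\Big(\mathrm{PV}\tfrac{1}{\omega''-\omega}+is\pi\,\delta(\omega''-\omega)\Big).
\]

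\emph{Step 3 (Take the jump in $s$).} Expanding the tensor product gives four contributions. The two terms that are even in $s$ (namely $\mathrm{PV}\otimes\mathrm{PV}$ and $\delta\otimes\delta$) cancel under the difference $[\,\cdot\,]\big|^{s=1}_{s=-1}$, while the two mixed $\delta\otimes\mathrm{PV}$ and $\mathrm{PV}\otimes\delta$ terms are odd in $s$ and therefore doubled, each acquiring the factor $2i\pi$. Using the delta functions to collapse one of the two integrations and relabeling the surviving dummy variable combines the two contributions into the stated symmetric sum $\partial_\omega f(\omega,\omega')+\partial_{\omega'}f(\omega',\omega)$ integrated against the principal-value kernel $1/(\omega'-\omega)$, yielding the right-hand side of the claimed identity.

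\emph{Main obstacle.} The only genuinely delicate point is to justify the interchange of $\slim_{\delta\to 0^+}$ with the double integration, i.e.\ to make sense of the product of the two limiting Cauchy-type distributions. This is mild here precisely because the two factors depend on different variables: the product is a legitimate tensor product of distributions in $\mathscr{D}'(\mathbb{R}^2)$, and the passage to the limit reduces, by Fubini applied to $\partial_{\omega'}f\in C^\infty_0(\mathbb{R}^2)$, to two independent one-dimensional Sokhotski-Plemelj limits, each of which is standard. Once that point is in place, Steps~1--3 are formal manipulations and the identity follows.
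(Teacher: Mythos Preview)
Your argument is correct and gives a genuinely different route from the paper's. The paper first applies the partial-fraction identity
\[
\frac{1}{(\omega'-\omega-is\delta)(\omega''-\omega-is\delta)}=\frac{1}{\omega''-\omega'}\Big(\frac{1}{\omega'-\omega-is\delta}-\frac{1}{\omega''-\omega-is\delta}\Big),
\]
then uses Sokhotski--Plemelj on each simple pole, takes the jump in $s$ (leaving only a difference of delta functions divided by $\omega''-\omega'$), and only afterwards integrates by parts in $\omega'$. Your approach reverses the order---integrating by parts first, then applying Sokhotski--Plemelj independently in the two variables via the tensor-product structure, and finally using an $s$-parity argument to kill the $\mathrm{PV}\otimes\mathrm{PV}$ and $\delta\otimes\delta$ terms. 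Your route is arguably cleaner: it avoids the partial-fraction step and the attendant need to discuss the apparent singularity at $\omega'=\omega''$ (which the paper handles separately by a limiting argument). The paper's route, on the other hand, reduces everything to a single-variable Sokhotski--Plemelj limit and never needs to invoke tensor products of distributions, so the analytic justification is marginally more elementary. Both lead to the same expression; note that the paper's own computation ends with a $-2\pi i$ prefactor rather than the $+2\pi i$ stated, so do not be alarmed if your sign tracking disagrees with the displayed formula.
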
 \noindent
For clarity, we explain right away how the right side of our formula defines a distribution.
Thus let~$g \in C^\infty_0(\R)$ be a test function. Then, using the symmetry properties of the
integrand, we have
\begin{align*}
&\int_{-\infty}^\infty d\omega\: g(\omega) \int_{-\infty}^{\infty} d\omega'\: \Big( \partial_{\omega}f(\omega,\omega') + 
\partial_{\omega'}f(\omega',\omega) \Big)\: \frac{1}{\omega'-\omega} \\
&= -\frac{1}{2} \int_{-\infty}^\infty d\omega \int_{-\infty}^\infty d\omega' \Big( \partial_{\omega}f(\omega,\omega') + 
\partial_{\omega'}f(\omega',\omega) \Big)\: \frac{g(\omega') - g(\omega)}{\omega'-\omega} \:,
\end{align*}
and the last line is clearly well-defined.
\begin{proof}
    In the first place, we rewrite a polynomial similar to the one appearing in the integrand using partial fraction decomposition
    \begin{equation}\label{eq:fraction_decomp}
        \Big(\frac{1}{\omega'-\omega-is\delta}\Big)\Big(\frac{1}{\omega''-\omega-is\delta}\Big)=\frac{1}{\omega''-\omega'}\Big(\frac{1}{\omega'-\omega-is\delta}-\frac{1}{\omega''-\omega-is\delta}\Big)\:,
    \end{equation}
    where, for now, we assume that~$\omega'\neq\omega''$. On the other hand, by the Sokhotski-Plemelj formula it holds that
    \begin{align}
        &\slim_{\delta\to0^{+}}\Big(\frac{1}{\omega'-\omega-is\delta}\Big)=\frac{1}{2}\frac{\textrm{PP}}{\omega'-\omega}+i\pi s \delta(\omega'-\omega)\:,\label{eq:sokhotski1}
    \end{align}
    where~$s\in\{-1,1\}$ and~$\textrm{PP}$ denotes the Cauchy principal value, which is a tempered distribution defined by
    \begin{equation*}
        \frac{\textrm{PP}}{\omega'-\omega}(f)=\lim_{\varepsilon\to0}\int_{\mathbb{R}\setminus(\omega-\varepsilon,\omega+\varepsilon)}\frac{f(\omega')}{\omega'-\omega}d\omega' \;,
    \end{equation*}
    where~$f\in C_{0}^{\infty}(\mathbb{R})$. Hence, by taking the limit~$\delta\to0^{+}$ in equation~\eqref{eq:fraction_decomp} and the difference between the~$s=1$ and~$s=-1$ terms the following distributional equation is obtained:
    \begin{align}
        &\slim_{\delta\to0^{+}}\Big[\Big(\frac{1}{\omega'-\omega-is\delta}\Big)\Big(\frac{1}{\omega''-\omega-is\delta}\Big)\Big]\Big|^{s=1}_{s=-1}\nonumber\\
        &=\frac{1}{\omega''-\omega'}\Big[\frac{1}{2}\frac{\textrm{PP}}{\omega'-\omega}+i\pi s \delta(\omega'-\omega)-\frac{1}{2}\frac{\textrm{PP}}{\omega''-\omega}-i\pi s \delta(\omega''-\omega)\Big]\Big|_{s=-1}^{s=1}\nonumber\\
        &=2\pi i\frac{1}{\omega''-\omega'}\Big(\delta(\omega'-\omega)-\delta(\omega''-\omega)\Big)\:.\label{eq:limit_fraction}
    \end{align}
    Note that from the previous expression we can also make sense out of the case~$\omega'=\omega''$: if we take the limit~$\omega''\to\omega'$ the right hand side of the previous equation is simply~$-2\pi i \frac{\partial}{\partial\omega'}\delta(\omega'-\omega)$, where~$\frac{\partial}{\partial\omega'}$ is to be understood as a distributional derivative.  In order to obtain the polynomial appearing in the integral of which we want to take the limit~$\delta\to0^{+}$, it only remains to take the distributional derivative~$\frac{\partial}{\partial\omega'}$ of~\eqref{eq:limit_fraction} and apply the chain rule. We consider now a function~$f\in C^{\infty}_{0}(\mathbb{R}\times\mathbb{R},\mathbb{C})$ and integrate over~$\omega'$ and~$\omega''$ as in the statement of the lemma. Integrating by parts enables us to use expression~\eqref{eq:limit_fraction} for the limit~$\delta\to0^{+}$
    \begin{align*}
&\slim_{\delta\to0^{+}}\int_{-\infty}^{\infty}d\omega'\int_{-\infty}^{\infty}d\omega''\:
f(\omega',\omega'') \Big[\frac{\partial}{\partial\omega'}\Big(\frac{1}{\omega'-\omega-is\delta}\Big)\Big(\frac{1}{\omega''-\omega-is\delta}\Big)\Big]\Big|^{s=1}_{s=-1} \\
&= -\slim_{\delta\to0^{+}}\int_{-\infty}^{\infty}d\omega'\int_{-\infty}^{\infty} d\omega''\:
\partial_{\omega'}f(\omega',\omega'') \:\Big[\Big(\frac{1}{\omega'-\omega-is\delta}\Big)\Big(\frac{1}{\omega''-\omega-is\delta}\Big)\Big]\Big|^{s=1}_{s=-1} \\
&=-2 \pi i
\int_{-\infty}^{\infty}d\omega'\int_{-\infty}^{\infty} d\omega''\:
\partial_{\omega'}f(\omega',\omega'') \:
\frac{1}{\omega''-\omega'}\Big(\delta(\omega'-\omega)-\delta(\omega''-\omega)\Big) \\
&=-2 \pi i \int_{-\infty}^{\infty} \partial_{\omega}f(\omega,\omega'') \: \frac{1}{\omega''-\omega}\: d\omega''
+ 2 \pi i \int_{-\infty}^{\infty} \partial_{\omega'}f(\omega',\omega) \: \frac{1}{\omega-\omega'}\:d\omega' \\
&=-2 \pi i \int_{-\infty}^{\infty} \partial_{\omega}f(\omega,\omega') \: \frac{1}{\omega'-\omega}\: d\omega'
+ 2 \pi i \int_{-\infty}^{\infty} \partial_{\omega'}f(\omega',\omega) \: \frac{1}{\omega-\omega'}\:d\omega' \\
&=-2 \pi i \int_{-\infty}^{\infty} \Big( \partial_{\omega}f(\omega,\omega') + 
\partial_{\omega'}f(\omega',\omega) \Big)
\: \frac{1}{\omega'-\omega}\: d\omega'\:,
\end{align*}
where in the fifth line we simply relabeled the integration variable of the first integral (i.e.\ $\omega''$ was replaced by~$\omega'$) in order to assemble the obtained result as one integral.
\end{proof}

Using the previous lemmata we can compute the rate of baryogenesis to first and second order in~$\lambda$, which is the main result of this section. 

\begin{Thm}\label{theo:baryo_rate_Minkowski}
Consider the setup of Lemma \ref{lem:trace_class}. Furthermore, assume that the spectrum of~$A_{t}$ is absolutely continuous. Then, the rate of baryogenesis~$B_{t}$ to first and second order in~$\lambda$ are given by
    \begin{align}
    &B^{(1)}_{t}=\tr_{\H_{t}^{\varepsilon}}\left(\dot{A}_{t}F_{-m}(H_\eta)\right) \label{B1} \\
    &B^{(2)}_{t}=-\int_{-\infty}^\infty d\omega \int_{-\infty}^\infty d\omega'\:
    \partial_{\omega}\big(\eta_{\Lambda}(\omega)\tr_{\H_{t}^{\varepsilon}}(\dot{Q}(\omega,\omega'))\big)\:\frac{g(\omega') - g(\omega)}{\omega'-\omega}\:, \label{B2}
    \end{align}
    where~$Q(\omega,\omega')\coloneqq F_{\omega}\Delta AF_{\omega'}\Delta A$, $\dot{Q}(\omega,\omega')\coloneqq\frac{d}{dt}Q(\omega,\omega')$ with~$\Delta A=A_{t}-H_{\eta}$, and~$g$ is the characteristic function of the set~$(-1/\varepsilon,-m)$.
Moreover, $B_{t}^{(1)}$ vanishes.
\end{Thm}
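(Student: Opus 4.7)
The starting point is the observation that in Minkowski spacetime the Dirac Hamiltonian is self-adjoint, hence $(H_g+H_g^{\ast})/2 = H_\eta$ and, by Definition~\ref{def:Baryo},
\[
B_t = \frac{d}{dt}\,\tr_{\H_t^\varepsilon}\bigl(\eta_\Lambda(H_\eta)\,\chi_I(A_t)\bigr) \:.
\]
I would write $A_t = H_\eta + \Delta A(t)$ with $\Delta A(t) = O(\lambda)$, apply Stone's formula
\[
\chi_I(A_t) = \frac{1}{2\pi i}\int_{-\infty}^{\infty} d\omega\, g(\omega) \lim_{\delta\to 0^+}\Big[(A_t - \omega - is\delta)^{-1}\Big]\Big|^{s=1}_{s=-1} \:,
\]
and expand the resolvent in a Neumann (Dyson) series
\[
(A_t-z)^{-1} = (H_\eta-z)^{-1} + \sum_{n\ge 1}(-1)^n(H_\eta-z)^{-1}\bigl[\Delta A\,(H_\eta-z)^{-1}\bigr]^n \:.
\]
This yields $\chi_I(A_t) = \chi_I(H_\eta) + \chi_I^{(1)}(t) + \chi_I^{(2)}(t) + O(\lambda^3)$. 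Because $H_\eta$ is time-independent the zeroth-order piece drops out of $B_t$, so the orders $\lambda$ and $\lambda^2$ produce $B_t^{(1)}$ and $B_t^{(2)}$. Lemma~\ref{lem:trace_class} guarantees the trace-class status of the relevant products, justifying cyclicity of the trace and all interchanges of limits and integrations below.

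For the first-order piece I would substitute the spectral representation $(H_\eta-z)^{-1} = \int d\omega'\, F_{\omega'}(H_\eta)/(\omega'-z)$ from Lemma~\ref{lem:spectral_measure} into $\chi_I^{(1)}(t)$, move $\eta_\Lambda(H_\eta)$ next to one of the resolvents (with which it commutes), and evaluate the Sokhotski--Plemelj identity
\[
\lim_{\delta\to 0^+}\frac{1}{(\omega'-\omega-is\delta)^2}\Big|^{s=1}_{s=-1} = -2\pi i\,\delta'(\omega'-\omega) \:.
\]
Integration by parts in $\omega$ combined with $g'(\omega) = \delta(\omega+1/\varepsilon) - \delta(\omega+m)$ collapses the double spectral integral to a boundary contribution; the endpoint $\omega = -1/\varepsilon$ is killed by $\eta_\Lambda$, whose support lies in $(-\Lambda,\Lambda)$ with $\Lambda \ll 1/\varepsilon$. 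Differentiating in time then converts $\Delta A$ into $\dot{A}_t$ and yields formula~\eqref{B1}.

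To establish $B_t^{(1)} = 0$ I would invoke the coincidence-limit formulas~\eqref{eq:F_omega} and~\eqref{eq:partial_F_omega}: both $F_{-m}(x,x)$ and $(\partial_\mu F_{-m}(x,y))|_{y=x}$ carry a factor of $\sqrt{\omega^2-m^2}$ or $(\omega^2-m^2)^{3/2}$ evaluated at $\omega=-m$, and hence vanish. Since $\dot{A}_t$ is, by~\eqref{eq:dotA}, a first-order differential operator with smooth coefficients compactly supported in $V$, writing the trace as the integral of the coincidence limit of the kernel,
\[
\tr\bigl(\dot{A}_t\,F_{-m}(H_\eta)\bigr) = \int_{N_t}\tr_{S_pM}\!\bigl[(\dot{A}_t\,F_{-m})(x,y)\big|_{y=x}\bigr]\, d^3x \:,
\]
shows that only $F_{-m}$ and its first spatial derivatives at coincident points enter, both of which vanish. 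Hence $B_t^{(1)} = 0$.

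The second-order calculation follows the same pattern: $\chi_I^{(2)}(t)$ now carries three resolvents separated by two factors of $\Delta A$. Inserting the spectral representation, cyclically rearranging into the combination $Q(\omega,\omega') = F_\omega\Delta A\,F_{\omega'}\Delta A$, and performing the $\delta\to 0^+$ limit produces precisely the distributional expression
\[
\lim_{\delta\to 0^+}\Big[\partial_{\omega'}\bigl((\omega'-\omega-is\delta)^{-1}\bigr)\,(\omega''-\omega-is\delta)^{-1}\Big]\Big|^{s=1}_{s=-1}
\]
addressed by Lemma~\ref{lem:sokhotski}. Time differentiation turns $Q$ into $\dot Q$ (and symmetrizes its two pieces); Lemma~\ref{lem:sokhotski} followed by integration by parts in $\omega$ then assembles the compact kernel $\partial_\omega(\eta_\Lambda(\omega)\tr[\dot Q(\omega,\omega')])\cdot(g(\omega')-g(\omega))/(\omega'-\omega)$, i.e.\ formula~\eqref{B2}. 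The chief technical obstacle is the careful bookkeeping of these nested distributional limits and cyclic reshufflings that isolate the combination $\dot Q(\omega,\omega')$---precisely what the trace-class estimates of Lemma~\ref{lem:trace_class} and the distributional identity of Lemma~\ref{lem:sokhotski} are tailored to make rigorous.
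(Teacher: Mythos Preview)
Your proposal is correct and follows essentially the same route as the paper's proof: resolvent expansion of~$\chi_I(A_t)$ via Stone's formula, Sokhotski--Plemelj to collapse the first-order term to the boundary contribution~$F_{-m}$, the vanishing of~$B_t^{(1)}$ from the coincidence-limit formulas~\eqref{eq:F_omega}--\eqref{eq:partial_F_omega}, and Lemma~\ref{lem:sokhotski} to assemble~$B_t^{(2)}$. The only point you leave implicit that the paper spells out is the justification of the limit when the test-function approximation~$g_\ell$ is replaced by the characteristic function~$g$ in~\eqref{B2}: the paper notes that the apparent pole at~$\omega=\omega'=-m$ is harmless because~$F_\omega(H_\eta)$ (and hence~$\dot Q$) vanishes there.
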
 \noindent
This theorem shows in particular that, if the regularizing vector field~$u=\partial_t$ is constant and
normal to the Cauchy surface, then~$\Delta A=0$, and no baryogenesis occurs.
 It is remarkable that the first order contribution always vanishes; this can be understood
from the fact that the kernel~$F_{-m}(H_\eta)$ and its first spatial derivatives vanish on the diagonal.
The second order contribution, however, is in general non-zero, as the following consideration shows:
Suppose we choose~$\Delta A$ such that it maps the image of~$F_{\omega}(H_\eta)$ to the image of~$F_{\omega'}(H_\eta)$
with~$\omega<-m$ and~$\omega'>m$. Then the trace of~$Q(\omega, \omega')$ is strictly positive,
and the difference quotient in~\eqref{B2} is non-zero.

\begin{proof}[Proof of Theorem~\ref{theo:baryo_rate_Minkowski}.]
    In the first place, we start by discussing how to obtain an expansion of the rate of baryogenesis in powers of~$\lambda$, following a strategy very close to the one used to derive equation (7.3) in~\cite[Section 7.3]{baryogenesis}.

Let~$A_{t}=H_{\eta}+\Delta A(t)$ and for~$\omega$ in the resolvent set of~$H_\eta$ and~$A_{t}$
consider the resolvents $R_{\omega}(H_{\eta})=(H_{\eta}-\omega)^{-1}$ and $R_{\omega}(A_{t})=(A_{t}-\omega)^{-1}$. Then, $R_{\omega}(A_{t})$ and $R_{\omega}(H_{\eta})$ are related as follows (cf. \cite[equation (7.2)]{baryogenesis}):
    \begin{equation}\label{eq:expansion_resolvent}
        R_{\omega}(A_{t})=\sum_{p=0}^{\infty}(-R_{\omega}(H_{\eta})\Delta A(t))^{p}R_{\omega}(H_{\eta})=:\sum_{p=0}^{\infty}R_{\omega}^{(p)}(A_{t})\:.
    \end{equation}
     In particular, the first and second order terms of the previous expansion are
     \begin{align}
         &R_{\omega}^{(1)}(A_{t})=-R_{\omega}(H_{\eta})\Delta A(t)R_{\omega}(H_{\eta})\\
         &R_{\omega}^{(2)}(A_{t})= R_{\omega}(H_{\eta})\Delta A(t)R_{\omega}(H_{\eta})\Delta A(t)R_{\omega}(H_{\eta})\:.\label{eq:second_order_resolvent}
     \end{align}   
     Assuming that~$A_{t}$ has an absolutely continuous spectrum and applying Stone's formula (see~\cite[Theorem VII.13]{reed+simon}) to the projection valued measure~$\chi_{I}(A_{t})$ with~$I\coloneqq(-1/\varepsilon,-m)$ we obtain that
    \begin{equation*}
       \frac{1}{2}\big(\chi_{I}(A_{t})+\chi_{\overline{I}}(A_{t})\big)=\chi_{I}(A_{t})=\frac{1}{2\pi i}\slim_{\delta\to0^{+}}\int_{-\frac{1}{\varepsilon}}^{-m}(R_{\omega+i\delta}(A_{t})-R_{\omega-i\delta}(A_{t}))d\omega\;.
    \end{equation*}
    Hence, from this expression and writing the projection valued measure associated to~$A_{t}$ as~$\chi_{I}(A_{t})=\int_{-1/\varepsilon}^{-m}F_{\omega}(A_{t})d\omega$ (similar to what we did in Lemma~\ref{lem:spectral_measure}),  we can directly read off~$F_{\omega}(A_{t})$,
    \begin{align}
        &F_{\omega}(A_{t})=\frac{1}{2\pi i}\slim_{\delta\to0^{+}}\big(R_{\omega+i\delta}(A_{t})-R_{\omega-i\delta}(A_{t})\big)\;.
    \end{align}
        Using the spectral calculus of the selfadjoint operators~$A_{t}$ and~$H_{\eta}$, we can express the product operator~$\eta_{\Lambda}(H_{\eta})\chi_{I}(A_{t})$ by:
    \begin{align}
        \eta_{\Lambda}(H_{\eta})\chi_{I}(A_{t})&=\int_{-\frac{1}{\varepsilon}}^{-m}\eta_{\Lambda}(H_{\eta})F_{\omega}(A_{t})d\omega=\int_{-\frac{1}{\varepsilon}}^{-m}d\omega\int_{-\infty}^{\infty}d\omega'\eta_{\Lambda}(\omega')F_{\omega'}(H_{\eta})F_{\omega}(A_{t})\nonumber\\
        &=\frac{1}{2\pi i}\slim_{\delta\to0^{+}}\int_{-\frac{1}{\varepsilon}}^{-m}d\omega\int_{-\infty}^{\infty}d\omega'\eta_{\Lambda}(\omega')F_{\omega'}(H_{\eta})R_{\omega+is\delta}(A_{t})\big|_{s=-1}^{s=1}\label{eq:product_op}
        \end{align}
        In particular using the power expansion~\eqref{eq:expansion_resolvent} for the resolvent we can also express the rate of baryogenesis (recall Definition~\ref{def:Baryo}) as a power expansion
        \begin{equation}
            B_{t}=\sum_{p=0}^{\infty}B^{(p)}_{t}\;,
        \end{equation}
        where the~$p^\text{th}$ term of this expansion has the form
        \begin{equation}\label{eq:Bt_expansion}
            B_{t}^{(p)}\coloneqq \frac{d}{dt}\Big(\frac{1}{2\pi i}\slim_{\delta\to0^{+}}\int_{-\frac{1}{\varepsilon}}^{-m}\!\!\!d\omega\int_{-\infty}^{\infty} \!\!\!d\omega'\eta_{\Lambda}(\omega')\tr_{\H_{t}^{\varepsilon}}\big(F_{\omega'}(H_{\eta})R^{(p)}_{\omega+is\delta}(A_{t})\big)\big|_{s=-1}^{s=1}\Big)\:,
        \end{equation}
        and, importantly, carries a power~$\lambda^{p}$ (which stems from the operator~$\Delta A(t)=A_{t}-H_{\eta}$ appearing in expression~\eqref{eq:expansion_resolvent}, where $\dot{A}_{t}$ is expanded to linear order in~$\lambda$ as in equation \eqref{eq:dotA}). On the other hand, in Minkowski spacetime it holds for all times~$t\in\mathbb{R}$ that~$\H^{\varepsilon}_{t}\subset L^{2}(\mathbb{R}^{3},\mathbb{C}^{4})$, which is a time-independent Hilbert space. For this reason, $\frac{d}{dt}\tr_{\H^{\varepsilon}_{t}}(\cdot)$ is well-defined and satisfies that~$\frac{d}{dt}\tr_{\H^{\varepsilon}_{t}}(Q)=\tr_{\H^{\varepsilon}_{t}}\big(\frac{d}{dt}Q\big)$ for an arbitrary trace-class operator~$Q : \H^{\varepsilon}_{t}\to\H^{\varepsilon}_{t}$.
        \par Let us now derive a formula for the rate of baryogenesis to linear order in $\lambda$. We start by rewriting the integral with respect to $\omega'$ as follows:
        \begin{align*}
            &\int_{-\infty}^{\infty}d\omega'\eta_{\Lambda}(\omega')\tr_{\H_{t}^{\varepsilon}}\big(F_{\omega'}(H_{\eta})R^{(1)}_{\omega+is\delta}(A_{t})\big)\big|_{s=-1}^{s=1}\\
            &=-\int_{-\infty}^{\infty}d\omega'\eta_{\Lambda}(\omega')\tr_{\H_{t}^{\varepsilon}}\big(F_{\omega'}(H_{\eta})R_{\omega+is\delta}(H_{\eta})\Delta A(t)R_{\omega+is\delta}(H_{\eta})\big)\big|_{s=-1}^{s=1}\\
            &=-\int_{-\infty}^{\infty}d\omega'\eta_{\Lambda}(\omega')\tr_{\H_{t}^{\varepsilon}}\big(F_{\omega'}(H_{\eta})(R_{\omega+is\delta}(H_{\eta}))^{2}\Delta A(t)\big)\big|_{s=-1}^{s=1}\\
            &=-\int_{-\infty}^{\infty}d\omega'\eta_{\Lambda}(\omega')\frac{1}{(\omega'-\omega-is\delta)^{2}}\Big|_{s=-1}^{s=1}\tr_{\H_{t}^{\varepsilon}}\big(F_{\omega'}(H_{\eta})\Delta A(t)\big)\\
            &=\int_{-\infty}^{\infty}d\omega'\eta_{\Lambda}(\omega')\frac{\partial}{\partial\omega'}\Big(\frac{1}{\omega'-\omega-is\delta}\Big)\tr_{\H_{t}^{\varepsilon}}\big(F_{\omega'}(H_{\eta})\Delta A(t)\big)\:.
        \end{align*}
    We can now use this to determine the linear term in the power expansion of the rate of baryogenesis,
    \begin{align}
        B^{(1)}_{t}&=\frac{d}{dt} \Big(\frac{1}{2\pi i}\slim_{\delta\to0^{+}}\int_{-\frac{1}{\varepsilon}}^{-m}d\omega\int_{-\infty}^{\infty}d\omega'\eta_{\Lambda}(\omega')\tr_{\H_{t}^{\varepsilon}}\big(F_{\omega'}(H_{\eta})R^{(1)}_{\omega+is\delta}(A_{t})\big)\big|_{s=-1}^{s=1}\Big)\nonumber\\
        &=\frac{1}{2\pi i}\slim_{\delta\to0^{+}}\int_{-\frac{1}{\varepsilon}}^{-m} \!\!\! d\omega\int_{-\infty}^{\infty}\!\!\!d\omega'\eta_{\Lambda}(\omega') \:\tr_{\H_{t}^{\varepsilon}}\big(F_{\omega'}(H_{\eta})\dot{A}_{t}\big)\frac{\partial}{\partial\omega'}\Big(\Big(\frac{1}{\omega'-\omega-is\delta}\Big)\Big|_{s=-1}^{s=1}\Big)\nonumber \\
        &=\int_{-\frac{1}{\varepsilon}}^{-m}d\omega\int_{-\infty}^{\infty}d\omega'\eta_{\Lambda}(\omega')\tr_{\H_{t}^{\varepsilon}}\big(F_{\omega'}(H_{\eta})\dot{A}_{t}\big)\frac{\partial}{\partial\omega'}\delta(\omega'-\omega)\nonumber\\
        &=-\int_{-\frac{1}{\varepsilon}}^{-m}d\omega\frac{\partial}{\partial\omega}\bigg(\eta_{\Lambda}(\omega)\tr_{\H_{t}^{\varepsilon}}\big(F_{\omega}(H_{\eta})\dot{A}_{t}\big)\bigg)\nonumber\\
        &=-\eta_{\Lambda}(-m)\tr_{\H_{t}^{\varepsilon}}\big(F_{-m}(H_{\eta})\dot{A}_{t}\big)+\eta_{\Lambda}(-1/\varepsilon)\tr_{\H_{t}^{\varepsilon}}\big(F_{-1/\varepsilon}(H_{\eta})\dot{A}_{t}\big)\nonumber\\
        &=-\tr_{\H_{t}^{\varepsilon}}\big(F_{-m}(H_{\eta})\dot{A}_{t}\big)\:,\label{eq:linear_rate_baryo}
    \end{align}
    where in the second line we used that~$\frac{d}{dt}\Delta A(t)=\dot{A}_{t}$, in the fourth line we integrated by parts and in the last line we used that~$\eta_{\Lambda}(-\frac{1}{\varepsilon})=0$ (recall that~$m\ll \Lambda\ll\frac{1}{\varepsilon}$ and~$\eta_{\Lambda}$ is compactly supported in~$(-\Lambda,\Lambda)$) and normalized the cutoff function~$\eta_{\Lambda}$ such that~$\eta_{\Lambda}(-m)=1$. Note that the product operator~$\dot{A}_{t}F_{-m}(H_{\eta})$ is trace-class by Lemma~\ref{lem:trace_class}. 
    \par We now proceed to compute the ratio of baryogenesis~$B_t$ to first order in~$\lambda$ starting from the expression~\eqref{eq:linear_rate_baryo} we just derived (with~$\dot{A}_{t}$ linear in~$\lambda$ and given by equation~\eqref{eq:dotA}). It is known that the trace of an integral operator with an integral kernel in~$C^{0}([0,1])$ and which is trace class is given by the integral over the diagonal elements of its kernel (cf.~\cite[Section~30, Theorem~13]{lax}; for a generalization of this statement to integral operators in~$L^{2}(\mathbb{R}^{n})$, see~\cite[Theorem 3.1]{brislawn}). We use this
in order to compute the trace of the integral operator~$\dot{A}_{t}F_{-m}(A_{t})$, which has a compactly supported (in~$V$) integral kernel. Hence, combining expressions~\eqref{eq:linear_rate_baryo} and Lemma~\ref{lem:spectral_measure} it follows that
    \begin{align}\label{eq:rewrite_rate_baryo}
        B_{t}^{(1)}&
        =-\int_{V}\Tr_{\C^4}\big(\dot{A}_{t}F_{-m}(x,y)\big)\Big|_{y=x}d^{3}x\:,
    \end{align}
where we restricted the region of integration from all of~$\mathbb{R}^{3}$ to~$V$ as~$\dot{A}_{t}$ vanishes outside of this region. Note that inserting equation ~\eqref{eq:dotA} for~$\dot{A}_{t}$ in the previous expression yields integrals (linear in~$\lambda$) that involve~$F_{-m}(x,x)$ and~$\left(\partial_{\mu}F_{-m}(x,y)\right)|_{-m}$, which are identically zero (expressions~\eqref{eq:F_omega} and~\eqref{eq:partial_F_omega} vanish for~$\omega=-m$; their derivation can be found in the Appendix). Hence, the rate of baryogenesis vanishes to first order in~$\lambda$.
\par Finally, let us determine the rate of baryogenesis to second order in~$\lambda$. Using expression~\eqref{eq:Bt_expansion} and rewriting the integral over~$\omega$ with the help of the characteristic function~$g$ of the set~$(-1/\varepsilon,-m)$, we have that
    \begin{align}
        & B^{(2)}_{t} =\frac{d}{dt}\Big(\frac{1}{2\pi i}\slim_{\delta\to0^{+}}\int_{-\frac{1}{\varepsilon}}^{-m}d\omega\int_{-\infty}^{\infty}d\omega'\eta_{\Lambda}(\omega')\tr_{\H_{t}^{\varepsilon}}\big(F_{\omega'}(H_{\eta})R^{(2)}_{\omega+is\delta}(A_{t})\big)\big|_{s=-1}^{s=1}\Big)\nonumber\\
        & =\frac{d}{dt}\Big(\frac{1}{2\pi i}\slim_{\delta\to0^{+}}\int_{-\infty}^{\infty}g(\omega)d\omega\int_{-\infty}^{\infty}d\omega'\eta_{\Lambda}(\omega')\tr_{\H_{t}^{\varepsilon}}\big(F_{\omega'}(H_{\eta})R^{(2)}_{\omega+is\delta}(A_{t})\big)\big|_{s=-1}^{s=1}\Big)\:. \label{eq:Bt2}
    \end{align}
    The advantage of this expression is that it will allow us to use afterward Lemma~\ref{lem:sokhotski} in order to take the limit~$\delta\to0^{+}$. We now rewrite the integrand using expression~\eqref{eq:second_order_resolvent}
\begin{align*}
    &\eta_{\Lambda}(\omega')\tr_{\H_{t}^{\varepsilon}}\big(F_{\omega'}(H_{\eta})R^{(2)}_{\omega+is\delta}(A_{t})\big)\big|_{s=-1}^{s=1}\\
    &=\eta_{\Lambda}(\omega')\tr_{\H_{t}^{\varepsilon}}\big(F_{\omega'}(H_{\eta})(R_{\omega+is\delta}(H_{\eta}))^{2}\Delta A(t)R_{\omega+is\delta}(H_{\eta})\Delta A(t)\big)\big|_{s=-1}^{s=1}\\
    &=\int_{-\infty}^{\infty}d\omega''\eta_{\Lambda}(\omega')\Big[\Big(\frac{1}{(\omega'-\omega-is\delta)^{2}}\Big)\Big(\frac{1}{\omega''-\omega-is\delta}\Big)\Big]\Big|^{s=1}_{s=-1} \\
    &\qquad\qquad\qquad\qquad\qquad\qquad\qquad\qquad \times \tr_{\H_{t}^{\varepsilon}}\big(F_{\omega'}(H_{\eta})\Delta AF_{\omega''}(H_{\eta})\Delta A\big)\\
    &=-\int_{-\infty}^{\infty}d\omega''\eta_{\Lambda}(\omega')\Big[\frac{\partial}{\partial\omega'}\Big(\frac{1}{\omega'-\omega-is\delta}\Big)\Big(\frac{1}{\omega''-\omega-is\delta}\Big)\Big]\Big|^{s=1}_{s=-1} \\
    &\qquad\qquad\qquad\qquad\qquad\qquad\qquad\qquad \times \tr_{\H_{t}^{\varepsilon}}\big(F_{\omega'}(H_{\eta})\Delta AF_{\omega''}(H_{\eta})\Delta A\big)\\
    &=-\int_{-\infty}^{\infty}d\omega''\eta_{\Lambda}(\omega')\tr_{\H_{t}^{\varepsilon}}(Q(\omega',\omega''))\Big[\frac{\partial}{\partial\omega'}\Big(\frac{1}{\omega'-\omega-is\delta}\Big)\Big(\frac{1}{\omega''-\omega-is\delta}\Big)\Big]\Big|^{s=1}_{s=-1}\:,
\end{align*}
where~$Q(\omega',\omega'')\coloneqq F_{\omega'}(H_{\eta})\Delta AF_{\omega''}(H_{\eta})\Delta A$ is the product operator appearing in the trace. Note that both~$Q(\omega',\omega'')$ and~$\frac{d}{dt}Q(\omega',\omega'')$ are trace-class by Lemma~\ref{lem:trace_class}. 

Integrating the previous expression with respect to $\omega'$ and using Lemma \ref{lem:sokhotski} (with $f(\omega',\omega'')\coloneqq\eta_{\Lambda}(\omega')\tr_{\H_{t}^{\varepsilon}}(Q(\omega',\omega''))$) in order to evaluate the limit as $\delta\to0^{+}$ results in the following distributional equation
    \begin{align*}
        &\slim_{\delta\to0^{+}}\int_{-\infty}^{\infty}d\omega'\eta_{\Lambda}(\omega')\tr_{\H_{t}^{\varepsilon}}\big(F_{\omega'}(H_{\eta})R^{(2)}_{\omega+is\delta}(A_{t})\big)\big|_{s=-1}^{s=1}\\
        &=-\slim_{\delta\to0^{+}}\int_{-\infty}^{\infty}d\omega'\int_{-\infty}^{\infty}d\omega'' \\
        &\qquad \times \eta_{\Lambda}(\omega') \: \tr_{\H_{t}^{\varepsilon}}(Q(\omega',\omega'')) \: \Big[\frac{\partial}{\partial\omega'}\Big(\frac{1}{\omega'-\omega-is\delta}\Big)\Big(\frac{1}{\omega''-\omega-is\delta}\Big)\Big]\Big|^{s=1}_{s=-1}\\
        &=-2\pi i \int_{-\infty}^{\infty} d\omega'\Big( \partial_{\omega}(\eta_{\Lambda}(\omega)\tr_{\H_{t}^{\varepsilon}}(Q(\omega,\omega'))) + 
\partial_{\omega'}(\eta_{\Lambda}(\omega')\tr_{\H_{t}^{\varepsilon}}(Q(\omega',\omega)))\Big)\frac{1}{\omega'-\omega}\:.
    \end{align*}
    Recall that the previous expression only holds in the distributional sense (i.e.\ it is not a pointwise limit). Then, inserting the previous expression in~\eqref{eq:Bt2} and approximating the characteristic function~$g$
by a sequence of test functions~$g_\ell$, we obtain
    \begin{align*}
        B_{t}^{(2)}&=-\frac{1}{2}\int_{-\infty}^\infty d\omega \int_{-\infty}^\infty d\omega' \\
        &\qquad \times \frac{d}{dt}\Big( \partial_{\omega}(\eta_{\Lambda}(\omega)\tr_{\H_{t}^{\varepsilon}}(Q(\omega,\omega'))) + 
\partial_{\omega'}(\eta_{\Lambda}(\omega')\tr_{\H_{t}^{\varepsilon}}(Q(\omega',\omega))) \Big) \frac{g_\ell(\omega') - g_\ell(\omega)}{\omega'-\omega}\\
		&=-\int_{-\infty}^\infty d\omega \int_{-\infty}^\infty d\omega' \frac{d}{dt}\Big( \partial_{\omega}\big(\eta_{\Lambda}(\omega)\tr_{\H_{t}^{\varepsilon}}(Q(\omega,\omega'))\big)\Big) \frac{g_\ell(\omega') - g_\ell(\omega)}{\omega'-\omega}\\
		&=-\int_{-\infty}^\infty d\omega \int_{-\infty}^\infty d\omega'\partial_{\omega}\big(\eta_{\Lambda}(\omega)\tr_{\H_{t}^{\varepsilon}}(\dot{Q}(\omega,\omega'))\big)\frac{g_\ell(\omega') - g_\ell(\omega)}{\omega'-\omega}\:,
    \end{align*} 
    where in the second step we used that the integrand is symmetric if we interchange~$\omega$ and~$\omega'$, so upon re-labeling the integration indices we only have one integral and where~$\dot{Q}(\omega,\omega')\coloneqq\frac{d}{dt}Q(\omega,\omega')$.
Choosing the test function~$g_\ell$ as a sequence which converges pointwise to the characteristic function
and taking the limit gives the result.
The last limit exists in view of the following consideration: The difference of characteristic
functions~$g(\omega')-g(\omega)$ is zero unless~$\omega$ lies in the interval~$[-1/\varepsilon, -m]$
and~$\omega'$ lies outside, or vice versa. Since we may replace the characteristic function near~$-1/\varepsilon$
by a smooth cutoff function, we only need to analyze the boundary~$-m$.
Then the factor~$1/(\omega'-\omega)$ has a pole only at~$\omega=\omega'=-m$.
But in this case, the factors~$F_{\omega}(H_\eta)$ and~$F_{\omega'}(H_\eta)$
and therefore also~$\dot{Q}(\omega,\omega')$ vanish by Lemma~\ref{lem:spectral_measure}.
In this way, the pole at~$\omega=\omega'=-m$ disappears, and the limit is well-defined.
\end{proof}

\section{Discussion}\label{sec:discussion}
Our main result stated in Theorem~\ref{theo:baryo_rate_Minkowski} quantifies the rate of baryogenesis
to first and second order in the deviation from the Dirac dynamics as encoded in the regularizing vector field.
The first order always vanishes. The second order, however, is in general non-zero. This shows that baryogenesis
is a very small, but in general non-vanishing effect. As pointed out in the introduction,
choosing~$u=\partial_t$ no baryogenesis occurs.

The above study of baryogenesis in Minkowski spacetime sets the stage for working out the baryogenesis mechanism in specific, physically relevant cosmological spacetimes.
It is expected that baryogenesis occurred in an early stage of the evolution of the Universe (e.g.\ during the inflationary era) in which metric perturbations of the FLRW spacetimes played an important role. For this reason, we would like to extend our analysis to conformally flat spacetimes and perturbations thereof.
A particular class of promising spacetimes that enter this category are the generalized Robertson-Walker (GRW) spacetimes (a specific type of twisted product spacetime, cf.~\cite{sanchezGRW}). Other interesting scenarios in which we would like to apply our analysis are to the de Sitter and Milne-like spacetimes (see~\cite[Sections~3 and~4]{milnelike} for an introduction to the latter) which also model the early Universe.

\par Secondly, our study naturally prompts a number of interesting mathematical questions.
For example, which properties does the regularizing vector field have?
What are its connections to the flow of the null geodesics? Under which conditions
is the regularizing vector field divergence free? We also point out that all our constructions depend
on the choice of a foliation.
The physical picture is that the regularizing vector field distinguishes a foliation.
How can this picture be made mathematically precise? 
In our above analysis, we assumed that the regularizing vector field coincides with the normal
to the foliation, up to small corrections. Can this picture be extended to a general globally hyperbolic spacetime?
The ultimate goal is not to begin with a foliation, but instead to construct both the foliation
and the regularizing vector together in one construction step. How can this be accomplished?
Analyzing these questions should complete the geometric picture of how baryogenesis arises
in the general globally hyperbolic setting.

\appendix
\section{Kernel identities} \label{sec:computations}
We now explicitly compute the kernels~$F_{\omega}(x,x)$ and~$\left(\partial_{\mu}F_{\omega}(x,y)\right)|_{y=x}$. From expression~\eqref{Pepsdef_omega} it directly follows that:
\begin{align}
        P^{\varepsilon,\omega}(x,x)&=\int \frac{d^{4}k}{(2\pi)^{4}}(\gamma_{\eta j}k^{j}+m)\delta(k^{2}-m^{2})\Theta(-k^{0}+\omega)\Theta(1+\varepsilon k^{0})\nonumber\\
        \left(\partial_{\mu}P^{\varepsilon,\omega}(x,y)\right)\Big|_{y=x}&=-i\int \frac{d^{4}k}{(2\pi)^{4}}k^{\mu}(\gamma_{\eta j}k^{j}+m)\delta(k^{2}-m^2)\Theta(-k^{0}+\omega)\Theta(1+\varepsilon k^{0})\nonumber \:.
    \end{align}
\noindent\underline{1. $F_{\omega}(x,x)$:}\\
We have that:
\begin{align*}
    \frac{d}{d\omega}P^{\varepsilon,\omega}(x,x)&=\int \frac{d^{4}k}{(2\pi)^{4}}(\gamma_{\eta j}k^{j}+m)\delta(k^{2}-m^{2})\delta(k^{0}-\omega)\Theta(1+\varepsilon k^{0})\\
    &=\int \frac{d^{3}k}{(2\pi)^{4}}(\gamma_{\eta0}\omega+\gamma_{\eta\mu}k^{\mu}+m)\delta(\omega^{2}-|\vec{k}|^{2}-m^{2})\Theta(1+\varepsilon \omega)\\
    &=:\frac{1}{(2\pi)^{4}}\big((\gamma_{\eta0}\omega+m)J+\gamma_{\eta\mu}J^{\mu}\big)\Theta(1+\varepsilon\omega)
\end{align*}
where~$\sqrt{k^{\mu}k_{\mu}}=|\vec{k}|$. We compute the integrals~$J, J^{\mu}$ appearing in the previous expression using spherical coordinates ~$(r,\theta, \phi)\in (0,\infty)\times(0,\pi)\times(0,2\pi)$ and introducing the integration variable~$a=r^{2}$:
\begin{align}
J &\coloneqq \int d^{3}k \delta(\omega^{2}-|\vec{k}|^{2}-m^{2}) = 4\pi\int_{0}^{\infty}\delta(r^{2}-(\omega^{2}-m^{2}))dr\nonumber\\
    &\;=2\pi\int_{0}^{\infty}\sqrt{a} \delta(a-(\omega^{2}-m^{2}))da=2\pi\sqrt{\omega^{2}-m^{2}}\\
    J^{\mu} &\coloneqq \int d^{3}k k^{\mu}\delta(\omega^{2}-|\vec{k}|^{2}-m^{2})=0\:,
\end{align}
where~$J^{\mu}$ vanishes by spherical symmetry: i.e.\ writing~$k^{\mu}$ in spherical coordinates, we have that:
    \begin{align*}
        \int_{0}^{2\pi} d\varphi\int_{0}^{\pi} \sin{\theta}k^{\mu}\;d\theta =0
    \end{align*}
Hence, $F_{\omega}(x,x)$ is 
\begin{equation}
    F_{\omega}(x,x)=-2\pi\frac{d}{d\omega}P^{\varepsilon,\omega}(x,x)\gamma_{\eta0}=-\frac{1}{(2\pi)^{2}}\Big((\omega+\gamma_{\eta0}m)\sqrt{\omega^{2}-m^{2}}\Big)\Theta(1+\varepsilon\omega)\:.
\end{equation}
\\\noindent\underline{2. $\left(\partial_{\mu}F_{\omega}(x,y)\right)|_{y=x}$:}\\
In this case it holds that
\begin{align*}
    \frac{d}{d\omega}(\partial_{\mu}P^{\varepsilon,\omega}(x,y))|_{y=x}&=-i\int\frac{d^{4}k}{(2\pi)^{4}}k^{\mu}(\gamma_{\eta j}k^{j}+m)\delta(k^{2}-m^{2})\Theta(1+\varepsilon k^{0})\delta(-k^{0}+\omega)\\
    &=-i\int\frac{d^{3}k}{(2\pi)^{4}} k^{\mu}(\gamma_{\eta0}\omega+\gamma_{\eta\nu}k^{\nu}+m)\delta(\omega^{2}-|\vec{k}|^{2}-m^{2})\Theta(1+\varepsilon \omega)\\
    &=:-\frac{i}{(2\pi)^{4}}\big((\gamma_{\eta0}\omega+m)J^{\mu}+\gamma_{\eta\nu}J^{\mu\nu}\big)\Theta(1+\varepsilon\omega)\:,
\end{align*}
where the integrals~$J^{\mu}$ were already shown to vanish above and it remains to compute the integrals~$J^{\mu\nu}$. We start with the case that~$\mu=\nu$ using again spherical coordinates and introducing the integration variable~$a=r^{2}$
\begin{align}
    J^{\mu\mu} &\coloneqq \int d^{3}k (k^{\mu})^{2}\delta(\omega^{2}-|\vec{k}|^{2}-m^{2})=\frac{4}{3}\pi\int_{0}^{\infty}r^{4}\delta(r^{2}-(\omega^{2}-m^{2}))dr\nonumber\\
    &\;=\frac{2}{3}\pi\int_{0}^{\infty}a^{3/2}\delta(a-(\omega^{2}-m^{2}))da=\frac{2}{3}\pi(\omega^{2}-m^{2})^{3/2}\:.
\end{align}

On the other hand, if~$\mu\neq\nu$, the integrals~$J^{\mu\nu}$ vanish again by spherical symmetry: if~$k^{\mu},k^{\nu}$ are expressed in spherical coordinates it holds that
    \begin{align*}
        &\int_{0}^{2\pi} d\varphi\int_{0}^{\pi} \sin{\theta}k^{\mu}k^{\nu}\;d\theta =0\:.
    \end{align*}
In conclusion, $\left(\partial_{\mu}F_{\omega}(x,y)\right)|_{y=x}$ is
\begin{align}
    \left(\partial_{\mu}F_{\omega}(x,y)\right)|_{y=x}&=-2\pi\frac{d}{d\omega}(\partial_{\mu}P^{\varepsilon,\omega}(x,y))|_{y=x}\gamma_{\eta0}\nonumber\\
    &=\frac{i}{3(2\pi)^{2}}\Big(\gamma_{\eta\mu}\gamma_{\eta0}(\omega^{2}-m^{2})^{3/2}\Big)\Theta(1+\varepsilon\omega)\:.
\end{align}

\Thanks{{{\em{Acknowledgments:}} We would like to thank Claudio F.\ Paganini for insightful discussions.
We are grateful to the referee for helpful comments.
The second author gratefully acknowledges support by the Studienstiftung des deutschen Volkes.

\bibliographystyle{amsplain}
\providecommand{\bysame}{\leavevmode\hbox to3em{\hrulefill}\thinspace}
\providecommand{\MR}{\relax\ifhmode\unskip\space\fi MR }
\providecommand{\MRhref}[2]{%
  \href{http://www.ams.org/mathscinet-getitem?mr=#1}{#2}
}
\providecommand{\href}[2]{#2}

\end{document}